\documentclass[reprint,
superscriptaddress,
 amsmath,amssymb,
 aps, longbibliography
]{revtex4-1}

\usepackage{graphicx}
\usepackage{dcolumn}
\usepackage{bm}
\usepackage{qcircuit}
\usepackage{braket}
\usepackage{soul}
\usepackage{tikz}
\usepackage{amsthm}
\usepackage{xcolor}
\usepackage{hyperref}
\usepackage{multirow}
\usepackage{array}
\usepackage{stmaryrd}
\newtheorem{lemma}{Lemma}
\newtheorem{theorem}{Theorem}
\newtheorem{theorem2}{Theorem}

\newcommand{\rev}[1]{{#1}}

\begin{document}

\title{Low-overhead fault-tolerant quantum computing using long-range connectivity}
\author{Lawrence Z. Cohen}
\affiliation{Centre for Engineered Quantum Systems, School of Physics, University of Sydney, Sydney, New South Wales 2006, Australia}

\author{Isaac H. Kim}
\affiliation{Department of Computer Science, UC Davis, Davis, CA 95616, USA}
\affiliation{Centre for Engineered Quantum Systems, School of Physics, University of Sydney, Sydney, New South Wales 2006, Australia}

\author{Stephen D. Bartlett}
\affiliation{Centre for Engineered Quantum Systems, School of Physics, University of Sydney, Sydney, New South Wales 2006, Australia}

\author{Benjamin J. Brown}
\email{b.brown@sydney.edu.au}
\affiliation{Centre for Engineered Quantum Systems, School of Physics, University of Sydney, Sydney, New South Wales 2006, Australia}

\begin{abstract}
    Vast numbers of qubits will be needed for large-scale quantum computing due to the overheads associated with error correction. We present a scheme for low-overhead fault-tolerant quantum computation based on quantum low-density parity-check (LDPC) codes, where long-range interactions enable many logical qubits to be encoded with a modest number of physical qubits.  In our approach, logic gates operate via logical Pauli measurements that preserve both the protection of the LDPC codes as well as the low overheads in terms of the required number of additional qubits. Compared with surface codes with the same code distance, we estimate order-of-magnitude improvements in the overheads for processing around one hundred logical qubits using this approach. \rev{Given the high thresholds demonstrated by LDPC codes, our estimates suggest that fault-tolerant quantum computation at this scale may be achievable with a few thousand physical qubits at comparable error rates to what is needed for current approaches.} \\
    
\end{abstract}

\maketitle

\section{Introduction}

Quantum computing devices are now capable of outperforming even the fastest conventional supercomputers at certain tasks~\cite{arute2019quantum}.  However, to execute many quantum algorithms of practical interest, it is widely believed that a fault-tolerant architecture will be required to identify and correct errors in noisy quantum hardware.  Fault-tolerant architectures come with a significant overhead cost, using a large number of low-noise physical qubits to encode and process quantum information with even a small number of protected qubits.
Specifically, it has been estimated that millions of qubits will be needed to solve relevant problems in quantum chemistry~\cite{vonBurg2021,Lee2020,Kim21}
, to break cryptosystems~\cite{Fowler12,Gidney21} or to get an advantage over classical algorithms using polynomial speedups~\cite{Campbell19,Sanders2020}
. These large overheads provide a daunting challenge for scaling up from today's noisy devices to large-scale fault-tolerant quantum computers.

The enormous resource estimates mentioned above are all obtained using fault-tolerant architectures based on quantum error-correcting codes with \emph{local} check operators~\cite{Dennis02,Bombin06,Bacon06,BonillaAtaides21}
. These codes have a number of highly desirable features for quantum computation, including high thresholds and fast decoders~\cite{Raussendorf07,Fowler12}.  The locality of these codes means that quantum error correction can proceed using only entangling gates between neighbouring qubits arranged in a two-dimensional layout, i.e., on a chip. Thus, while local codes provide a clear pathway to demonstrate the principles of fault tolerance using existing quantum technology
, these overheads mean that useful fault-tolerant quantum computing with this approach will likely remain out of reach in the near term.

Locality of gate operations is a physically well-motivated constraint.  Recently, however, there has been significant progress in developing long-ranged entangling gate operations in a variety of quantum processing systems, including those based on superconductors~\cite{Magnard2020}, semiconductors~\cite{Mills2019,Borjans2020,yoneda2021coherent} and trapped ions~\cite{debnath2016demonstration,pino2021demonstration}. Optical photons provide an approach that is not naturally constrained to a local two-dimensional layout~\cite{bombin2021interleaving,bartolucci2021fusion}, and can also allow for other qubit systems to be connected into complex quantum networks~\cite{Barrett05,Nickerson13,Kalb17,Stephenson20}. \rev{Recent work has also considered emulating long-range interactions using a local quantum architecture and classical communication~\cite{delfosse2021bounds}, and architectures have been proposed where long-range interactions are constrained on interconnected planar arrays of matter-based qubits~\cite{tremblay2021constantoverhead}.} The possibility of long-range connectivity opens the door to a new class of quantum codes and fault-tolerant architectures that can harness this connectivity to our advantage.

Here, we show how to perform fault-tolerant quantum computation with an architecture that exploits long-range connectivity to significantly reduce the overhead, compared with local approaches.  Rather than focusing on asymptotic behaviour, we consider the overhead savings that may be possible in the scale of devices expected in the near term, where for example fault-tolerant quantum computing on 50 logical qubits may be possible with only a few thousand physical qubits while maintaining a code distance of $d=14\sim 16$. For comparison, a surface-code based architecture requires at least ten-thousand qubits to attain a similar number of logical qubits and code distances.
Provided that long-range coupling becomes sufficiently reliable to go below the fault-tolerance threshold of our scheme, we anticipate that such an architecture will be capable of performing non-trivial quantum algorithms at a scale compatible with current roadmaps for quantum devices under development during the next few years.

\begin{table*}
    \centering
    \begin{tabular}{||c|c|c|m{7em}<{\centering}|m{4em}<{\centering}|m{4em}<{\centering}|m{4em}<{\centering}||}
        \hline
        $k$ & $d$ & Parallelism & Code family & $n_{\rm data}$ & $n_{\rm anc}$ & $n_{\rm tot}$\\
        \hline
        \hline
        \multirow{2}{2em}{\centering $18$} & \multirow{2}{2em}{\centering $8$} & \multirow{2}{*}{$2$} & Hyperbicycle & 294 & 500 & 800 \\
        & & & Surface & 1152 & 128 & 1300 \\
        \hline
        \multirow{4}{*}{$50$} & \multirow{2}{*}{\hfil $14$} & \multirow{2}{*}{$2$} & Hyperbicycle  & 900 & 1400 & 2300 \\
        & & & Surface  & 9800 & 300 & 10000 \\
        \cline{2-7}
        & \multirow{2}{*}{\hfil $16$} & \multirow{2}{*}{$20$} & Hypergraph & 1922 & 5000 & 7000 \\
        & & & Surface & 12800 & 2000 & 15000 \\
        \hline 
        \multirow{4}{*}{\hfil $578$} & \multirow{4}{*}{\hfil $16$} & \multirow{2}{*}{$578$} & Hypergraph & 7938 & 120000 & 130000 \\
        & & & Surface & 150000 & 75000 & 225000 \\
        \cline{3-7}
        & & \multirow{2}{*}{$68$} & Hypergraph & 7938 & 15000 & 23000 \\
        & & & Surface & 150000 & 10000 & 160000 \\
        \hline
    \end{tabular}
    \caption{\textbf{Overhead estimates.}  Estimates of the overhead required to perform a round of logic, including those qubits needed to encode the data as well as additional ancilla qubits required to perform fault-tolerant gates. We use LDPC codes constructed in~\cite{Panteleev19, Kovalev2013}, which all have initial check weights of no more than 10. We denote the number of logical qubits as $k$ and the distance of the code as $d$. Comparisons are made against the surface code with the same distance.  Here, `parallelism' denotes the number of logical qubits that can be acted upon non-trivially in one round of error correction, and which determines the number of required ancilla qubits.  The number of data, ancillary, and total physical qubits needed to perform one round of logical measurements with error correction are denoted $n_{\rm data}$, $n_{\rm anc}$, and $n_{\rm tot}$, respectively. We do not include any ancilla qubits that may be used for error syndrome extraction. \rev{Estimates for the surface code were obtained using the compact block scheme from Ref.~\cite{Litinski19}.} }
    \label{tab:results}
\end{table*}

Our approach uses quantum low-density parity-check (LDPC) codes, which efficiently encode a large amount of logical information for a given number of physical qubits. There has been a recent surge of interest in this subject, (see Ref.~\cite{Breuckmann21} for a recent review) spurred by Gottesman's remarkable observation~\cite{Gottesman14} that quantum LDPC codes meeting certain criteria can be used to achieve fault-tolerant quantum computing with constant overhead.  While research into quantum LDPC codes is still in its infancy, they are showing promise.  Codes that fulfill Gottesman's criteria are now known~\cite{Fawzi20}. Moreover, recent numerical studies indicate that LDPC codes can achieve reasonably high thresholds~\cite{Panteleev19,Roffe20,Grospellier2021}. Recent breakthroughs in achieving high code distances indicate that there is room for further development~\cite{Hastings20,Panteleev20,Breuckmann2021a}.

To use these LDPC codes for quantum \emph{computation}, one must be able to fault-tolerantly implement a universal set of protected logic gates. While Ref.~\cite{Gottesman14} establishes a method to perform quantum computation using fault-tolerant gate teleportation~\cite{Knill2003}, the cost associated with the distillation of the requisite resource state~\cite{Zheng_2020} is not understood well in the practical regime of interest.

In this paper, we introduce a flexible method to perform low-overhead quantum logic gates for a general class of quantum LDPC codes.  Our work can be thought as a generalization of lattice surgery~\cite{Horsman12}, where an ancillary system is coupled to a quantum error-correcting code to measure logical Pauli operators in a fault-tolerant way. Our approach to low-overhead quantum logic builds on an extensive literature into the use of code deformations to perform Clifford gates via measurement that have been well studied for topological codes~\cite{Raussendorf07,Fowler12,Horsman12,Brown17,Litinski19,Vuillot19}, and which have recently been generalized to certain classes of quantum LDPC codes~\cite{Breuckmann17,Lavasani18,Krishna20,Krishna21}. To employ this approach for quantum LDPC codes in a way that maintains the desirable low overheads, we construct the required ancillary system by adapting weight-reduction methods proposed in Refs.~\cite{Hastings16,Hastings21} to measure the desired logical operators of a given quantum LDPC code. These logical operations then yield a universal gate set for fault-tolerant quantum computing when supplemented with noisy ancilla state injection via magic-state distillation~\cite{Bravyi05}. \rev{Thus, our scheme provides an explicit way of performing low-overhead fault-tolerant quantum computing using quantum LDPC codes that is applicable to codes of even modest size.}


Before presenting our detailed results, we briefly illustrate the potential overhead improvements that our construction enables.  We will make use of existing quantum LDPC codes with explicit constructions and efficient decoders, together with our fault-tolerant approach to performing logic gates on these codes. Table 1 shows the overhead required to complete a round of logical operations with error correction for a given number of logical qubits $k$ and code distance $d$ for a number of quantum LDPC codes, specifically, hyperbicycle and hypergraph product codes explicitly constructed in Refs.~\cite{Panteleev19,Kovalev2013}. \rev{By one round of logical operations we mean a set of logical Pauli measurements such that each logical qubit is acted on non-trivially by at most one measurement.}
We directly compare the qubit resources for our construction against the use of surface codes encoding the same number of logical qubits and with the same code distance, with the latter serving as a proxy for how well the codes protect logical quantum information. The surface code is currently the predominant candidate for a quantum architecture, and considerable effort has been spent optimizing its overhead for fault-tolerant computation. Our analysis thus shows the potential overhead improvement that can be achieved using a non-local architecture as compared to a local architecture.

All codes we have used, and their fault tolerant operations, use check operators involving no more than 13 \rev{qubits for the largest code in Table 1. (However, it is not the case that all stabilizer generators will have this weight, and in fact most generators will have lower weight.)} This weight is larger than that of the surface code, but not by a significant margin. Because this number is fixed at a small constant value, errors do not spread significantly during measurement of these check operators, which would otherwise affect the threshold of the scheme. \rev{However, we emphasize that the stabilizer weights and the circuits used to measure the stabilizer generators will still affect the performance of this scheme. In particular, higher weight stabilizers will increase the failure rate under circuit-level noise and this should be taken into account when considering the estimates in Table 1}. Efficient decoders for \rev{the codes in Table 1} have also been designed that perform comparably to minimum-weight perfect matching decoding on surface codes of similar distance~\cite{Panteleev19}.

We implement Clifford gates through parity measurement of logical qubits in the Pauli basis. In order to keep the overhead low, we restrict the number of logical qubits that can participate in a single measurement round, and we call this number the \emph{parallelism} of the scheme; see Fig. 1. For a given level of parallelism and given error correcting code we require $n_{\rm anc}$ physical qubits to create the ancilla systems used in logical measurement, and $n_{\rm data}$ is the size of the code used to store the logical information. Our analysis shows that at very small code sizes, quantum LDPC codes give only a modest overhead improvement when compared to surface code architectures. However as the size of the system increases we see that the improvement in overhead for the quantum LDPC codes becomes very significant, offering an order of magnitude improvement at several hundred logical qubits. Current algorithms for minimal non-trivial quantum chemistry calculations \cite{Babbush2018} require around $100{-}200$ logical qubits and so even at this regime it may be beneficial to use quantum LDPC codes. We expect that for larger algorithms, such as more complex quantum chemistry algorithms~\cite{vonBurg2021,Lee2020,Kim21} or Shor's algorithm~\cite{Gidney21,Gouzien21}, where several thousand logical qubits are required the overhead gains will be substantial.

\begin{figure}
    \centering
    \includegraphics[width=0.45\textwidth]{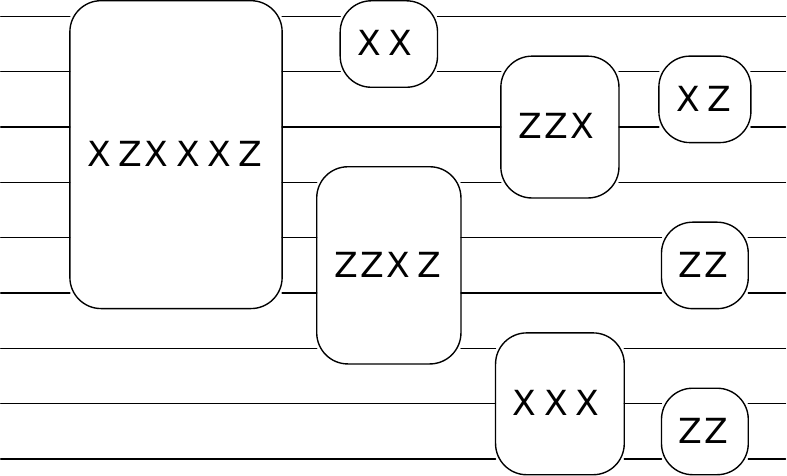}
    \caption{\textbf{Parallelism.}  An example circuit consisting of Pauli measurements on an architecture with a parallelism of $6$. This circuit contains $4$ rounds of error corrected logical measurements.  In each round, at most $6$ logical qubits in total can be involved in logical measurements.}
    \label{fig:paralellism}
\end{figure}

These encouraging reductions in overhead motivate experimental work towards the design and realization of quantum LDPC codes in the laboratory.  At the physical level it remains to find efficient ways to measure check operators extending over distant qubits with high fidelity. With long-range coupling now demonstrated using a number of very different approaches~\cite{Magnard2020,Mills2019,Borjans2020,yoneda2021coherent,debnath2016demonstration,pino2021demonstration}, as well as proposals allowing all-to-all coupling~\cite{ramette2021}, there is significant room for innovation here. It is also critical that we identify check operator readout circuits that maintain the fault tolerance of the scheme, \rev{as well as account for cross-talk that may be present in long-range interactions}, and this will require the development of quantum LDPC codes at the level of the circuit error model~\cite{Higgott2021,tremblay2021constantoverhead}.

Lastly, further study into compilation of quantum algorithms will allow us to determine what level of parallelism is required to efficiently execute a quantum computation.  A commonly employed elementary fault-tolerant gate set consists of non-destructive measurements of arbitrary Pauli strings~\cite{Litinski19,Kim21,chamberland2021universal}. If the parallelism is strictly less than the number of logical qubits, clearly not all the Pauli strings can be measured directly. In particular, if the weight of the Pauli string exceeds the parallelism, one would need to break that measurement down into a sequence of (lower-weight) Pauli measurements, leading to a reduced speed. However, if the majority of the Pauli strings have small weights, the speed of the two approaches will not differ significantly. We anticipate this to be the case if the goal is to simulate a locally interacting many-body quantum spin Hamiltonian using the Trotter-Suzuki method~\cite{Wecker2014}, but it is unclear how the two approaches will differ for methods such as qubitization~\cite{Low2019}.

\section{\label{sec:measurements}Results}

We now present our main result:  a procedure to implement fault-tolerant logic gates in quantum LDPC codes via a generalization of lattice surgery, in a way that preserves the low overheads.  We propose a method for implementing fault-tolerant gates on quantum LDPC codes by using multi-logical Pauli measurements~\cite{Litinski19}. Our result expands on a set of techniques originally devised to reduce the weight of stabilizer generators of quantum codes~\cite{Hastings16,Hastings21}. We extend these results to include measurement of all logical Pauli operators, allowing for implementation of the full logical Clifford group. Our method is a form of code deformation, in which we transform our code into a new code and in doing so obtain logical information about our original code~\cite{Raussendorf07,Fowler12,Horsman12,Brown17,Litinski19,Vuillot19}.

Our construction enables us to perform single-qubit Pauli measurements, as well as parity measurements between logical qubits on one or multiple LDPC blocks in an arbitrary choice of Pauli basis. This capability gives us a measurement-based approach for realising the full Clifford gate set~\cite{Litinski19}. This gate set can be supplemented with magic state distillation and state injection to achieve universal quantum computing~\cite{Bravyi05}. We will discuss universal quantum computing in more detail in Section~\ref{sec:FTQC}. Our construction also guarantees that the distance of the code is preserved during the code deformation and so we retain the error-correcting capabilities of our code.

We begin by setting some basic notation and terminology in Section~\ref{sec:ldpc}.  Section~\ref{sec:deformation} presents our construction for measuring logical multi-qubit Pauli operators. We then prove that our construction preserves the distance of the code throughout the process in Section~\ref{sec:fault_tol}.

\subsection{\label{sec:ldpc}Notation and terminology}

We describe quantum error-correcting codes with the stabilizer formalism
. Let $\mathcal{P} = \langle I, X, Y, Z\rangle $ be the Pauli group and $\mathcal{P}_n = \mathcal{P}^{\otimes n}$ the Pauli group acting on $n$ qubits.  A stabilizer code is defined by an Abelian group $\mathcal{S} \subset \mathcal{P}_n$ such that $-I \not\in \mathcal{S}$. 
The code $\mathcal{C}$ is a subspace spanned by the common $+1$ eigenvalue eigenstates of the operators in $\mathcal{S}$. The logical Pauli operators are operators in $\mathcal{P}_n$ that commute with every operator in $\mathcal{S}$ but are not themselves in $\mathcal{S}$. If $\mathcal{S}$ is generated by an independent set of generators $\{ g_1, \ldots, g_m\}$, the number of logical qubits of the code is $k=n-m$. The distance of the code is equal to the weight of the least-weight non-trivial logical operator where the weight of an operator in $\mathcal{P}_n$ is the number of qubits on which it acts non-trivially, i.e., with non-identity support. Of particular interest are a class of stabilizer codes known as Calderbank-Shor-Steane (CSS) codes. These are codes with a stabilizer group that can be generated by a set that includes only two types of elements: those that are the product of Pauli-X operators only and those the product of only Pauli-Z operators.


We consider families of stabilizer codes $\mathcal{S}_n$ such that each member of family is indexed by the number of qubits $n$.  Let $w_n$ be the maximum weight of a stabilizer of a generating set of $\mathcal{S}_n$ and let $q_n$ be the maximum number of stabilizer generators that act on any given qubit for a specified generating set. A family of LDPC codes then is a sequence $\mathcal{S}_n$ that can be generated by a set of stabilizer generators such that $w_n = O(1)$ and $q_n = O(1)$. For the remainder of the paper we will drop the explicit dependence on $n$, and assume these quantities are constants.

We can describe LDPC codes using the Tanner graph. Let $\mathcal{T} = (V, C, E)$ be a bipartite graph. Each node in $V$ corresponds to a physical qubit in the code, and each node in $C$ corresponds to a generator of the stabilizer group of the code. We draw an edge between $c \in C$ and $v \in V$ if the generator corresponding to $c$ acts non-trivially on the qubit corresponding to $v$. We label each edge with either an $X$, $Z$, or $Y$ depending on how the generator acts on the qubit. For the case of CSS codes we can instead label the nodes in $C$ with $X$ or $Z$, as opposed to labelling the edges. We will often abuse notation and use the labels for variable and check nodes of a Tanner graph to refer to their respective physical qubits and stabilizer generators directly.

\subsection{\label{sec:deformation}Code deformation}

We now present our main technical contribution, starting with an outline of the basic idea behind implementing measurements using code deformation. Suppose we have a stabilizer code $\mathcal{S} = \langle g_1, \ldots, g_m \rangle $ along with a logical operator $\tilde{L}$ that we wish to measure. We can interpret our procedure as adding $\tilde{L}$ to the generating set of a new code that includes this operator, and then removing it again once we have reliably obtained the measurement outcome. Note, however, that simply adding $\tilde{L}$ to $\mathcal{S}$ will \emph{not} yield an LDPC code because in general $\tilde{L}$ is a high weight operator. To maintain the key properties of an LDPC code, we need a stabilizer code that includes $\tilde{L}$ in its stabilizer group whereby $\tilde{L}$ can be generated with constant- and ideally low-weight generators. This can be achieved by creating an extended system with additional qubits that include a set of low-weight stabilizer generators $S_1, \ldots, S_l$ whose product gives $\tilde{L}$. The product of the measurement results for each stabilizer generator gives us the measurement result of $\tilde{L}$. It is also important that the stabilizers of the new code do not generate any other logical operators of $\mathcal{S}$ so that we do not make any unwanted logical measurements that may affect our computation.

\begin{figure*}
    \centering
    \includegraphics[width=0.95\textwidth]{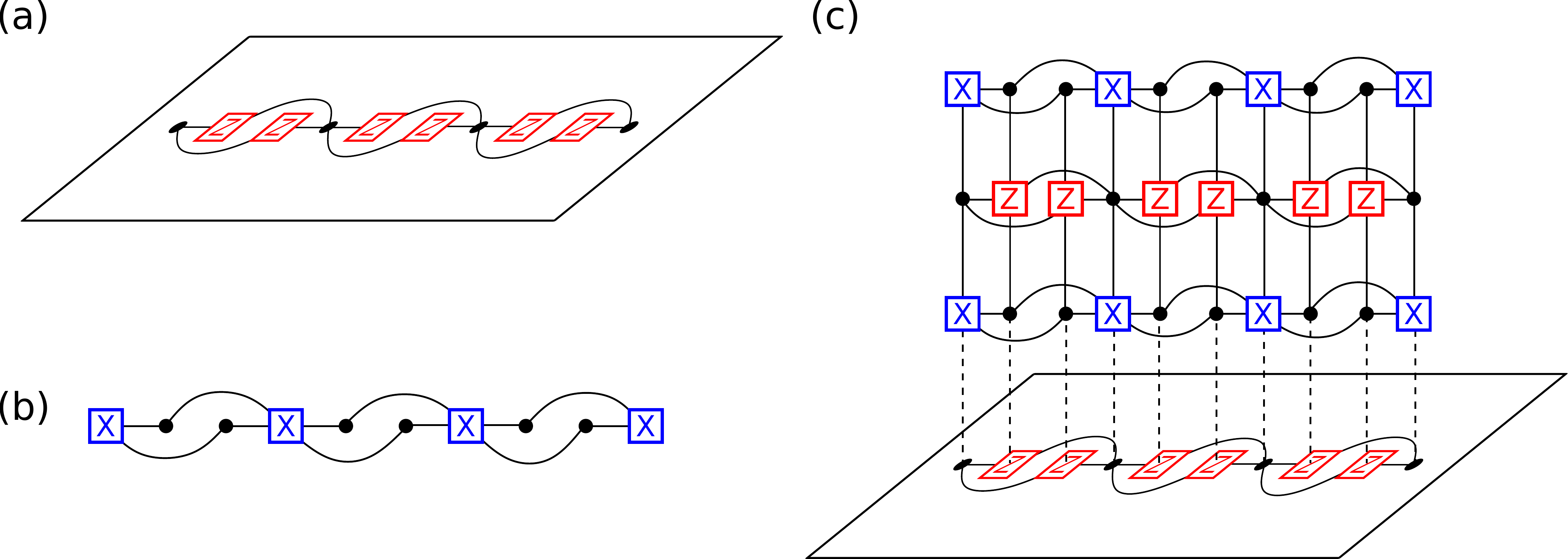}
    \caption{\textbf{Measurement of a logical $\tilde{X}$ operator of the code $\mathcal{C}$.} \textbf{(a)}  Bipartite subgraph $\mathcal{G}_{\tilde{X}}$ of the Tanner graph of $\mathcal{C}$ on the support of $\tilde{X}$.  Black nodes are the variable nodes corresponding to qubits in the support of $\tilde{X}$.  Red nodes are the check nodes corresponding to $Z$-type stabilizers in $\mathcal{C}$ that act on qubits in the support of $\tilde{X}$. \textbf{(b)}  The dual graph $\mathcal{G}^T = (V^T_{\tilde{X}}, C^T_{\tilde{X}}, E^T_{\tilde{X}})$ of the logical $\tilde{X}$ in (a). There is a one-to-one mapping between the $X$-type generators and the qubits in (a), and the qubits and the $Z$-type generators in (a). \textbf{(c)} Measurement of $\tilde{X}$ using the ancilla system $\mathcal{G}_{\textrm{anc}} = \mathcal{G}_{\textrm{merged}} \backslash \mathcal{G}$. The Tanner graph $\mathcal{G}_{\textrm{anc}}$ is constructed by taking alternating layers of the subgraph $\mathcal{G}^{T}_{\tilde{X}}$ in (b) and the subgraph $\mathcal{G}_{\tilde{X}}$ in (a). The vertical edges are the set $E_{\textrm{extra}}$, which connect adjacent layers. The product of the $X$ generators gives the logical $\tilde{X}$ and hence the product of the measurement results for each $X$ generator gives the measurement result of the logical $\tilde{X}$. After merging the codes and measuring $\tilde{X}$ we then split the codes by measuring the stabilizers for $\mathcal{C}$ and measuring the qubits in $\mathcal{C}_{\textrm{anc}}$ in the $Z$ basis, returning us to the original code space.}
    \label{fig:merge}
\end{figure*}

\subsubsection{\label{sec:single}Logical CSS measurement}

Here we describe our construction for fault-tolerant measurement of logical Pauli operators in CSS codes, illustrated with an example in Fig. 2 for concreteness. We concentrate on the case for measuring $X$ logical operators, but remark that an analogous procedure will hold for measuring $Z$ logical operators by reversing the $X$ and $Z$ terms in the following discussion. Let $\mathcal{C}$ be a CSS stabilizer code described by Tanner graph $\mathcal{G} = (V, C, E)$. We show how to measure a specific instance of an $X$ logical operator $\tilde{X}$. To do so, we deform code $\mathcal{C}$ onto a new code $\mathcal{C}_\textrm{merged}$ described by the Tanner graph that includes $\mathcal{G}$ as a subgraph, $\mathcal{G}_\textrm{merged} \supset \mathcal{G}$. 

Our goal is to define a new LDPC code whose stabilizer group includes $\tilde{X}$. We can use this new code to infer the measurement result of $\tilde{X}$ by measuring its stabilizer generators. 
The construction of $\mathcal{G}_{\textrm{merged}}$ is expressed in terms of $\mathcal{G}$ and a subgraph of $\mathcal{G}$ specified by $\tilde{X}$ that we call $\mathcal{G}_{\tilde{X}} = \left(V_{\tilde{X}}, C_{\tilde{X}}, E_{\tilde{X}}\right)$, together with a dual graph $\mathcal{G}_{\tilde{X}}^{T} = (V_{\tilde{X}}^{T}, C_{\tilde{X}}^{T},E_{\tilde{X}}^{T})$. The variable nodes $V_{\tilde{X}}$ of the subgraph $\mathcal{G}_{\tilde{X}}$ are the qubits that support $\tilde{X}$; checks $C_{\tilde{X}}$ are $Z$ type stabilizers \rev{that share an edge of the Tanner graph with} variable nodes $V_{\tilde{X}}$; and edges $E_{\tilde{X}} \subseteq E$ are those of $E$ that are incident to nodes included in both $V_{\tilde{X}}$ and $C_{\tilde{X}}$; see Fig. 2(a).  The dual graph $\mathcal{G}_{\tilde{X}}^T$ is such that for each $v \in V_{\tilde{X}}$ we have a corresponding node $v^{T} \in C^{T}_{\tilde{X}}$ and likewise for each $c \in C_{\tilde{X}}$ we have a corresponding vertex $c^{T} \in V^{T}_{\tilde{X}}$. For each edge $e \in E_{\tilde{X}} $ with $e = (v,c)$ we have $e^{T} \in E^{T}_{\tilde{X}}$ with $e^{T} = (v^{T},c^{T})$ such that the nodes $v^{T} \in C^T_{\tilde{X}}$ and $c^{T} \in V^T_{\tilde{X}}$ are those that correspond to the nodes of the original subgraph; see Fig. 2(b).
For now we assume that there is no strict subset of qubits $V' \subset V$ such that $V'$ supports a distinct $X$ logical operator, before explaining the differences with the more general case.

We define the Tanner graph $\mathcal{G}_{\textrm{merged}}$ in terms of $\mathcal{G}$ and our new graphs $\mathcal{G}_{\tilde{X}}$ and  $\mathcal{G}_{\tilde{X}}^T$.  Specifically, we combine the Tanner graph $\mathcal{G}$ with $r$ copies of $\mathcal{G}^{T}_{\tilde{X}}$ and $r-1$ copies of $\mathcal{G}_{\tilde{X}}$ using additional edges $E_\textrm{extra} \subset E_\textrm{merged}$. We layer the copies of $\mathcal{G}_{\tilde{X}}$ and  $\mathcal{G}_{\tilde{X}}^T$ in an alternating fashion; see Fig. 2(c). The additional edges $E_{\textrm{extra}}$ connect adjacent layers of $\mathcal{G}^{T}_{\tilde{X}}$ and $\mathcal{G}_{\tilde{X}}$ as shown in Fig.~\ref{fig:merge}(c). To explicitly describe the edges of $E_\textrm{extra}$, we index copies of these graphs and their corresponding objects $\mathcal{G}^{T}_{\tilde{X}}[j]$ with $1 \le j \le r$ and $\mathcal{G}_{\tilde{X}}[j]$ with $2 \le j \le r$. We can regard $\mathcal{G}_{\tilde{X}} \subseteq \mathcal{G}$ as $\mathcal{G}_{\tilde{X}}[1]$. In later sections we will refer to the layers corresponding to $\mathcal{G}^T_{\tilde{X}}[j]$ a \emph{dual layers}, and the layers corresponding to $\mathcal{G}_{\tilde{X}}[j]$ as \emph{primal layers}. We will also refer to the final layer $\mathcal{G}^T_{\tilde{X}}[r]$ as the \emph{boundary layer}.

Let us also append indices to the objects of $\mathcal{G}_{\tilde{X}}[j] = (V[j],C[j], E[j])$ and $\mathcal{G}^{T}_{\tilde{X}}[j] = (V^{T}[j],C^{T}[j], E^{T}[j])$ to define $E_{\textrm{extra}}$. Recall that for each variable and check in $\mathcal{G}_{\tilde{X}}$ we have a corresponding check or variable, respectively in $\mathcal{G}^{T}_{\tilde{X}}$. Thus, we also have that each $v[j] \in V[j]$ has a corresponding vertex $v^T[k] \in C^{T}[k]$ and likewise for each $c[j] \in C[j]$ we have $c^T[k] \in V^{T}[k]$. The appending edges $E_{\textrm{extra}}$ then include all edges $( v[j], v^T[j])$ and $(c[j], c^{T}[j])$ for all $v$, $c$ and $1 \le j \le r$, and edges $( v[j+1], v^T[j])$ and $(c[j+1], c^{T}[j])$ for all $v$, $c$ and $ 2 \le j \le r-1 $. 

To summarise, with $E_\textrm{extra} \subset E_\textrm{merged}$ defined, we have now specified all of the objects of $\mathcal{G}_{\textrm{merged}}$. We have that $\mathcal{G}$ is a subgraph of $\mathcal{G}_{\textrm{merged}}$. Likewise, the variables of $\mathcal{G}^{T}_{\tilde{X}}[j]$ and $\mathcal{G}_{\tilde{X}}[j]$ are variables of $\mathcal{G}_{\textrm{merged}}$. The checks of subgraphs $\mathcal{G}_{\tilde{X}}[j]$ are $Z$ stabilizers for $\mathcal{G}_{\textrm{merged}}$ and the checks of dual graphs $\mathcal{G}_{\tilde{X}}^{T}[j]$ are $X$ stabilizers of the merged Tanner graph. Edges $E_\textrm{extra}$ fix graphs $\mathcal{G}$, $\mathcal{G}_{\tilde{X}}[j]$ and $\mathcal{G}^{T}_{\tilde{X}}[j]$ together. It will sometimes be helpful to refer to $\mathcal{G}_\textrm{anc} = \mathcal{G}_\textrm{merged} \backslash \mathcal{G}$ as the ancilla system---the resource used to make parity measurements.

Having constructed our new code defined by the Tanner graph $\mathcal{G}_\textrm{merged}$, we now present the following lemma demonstrating that $\tilde{X}$ is an element of the stabilizer group of this new code.

\begin{lemma} \label{lemma:merge}
Let $\mathcal{C}$ be a CSS LDPC code and $\tilde{X}$ an $X$ logical operator in $\mathcal{C}$ such there is no other $X$ logical operator supported on a strict subset of the qubits in $\tilde{X}$. 
Then the construction for $\mathcal{G}_\textrm{merged}$ above gives a code $\mathcal{C}_{\textrm{merged}}$ that contains $\tilde{X}$ in its stabilizer group whereby
\begin{equation}
    \prod_{\substack{ j = 1,  \ldots , r \\ v^T[j]\in C^T[j]}} v^T[j] = \tilde{X}. 
\end{equation}
\end{lemma}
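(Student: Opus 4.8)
The plan is to prove the displayed identity by a direct mod-2 support count. Since every ancilla generator $v^T[j]$ is an $X$-type operator, the product of all of them acts as $X$ on exactly those qubits of $\mathcal{G}_{\textrm{merged}}$ that are touched by an odd number of the $v^T[j]$ and as the identity elsewhere. Thus it suffices to show that the qubits touched an odd number of times are precisely the original support $V_{\tilde X}$, in which case the product equals $\prod_{v\in V_{\tilde X}} X_v = \tilde X$. This also establishes the first claim of the lemma, because exhibiting $\tilde X$ as a product of generators of $\mathcal{C}_{\textrm{merged}}$ places it in the stabilizer group.

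First I would read off the support of a single generator $v^T[j]$ directly from the edge sets. Within its own dual layer, the dual edges $E^T[j]$ make $v^T[j]$ act on the qubits $c^T[j]$ for every check $c$ adjacent to $v$ in $\mathcal{G}_{\tilde X}$. Through $E_{\textrm{extra}}$ it acts in addition on the primal-layer qubit $v[j]$ and on the qubit $v[j+1]$ in the neighbouring primal layer (when that layer exists). The essential observation is that the only $E_{\textrm{extra}}$ edges incident to a dual qubit $c^T[j]$ connect it to the $Z$-type primal checks $c[j]$ and $c[j+1]$, so no $X$-generator from a layer other than $j$ ever touches $c^T[j]$.

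With the supports in hand I would carry out the count in three cases. (1) An original qubit $v[1]\in V_{\tilde X}$ sits at the bottom of the layered chain and is reached only by $v^T[1]$, hence is acted on once, an odd number. (2) An interior primal qubit $v[j]$ with $j\ge 2$ lies between two consecutive dual layers and is acted on by exactly the two generators $v^T[j-1]$ and $v^T[j]$, an even number, so it cancels. (3) A dual qubit $c^T[j]$ is acted on by the generators $v^T[j]$ for precisely those $v$ with $(v,c)\in E_{\tilde X}$; the number of such $v$ equals the size of the overlap between $c$ and $\mathrm{supp}(\tilde X)$. This is exactly where the logical-operator hypothesis enters: $\tilde X$ commutes with the $Z$-stabilizer $c$, so the overlap is even and $c^T[j]$ cancels. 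The three cases together leave a single $X$ on each $v[1]$ and identity everywhere else, which is the claimed identity.

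The step needing the most care is the boundary bookkeeping: I must confirm that the pairing in case (2) terminates cleanly, leaving the original qubits $v[1]$ touched exactly once and no stray interior qubit touched an odd number of times at either end of the chain. This is governed entirely by the index ranges attached to the two families of $E_{\textrm{extra}}$ edges, so I would verify the endpoints explicitly and sanity-check the smallest case $r=1$ by hand. The minimal-support assumption on $\tilde X$ is not actually needed for the cancellation, which uses only commutation; I would instead invoke it to guarantee that the ancilla generators do not also multiply to a distinct logical operator of $\mathcal{C}$ supported on a proper subset of $V_{\tilde X}$, ensuring the construction measures $\tilde X$ and nothing else. Finally, I would note for completeness that the $v^T[j]$ are genuine commuting stabilizers of the CSS code $\mathcal{C}_{\textrm{merged}}$: being $X$-type they commute among themselves, and they commute with the $Z$-checks by the same even-overlap property used in case (3).
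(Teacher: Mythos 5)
Your proposal is correct and follows essentially the same route as the paper's own proof: the identical three-case parity count (dual-layer qubits $c^T[j]$ cancel because commutation of $\tilde{X}$ with the $Z$-checks forces an even overlap, interior primal qubits $v[j]$, $j\ge 2$, are hit by exactly the two generators $v^T[j-1]$ and $v^T[j]$, and each original qubit $v[1]$ is hit only by $v^T[1]$). Your closing remarks---that the minimal-support hypothesis is not needed for the cancellation itself but only to rule out unintended measurement of a smaller logical operator, and that the generators genuinely commute---are accurate and consistent with how the paper uses that hypothesis in the discussion following the lemma.
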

\noindent
We recall that we have used the notation where $v_T[j]$ is the stabilizer generator represented by this check node.

\begin{proof}
We prove this lemma by showing that the support of $\prod_{ j, C^T[j]} v^T[j] $ on the qubits of $\mathcal{G}_\textrm{anc}$ is trivial, and is non-trivial on $V[1]$, which is the support of $\tilde{X}$.  By definition, stabilizers $v^T[j]$ are supported on variables $v[j] \in V[j]$ and $c^T[j] \in V^T[j]$. We therefore concentrate on these qubits.

Observe that each $Z$ generator $c \in C[1]$ must be connected to an even number of qubits in $V[1]$ in order for the stabilizers to commute with $\tilde{X}$. Consequently, each physical qubit $c^T[k] \in V^T[k]$ must be connected to an even number $q_X$ of $X$ generators in $C^T[k]$. Hence each qubit $c^T[k]$ supports the term $(X_{c^T[k]})^{q_x} = \openone$ for the operator $\prod_{ j, C^T[j]} v^T[j] $.

Furthermore, each physical qubit $v[k] \in V[k]$ for $k \geq 2$ is connected to exactly two $X$ generators, $v^T[k-1]$ and $v^T[k]$. It follows that each qubit $ v[k] \in V[k] $ supports the term $(X_{c^T[j]})^{2} = \openone$ for the operator $\prod_{ j, C^T[j]} v^T[j] $. 

Finally, each physical qubit $v[1] \in V[1]$ is connected to exactly one $X$ generator in $\mathcal{G}_{\textrm{anc}}$, specifically $v^T[1]$. Consequently, the product of all the $X$ generators in $\mathcal{C}_\textrm{anc}$ gives precisely the logical operator $\tilde{X}$, thus giving us the desired result.
\end{proof}

To make a measurement of $\tilde{X}$ in a practical way we first determine $\mathcal{G}_\textrm{merged}$ and prepare each physical qubit in $\mathcal{G}_\textrm{anc}$ in the $\ket{0}$ state. We then measure all the stabilizer generators in $\mathcal{G}_\textrm{merged}$ and perform a round of error correction. To ensure this procedure is fault tolerant in the presence of noisy measurements we can repeat this step $d$ times~\cite{Vuillot19}. Once we have fault tolerantly obtained the result of the measurement of $\tilde{X}$ we can return to the original code space $\mathcal{C}$ by measuring each physical qubit in $\mathcal{G}_\textrm{anc}$ in the Pauli-$Z$ basis. 

Lemma~\ref{lemma:merge} provides a mechanism to perform a measurement of an $X$ logical operator $\tilde{X}$; however, it is restricted to the special case where there is no other $X$ logical operator within its support.  In the general case, we may have another $X$ logical operator $\tilde{X}'$ supported entirely on a strict subset $V'$ of physical qubits in $V[1]$. Following the construction as given above, we will make an unwanted measurement of $\tilde{X}'$, which will result in an entirely different computation.  As an example of this situation, consider the measurement of a two-logical-qubit operator $\tilde{X}_1 \tilde{X}_2$, where $\tilde{X}_1$ and $\tilde{X}_2$ are canonical logical operators that do not intersect at any physical qubits.  Following the procedure outlined above will give us separate measurements of $\tilde{X}_1$ and $\tilde{X}_2$.

We now generalize our construction of $\mathcal{G}_{\textrm{merged}}$ to address this general situation, which we illustrate in Fig.~\ref{fig:xx}.
To measure $\tilde{X}_1 \tilde{X}_2$ without measuring the value of $\tilde{X}_1$ or $  \tilde{X}_2$ individually, we must construct $\mathcal{G}_\textrm{merged}$ such that the separate logical operators are connected. First construct the separate  ancilla systems, $\mathcal{G}_{\text{anc}. 1}$ and $\mathcal{G}_{\text{anc}., 2}$ for $\tilde{X}_1$ and $\tilde{X}_2$. Let $c_1[k] \in C^T_1[k]$ and $c_2[k] \in C^T_2[k]$ be two arbitrary $X$ generators for $1 \leq k \leq r$. We introduce a new physical qubit $a[k]$ for $1 \leq k \leq r$ and a new $Z$ generator $z[j]$ for $1 \leq j \leq r$. There are the variable and check nodes in the highlighted region of Fig.~\ref{fig:xx}. The nodes of $\mathcal{G}_\textrm{merged}$ will consist of all the nodes in $\mathcal{G} \cup \mathcal{G}_{\text{anc}., 1} \cup \mathcal{G}_{\text{anc}., 2}$ as well as the nodes $a[k]$ and $z[j]$. The edge set of $\mathcal{G}_\textrm{merged}$ will also contain all the edges in $\mathcal{G} \cup \mathcal{G}_{\text{anc}., 1} \cup \mathcal{G}_{\text{anc}., 2}$ as well as the edges connecting $\mathcal{G}^T_{\tilde{X}_1}[2]$ and $\mathcal{G}^T_{\tilde{X}_2}[2]$ to $\mathcal{G}_{\tilde{X}_1}[1]$ and $\mathcal{G}_{\tilde{X}_2}[1]$ respectively. We then add the following edges, which are the edges in the highlighted region of Fig.~\ref{fig:xx}: $(a[j-1], z[j])$ and $(z[j], a[j])$ for $2 \leq j \leq r$, $(c_1[k], a[k])$ and $(a[k], c_2[k])$ for $1 \leq k \leq r$, and $(c_1^T[j], z[j])$ and $(z[j], c^T_2[j])$ for $2 \leq j \leq r$. Then the product of all $X$ checks in $\mathcal{G}\backslash \mathcal{G}_{\textrm{merged}}$ gives the measurement of $\tilde{X}_1 \tilde{X}_2$ and the product of the $X$ checks exclusively in $\mathcal{G}_{\textrm{anc}, 1}$ or $\mathcal{G}_{\textrm{anc}, 2}$ do not give measurements of $\tilde{X}_1$ or $\tilde{X}_2$ since they will have support on the physical qubits $a[k]$.

Before continuing, we offer some orienting remarks. First of all, if we choose to measure a logical operator supported at the boundary of the planar code defined using the lattice geometry presented in Ref.~\cite{Dennis02}, we recover the lattice surgery construction given in the original work~\cite{Horsman12}. 
\rev{Indeed, the gates we obtain are similar in spirit to those proposed in Ref.~\cite{Breuckmann17} where a surface code embedded on a torus is used as a resource to measure logical CSS operators of constant-rate hyperbolic surface codes. We expect the additional resources that will be needed to perform these logical operations will scale similarly with code distance to our scheme, up to constant factors. Our techniques however do not require the logical operators to have a specific structure, and are thus more broadly applicable to quantum LDPC codes.}
Secondly, the reader familiar with hypergraph product codes~\cite{Tillich14} can check that $\mathcal{G}_{\textrm{anc}}$ is the hypergraph product of $\mathcal{G}_{\tilde{X}}$ and the Tanner graph of a repetition code with $r$ variables. From this observation, it is easy to verify that the stabilizers supported on the variables of $G_\textrm{anc}$ commute.

\subsubsection{\label{sec:singley}Logical non-CSS measurement}

In order to implement the entire logical Clifford group using logical Pauli measurements, we must also be able to measure non-CSS logical operators that are the product of Pauli-X and Pauli-Z measurements~\cite{Brown17,Litinski19,chamberland2021universal}.  We now demonstrate how our construction can be adapted to such measurements.

First, we demonstrate how to measure the logical operator $\tilde{Y}$, which we illustrate in Fig. 4. Let $\tilde{X}$ be a logical operator in $\mathcal{C}$ and let $\tilde{Z}$ be the corresponding $Z$ logical operator. The corresponding $Y$ logical operator is given by $\tilde{Y} = i\tilde{X}\tilde{Z}$. Since $\tilde{X}$ and $\tilde{Z}$ must anti-commute any support of $\tilde{X}$ and any support of $\tilde{Z}$ must intersect at an odd number of qubits.

To measure $\tilde{Y}$ we prepare an ancilla system by combining two ancilla systems $\mathcal{G}_{\textrm{anc }\tilde{X}}$ and $\mathcal{G}_{\textrm{anc }\tilde{Z}}$ used to measure the CSS measurements $\tilde{X}$ and $\tilde{Z}$, respectively, using a fusion procedure as follows. We denote $A = V_{\tilde{X}}[1] \cap V_{\tilde{Z}}[1]$ as the set of qubits in the intersection of $\tilde{X}$ and $\tilde{Z}$. Their check operators are denoted $C_{\tilde{X}}[k] $ and $ C_{\tilde{Z}}[k]$, respectively.
 
For each physical qubit $v \in A$ there are corresponding $X$ generators $v^T_{\tilde{X}}[k] \in C^T_{\tilde{X}}[k]$ and $Z$ generators $v^T_{\tilde{Z}}[k] \in C^T_{\tilde{Z}}[k]$ for $1 \leq k \leq r$. Furthermore, each $v[1]$ has corresponding qubits $v_{\tilde{X}}[j]$ and $v_{\tilde{Z}}[j]$, for $2 \leq j \leq r$, in $\mathcal{G}_{\textrm{anc }\tilde{X}}$ and $\mathcal{G}_{\textrm{anc }\tilde{Z}}$ respectively. We then form $\mathcal{G}_\textrm{merged}$ using all the nodes and edges in $\mathcal{G}_{\textrm{anc }\tilde{X}}$ and $\mathcal{G}_{\textrm{anc }\tilde{Z}}$, except for the generators $v^T_{\tilde{X}}[k]$ and $v^T_{\tilde{Z}}[k]$ for all vertices in $ A$, which we replace with $y[k] = iv^T_{\tilde{X}}[k]v^T_{\tilde{Z}}[k]$. Merging the generators $v^T_{\tilde{X}}[k]$ and $v^T_{\tilde{Z}}[k]$ creates an extra degree of freedom which we fix by introducing check nodes $g[j]$ for $2 \leq j \leq r$ and adding a $Z$ edge $(v_{\tilde{X}}[j], g[j])$ and an $X$ edge $(v_{\tilde{Z}}[j], g[j])$. The product of the $X$ generators in $\mathcal{G}_{\textrm{anc }\tilde{X}}$, the $Z$ generators in $\mathcal{G}_{\textrm{anc }\tilde{Z}}$, and the generators $y[k]$ allow us to infer the measurement of $\tilde{Y}$. To check that these stabilizer generators commute note that all the generators in $\mathcal{G}_{\textrm{anc } \tilde{X}}$ and $\mathcal{G}_{\textrm{anc } \tilde{Z}}$ commute since they act on different qubits, except $v^T_{\tilde{X}}[1]$ and $v^T_{\tilde{Z}}[1]$, which we rectified by combining them. This procedure can also be straightforwardly adapted to measure a logical operator of the form $\tilde{X}_1 \tilde{Z}_2$ when the intersection of the supports of $\tilde{X}_1$ and $\tilde{Z}_2$ is not empty.

\begin{figure}
    \centering
    \includegraphics[width=0.45\textwidth]{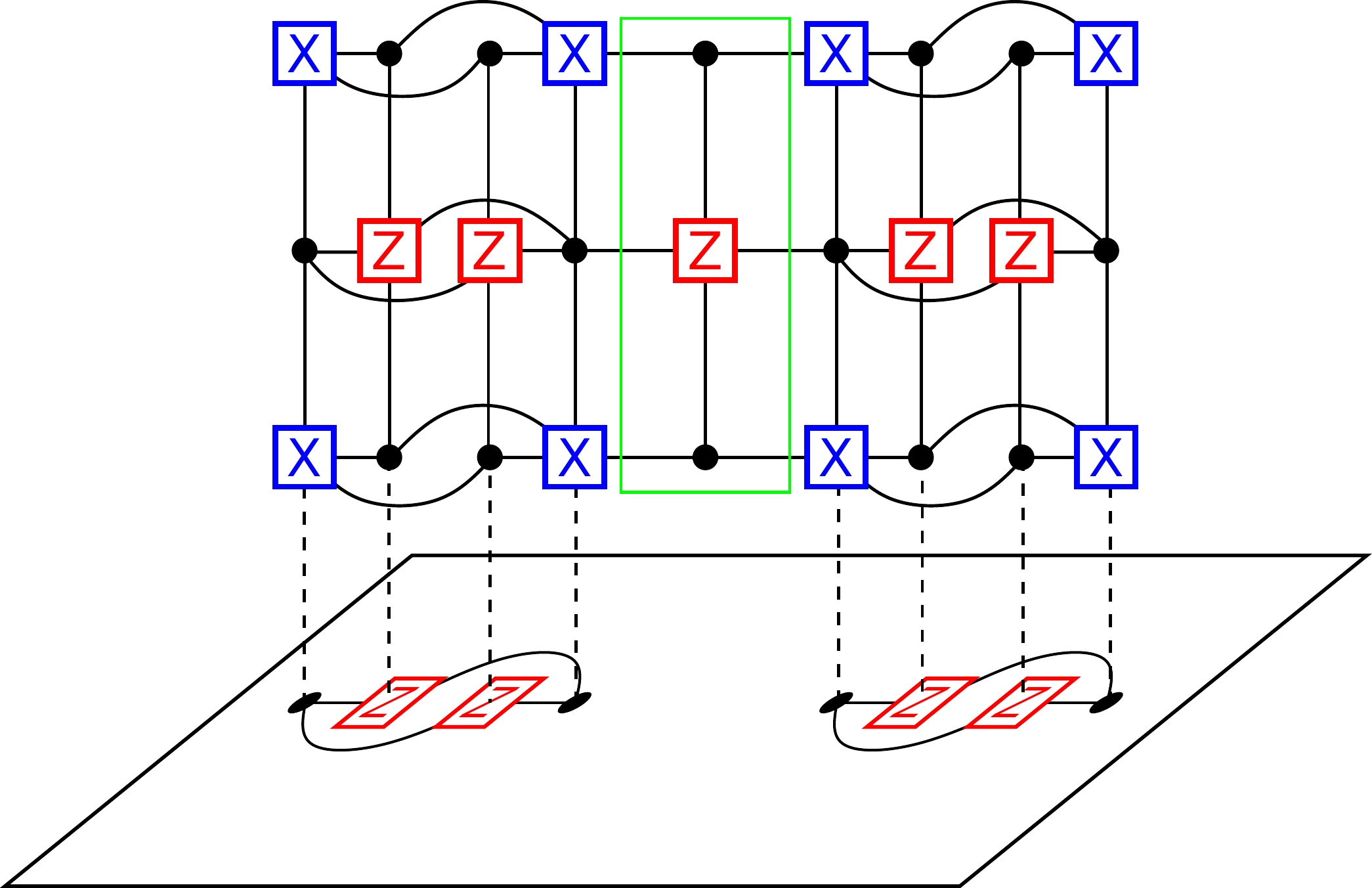}
    \caption{\textbf{Measurement of the logical operator $\tilde{X}_1 \tilde{X}_2$.} First, ancilla systems for the logical operators $\tilde{X}_1$ and $\tilde{X}_2$ are constructed as in Fig. 2. These ancilla systems are connected together as highlighted (green box). This ancilla system is then connected to the logical $\tilde{X}_1 \tilde{X}_2$ as previously. Observe that the product of the $X$ stabilizer generators in the ancilla system gives the logical $X_1X_2$. Furthermore, if we take the product of the $X$ stabilizers on the left ancilla system we do not obtain $X_1$, since this product will include qubits in the highlighted (green) region. The same holds for the stabilizers in the right ancilla system. Hence the product of the measurement results for these generators gives the measurement result for $X_1X_2$. After obtaining the measurement result, we can again measure the original stabilizers to return to the code space.}
    \label{fig:xx}
\end{figure}

Finally, we require measurement of non-CSS product operators such as $\tilde{X}_1 \tilde{Z}_2$ when $\tilde{X}_1$ and $\tilde{Z}_2$ do not intersect at any physical qubits, as illustrated in Fig. 5. As in the general case for CSS measurements, we must first connect the logical operators in order to make a parity measurement. We calculate the ancilla systems $\mathcal{G}_{\textrm{anc } 1}$ and $\mathcal{G}_{\textrm{anc } 2}$ for $\tilde{X}_1$ and $\tilde{Z}_2$. Let $c_1[k] \in C^T_1[k]$ and $c_2[k] \in C^T_2[k]$ be $X$ and $Z$ generators respectively for $1 \leq k \leq r$. As we did for the measurement of $\tilde{X}_1 \tilde{X}_2$, we extend out each ancilla system so that $\mathcal{G}_{\textrm{anc }1}$ has a boundary of $X$ generators coming out of the column defined by $c_1[k]$ and $\mathcal{G}_{\textrm{anc }2}$ has a boundary of $X$ generators coming out of the column defined by $c_2[k]$. As for the $Y$ measurement we then merge these boundaries by merging the corresponding $X$ and $Z$ generators at these boundaries and creating weight-two $XZ$ generators between corresponding physical qubits at these boundaries. Then if we take the product of the $X$ generators in $\mathcal{G}_{\textrm{anc }1}$, the $Z$ generators in $\mathcal{G}_{\textrm{anc }2}$, and the merged stabilizers we obtain a measurement of $\tilde{X}_1 \tilde{Z}_2$ without measuring $\tilde{X}_1$ and $\tilde{Z}_2$ separately.

\begin{figure}
    \centering
    \includegraphics[width=0.45\textwidth]{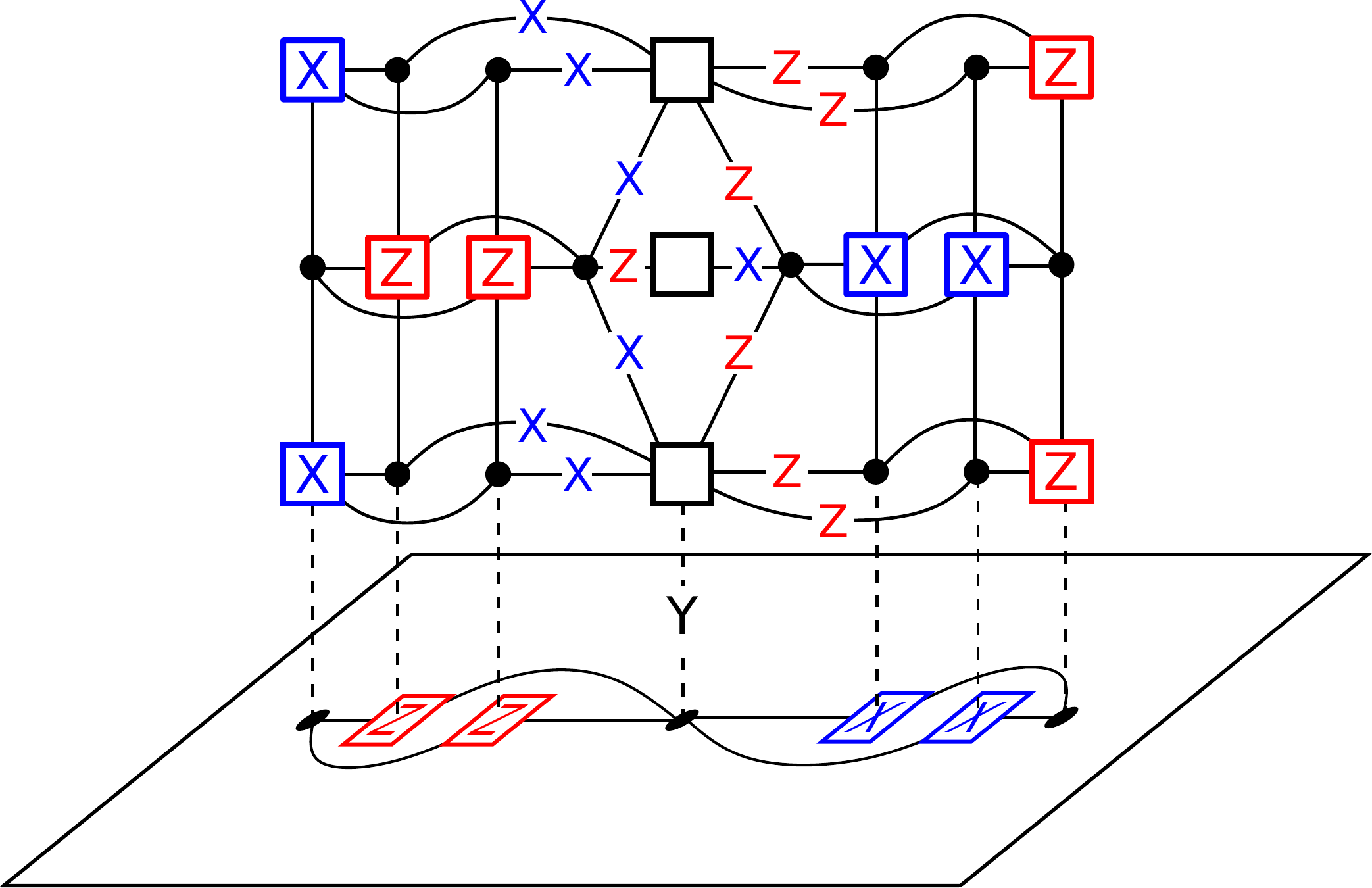}
    \caption{\textbf{Measurement of a logical $\tilde{Y} = i\tilde{X}\tilde{Z}$ operator.} This measurement closely follows that of logical $\tilde{X}_1\tilde{X}_2$ shown in Fig. 3, with the key difference being that ancilla systems for the logical operators $X_1$ and $Z_2$ are connected using non-CSS generators. Observe that the product of the $X$ stabilizer generators on the left, the $Z$ stabilizer generators on the right, and the mixed stabilizer generators in the dual layers gives the logical $\tilde{Y}$.}
    \label{fig:y}
\end{figure}

\begin{figure}
    \centering
    \includegraphics[width=0.45\textwidth]{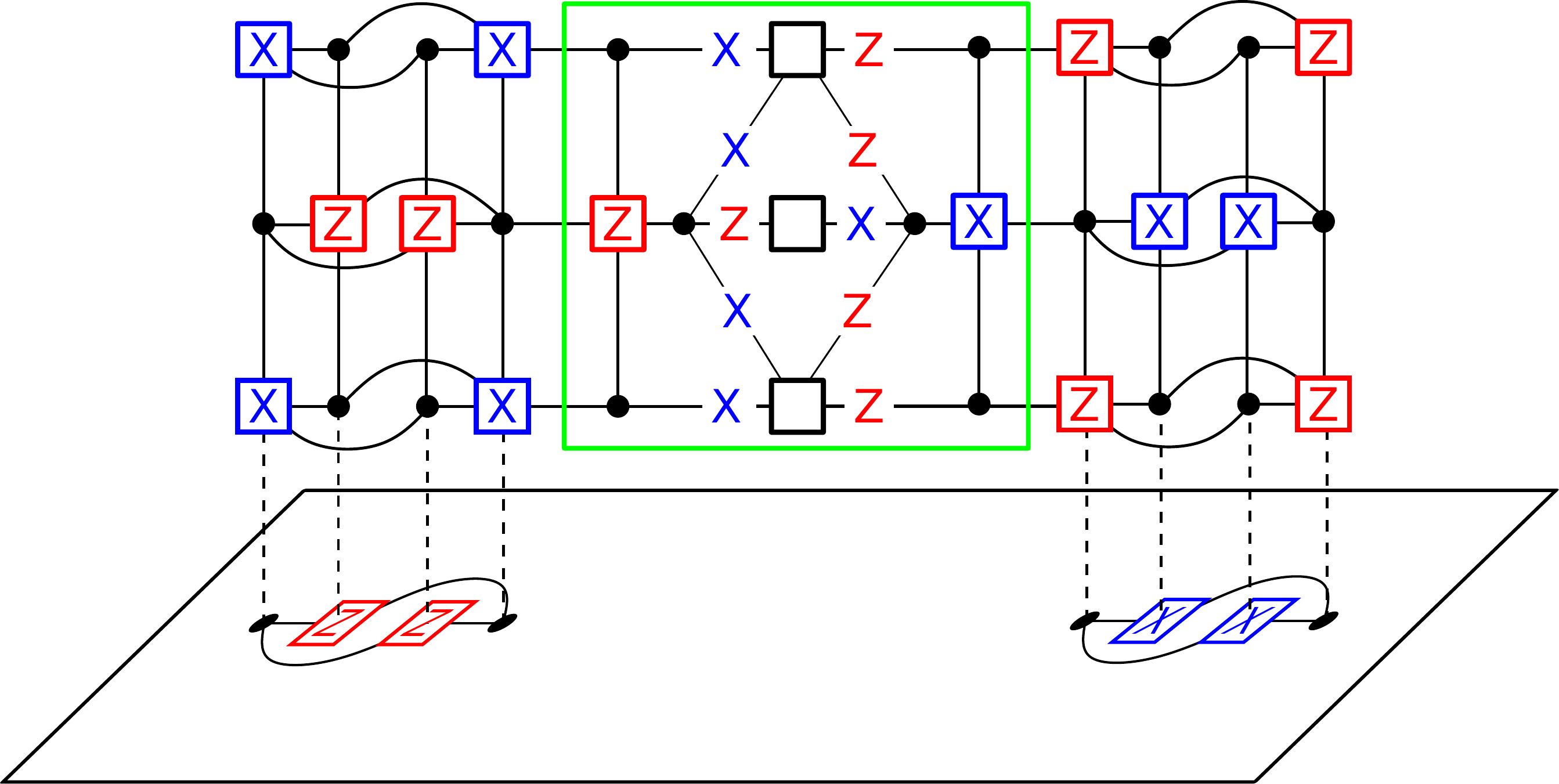}
    \caption{\textbf{Measurement of the logical operator $\tilde{X}_1\tilde{Z}_2$.} This measurement closely follows that of logical $\tilde{X}_1\tilde{X}_2$ shown in Fig. 3, with the key difference being that ancilla systems for the logical operators $X_1$ and $Z_2$ are connected using non-CSS generators (highlighted green box). Observe that the product of the $X$ stabilizer generators on the left, the $Z$ stabilizer generators on the right, and the mixed stabilizer generators in the highlighted region gives the logical $\tilde{X}_1\tilde{Z}_2$.}
    \label{fig:xz}
\end{figure}

\subsubsection{\label{sec:sim}Simultaneous measurement of commuting logical operators}

Suppose we wish to simultaneously measure two commuting logical operators. If the supports of the logical operators do not intersect then it is easy to see that we can make each measurement independently at the same time. If the supports do intersect but both logical operators are of the same type (all $X$ or all $Z$) then we can still independently measure each logical operator using our construction.  Finally, if the two commuting logical operators are not of the same type (say a Pauli-X type measurement $\tilde{X}_1$ and a Pauli-Z type measurement $\tilde{Z}_2$) and intersect then they must intersect at an even number of qubits. Let $\tilde{L}_1$ and $\tilde{L}_2$ be the two commuting Pauli operators and let $a$ and $b$ be two physical qubits in the intersection $V_1[1] \cap V_2[1]$ such that $\tilde{L}_1$ and $\tilde{L}_2$ act with different Pauli operators on $a$, and similarly on $b$. Then the generators $a^T[1] \in C_1^T[1]$ and $a'^T[1] \in C_2^T[1]$ will not commute. Similarly, the generators $b^T[1] \in C_1^T[1]$ and $b'^T[1] \in C_2^T[1]$ will not commute. To rectify this situation, we replace the generators $a^T[k]$ and $b^T[k]$ with $a^T[k]b^T[k]$ for $1 \leq k \leq r$ and create weight-two $Z$ stabilizer generators acting on $a[j]$ and $b[j]$, for $2 \leq j \leq r$ to fix the degrees of freedom created from merging the stabilizer generators. Similarly, we replace we replace the generators $a'^T[k]$ and $b'^T[k]$ with $a'^T[k]b'^T[k]$ and create weight-two $X$ stabilizer generators acting on $a'[j]$ and $b'[j]$. Since there are always an even number of intersections, non-commuting stabilizers can always be paired up.

\subsection{\label{sec:fault_tol}Fault tolerance}

Having presented our construction to perform logical Clifford gates via logical Pauli measurement, we now show that the merged code still possesses error correcting capabilities and therefore retains its fault-tolerance. First of all, it follows straightforwardly from our construction that the LDPC nature of the code remains intact throughout the deformation from $\mathcal{C}$ to $\mathcal{C}_{\text{merged}}$.

\begin{lemma}
Let $\tilde{L}$ be a logical operator on a CSS quantum LDPC code $\mathcal{C}$. Let $w$ be the maximum weight of a stabilizer generator in $\mathcal{C}$ and let $q$ be the maximum number of stabilizer generators acting on a physical qubit in $\mathcal{C}$. Let $w'$ and $q'$ be the equivalent quantities for the code $\mathcal{C}_{\text{merged}}$ obtained after making a measurement of $\tilde{L}$ in $\mathcal{C}$. Then $w', q' \leq \textrm{max}(w + 3, q+3)$. 
\end{lemma}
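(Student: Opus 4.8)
The plan is to prove the bound by a direct case analysis over every stabilizer generator and every physical qubit of $\mathcal{C}_{\textrm{merged}}$, splitting each incidence count into an \emph{intrinsic} contribution inherited from $\mathcal{C}$ and an \emph{extra} contribution introduced by the deformation. For a generator the intrinsic part will be bounded either by $w$ (when the generator descends from a check of $\mathcal{C}$) or by $q$ (when it is a dual-layer generator $v^T[j]$, since such a generator has one incident dual qubit $c^T[j]$ for each $Z$-check of $\mathcal{C}$ acting on the original qubit $v$, of which there are at most $q$). Symmetrically, for a qubit the intrinsic part is bounded by $q$ (for primal qubits $v[j]$) or by $w$ (for dual qubits $c^T[j]$, which carry one incident $X$-generator per qubit in the support of the original check $c$). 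It then suffices to show that the edges $E_{\textrm{extra}}$, together with the joining edges of the multi-operator and non-CSS constructions, add at most three incidences to any node.

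First I would dispose of the single logical CSS measurement of Section~\ref{sec:single}. Reading off the definition of $E_{\textrm{extra}}$, every dual-layer generator $v^T[j]$ acquires at most the two interlayer edges $(v[j],v^T[j])$ and $(v[j+1],v^T[j])$, so its weight is at most $q+2$; likewise every primal-layer and original $Z$-check acquires at most the two edges $(c[j],c^T[j])$ and $(c[j+1],c^T[j])$, giving weight at most $w+2$. The same bookkeeping applied from the qubit side shows each dual qubit $c^T[j]$ has degree at most $w+2$ and each primal qubit $v[j]$ degree at most $q+2$, while the untouched qubits and checks of $\mathcal{G}$ retain degree $\le q$ and weight $\le w$. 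Hence for the basic construction $w',q' \le \max(w+2,q+2)$, already inside the claimed bound.

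The remaining work, and the main obstacle, is to check that the joining steps for $\tilde{X}_1\tilde{X}_2$, $\tilde{Y}$, $\tilde{X}_1\tilde{Z}_2$, and for simultaneous mixed measurements contribute at most one further incidence to any node. For the CSS joining of $\tilde{X}_1\tilde{X}_2$ this is immediate: the new qubits $a[k]$ and checks $z[j]$ are incident only to the constant number of joining edges listed in the construction, so they have weight and degree at most four, and the only pre-existing objects that gain an edge are the chosen boundary generators $c_1[k],c_2[k]$ (one extra edge each, raising them from $\le q+2$ to $\le q+3$) and the chosen dual qubits (one extra edge, to $\le w+3$). The delicate case is the non-CSS merge for $\tilde{Y}$, where a dual $X$-generator $v^T_{\tilde{X}}[k]$ and a dual $Z$-generator $v^T_{\tilde{Z}}[k]$ attached to the same original qubit $v\in A$ are fused into $y[k]=iv^T_{\tilde{X}}[k]v^T_{\tilde{Z}}[k]$. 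The key point that keeps this within the bound is that $q$ counts the \emph{total} number of $X$- and $Z$-type generators acting on $v$, so the dual qubits inherited from the $X$-side and from the $Z$-side together number at most $q$ rather than $2q$; combined with the bounded set of interlayer and $g[j]$-fixing edges, and using that the shared original qubit contributes a single $Y$ rather than two separate incidences, the fused generator stays within $\max(w+3,q+3)$. The analogous statement for the weight-two $XZ$ generators of the $\tilde{X}_1\tilde{Z}_2$ and simultaneous-measurement constructions follows by the same argument.

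Finally I would collect the worst case over all node types: generators descended from checks reach at most $w+3$, dual-layer generators at most $q+3$, and all newly introduced joining nodes have constant weight bounded by these quantities, giving $w' \le \max(w+3,q+3)$; the symmetric count for qubit degrees gives the same bound for $q'$. I expect the only genuinely nontrivial bookkeeping to be the non-CSS fusion, where one must invoke the total-degree interpretation of $q$ to avoid an apparent doubling of the weight.
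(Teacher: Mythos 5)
Your overall strategy is the right one, and it is worth noting that the paper never actually spells out a proof of this lemma---it is asserted to follow ``straightforwardly from our construction''---so your explicit incidence bookkeeping is precisely the argument the authors must have intended. Your counts for the two CSS cases are correct: writing $n_Z(v)$, $n_X(v)$ for the number of $Z$-type and $X$-type generators of $\mathcal{C}$ acting on a qubit $v$, every dual-layer generator $v^T[j]$ has weight at most $n_Z(v)+2 \le q+2$, every primal-layer check has weight at most $w+2$, the qubit degrees obey the symmetric bounds, and the $\tilde{X}_1\tilde{X}_2$ joining adds exactly one incidence to the chosen generators $c_i[k]$ and dual qubits $c_i^T[j]$ while introducing only constant-degree new nodes; this yields $\max(w+3,q+3)$ for all CSS-type measurements.

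The gap is in the case you yourself flagged as delicate, the non-CSS fusion for $\tilde{Y}$. The saving you invoke---that the shared original qubit contributes a single $Y$ rather than two incidences---exists only in the first dual layer. For $k=1$ one indeed gets $\mathrm{wt}(y[1]) \le n_Z(v)+n_X(v)+3 \le q+3$. But for $2 \le k \le r-1$ the supports of $v^T_{\tilde{X}}[k]$ and $v^T_{\tilde{Z}}[k]$ are disjoint: each brings its own two interlayer edges (to $v_{\tilde{X}}[k], v_{\tilde{X}}[k+1]$ and to $v_{\tilde{Z}}[k], v_{\tilde{Z}}[k+1]$ respectively) on top of the $n_Z(v)+n_X(v) \le q$ dual-qubit edges, so $\mathrm{wt}(y[k]) = n_Z(v)+n_X(v)+4 \le q+4$, and nothing in your argument (nor in the construction as literally described in the paper) brings this back under $\max(w+3,q+3)$; whenever $q \ge w$ (e.g., surface-code-like parameters with $w=q$) the claimed bound is exceeded. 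The problem is sharper still for the simultaneous-measurement merges $a^T[k]b^T[k]$, which you dismiss as following ``by the same argument'': there the two fused generators are attached to \emph{different} qubits $a \ne b$, so the total-degree interpretation of $q$ gives no saving at all, and the weight is bounded only by $n_Z(a)+n_Z(b)+4 \le 2q+4$ minus cancellations on shared dual qubits. To close the gap you must do one of the following: restrict the lemma to CSS-type measurements, where your proof is complete; prove the lemma with a weaker constant covering the fused generators; or modify the construction, e.g., by identifying the primal-layer copies $v_{\tilde{X}}[j]$ and $v_{\tilde{Z}}[j]$ along the intersection so that the fused dual-layer generators share their interlayer qubits instead of duplicating them.
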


This fact is important for reducing complexity of the stabilizer measurement and for limiting the spread of errors throughout the deformation procedure.

\subsubsection{\label{sec:distance}Distance of $\mathcal{C}_{\text{merged}}$}

In order to preserve error correcting capabilities it is important that our code maintains a non-trivial distance throughout the code deformation. We show that the code distance of $\mathcal{C}_{\text{merged}}$ is no less than that of $\mathcal{C}$. There are two main concerns regarding the distance of our construction.  The first is whether the construction can reduce the weight of the logical operators. \rev{Through consideration of the structure of logical operators on the ancilla system}, we demonstrate that logical operators terminating at the boundary layers of the ancilla systems must necessarily maintain their weight because these boundaries are kept well-separated. \rev{Secondly, our construction can add new logical degrees of freedom to the code. As we are not interested in the state of these additional qubits, we refer to them as gauge qubits. This is consistent with terminology used in the context of subsystem codes, see Ref.~\cite{Poulin2005} for an introduction. We demonstrate that the logical operators associated to the gauge qubits (that we refer to as gauge operators) do not decrease the distance of the logical operators of interest. We call an operator that acts simultaneously on logical qubits and gauge qubits as a dressed logical operator. It is important to check that there are no dressed logical operators with weight smaller than $d$, as the existence of such an operator will mean that the code distance is decreased.}

\rev{Our following arguments will also use the notion of `cleaning'~\cite{Bravyi_2009}. We say that a logical operator $\tilde{L}$ is cleaned from some set of physical qubits $A$ to some set of physical qubits $B$, if we can multiply $\tilde{L}$ by an element $S$ of the stabilizer group such that the equivalent logical operator $\tilde{L}' = S \tilde{L} $  has trivial support on $A$ and non-trivial support on $B$.}

The distance of $\mathcal{C}_{\text{merged}}$ will depend upon the value of $r$, where $r$ was defined as the number of copies of $\mathcal{G}_{\tilde{L}}^T$ in the ancilla system. Let $d$ be the distance of $\mathcal{C}$. Consider the example of measuring the logical operator $X_1X_2$ of the surface code that encodes two logical qubits shown in Fig. 6. Then letting $r=1$ will significantly reduce the distance of the code, whereas letting $r=d$ will preserve the distance. In general we will show that if we let $r = d$, where $d$ is the distance of $\mathcal{C}$, then we can guarantee that the distance of the code during deformation does not drop below $d$.

\begin{figure}
    \centering
    \includegraphics[width=0.45\textwidth]{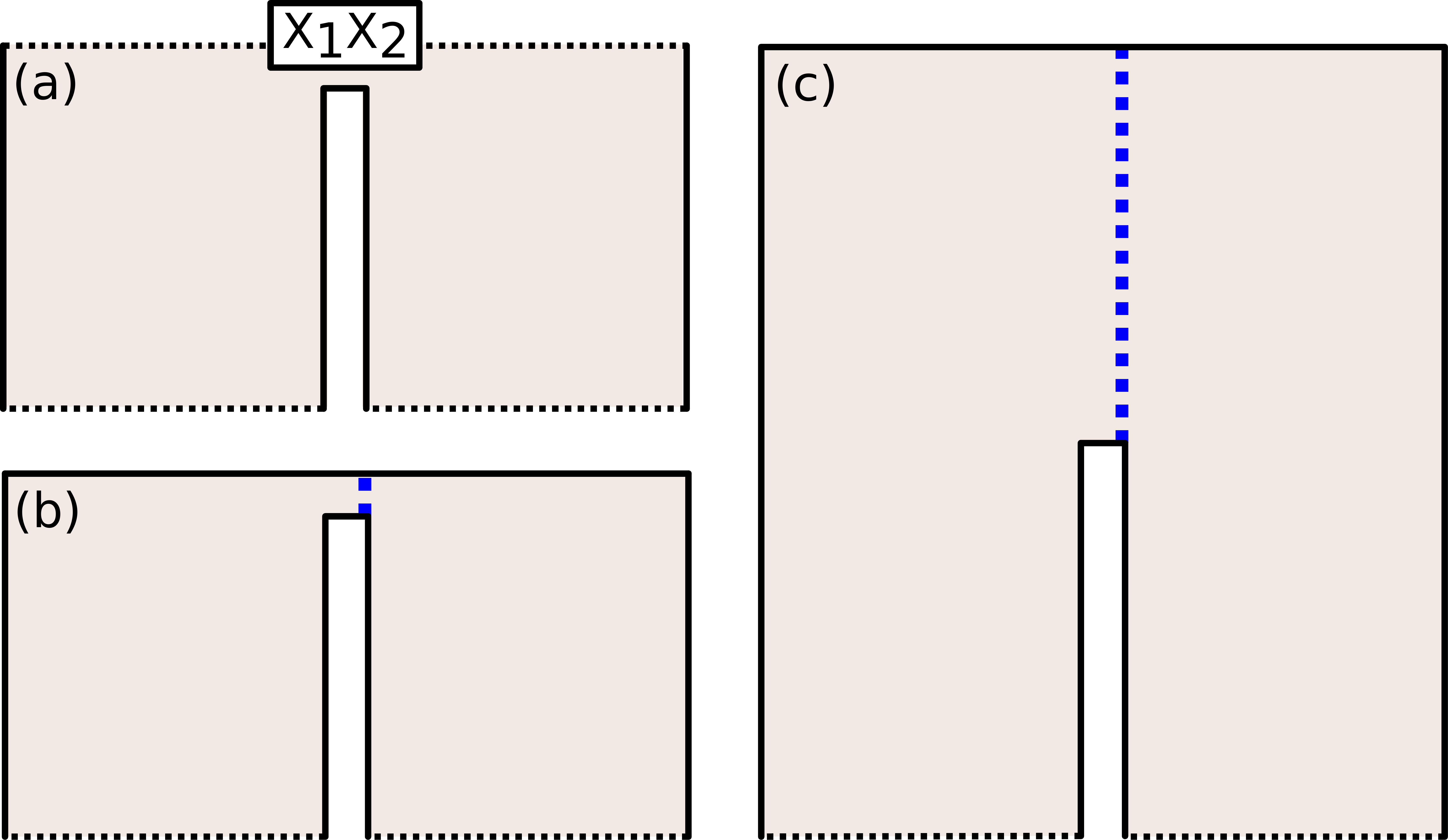}
    \caption{\textbf{Distance of $\mathcal{C}_{\rm merged}$.} \textbf{(a)} Suppose we have a code created by adjoining two surface code patches at the corner and we wish to measure the logical operator $X_1X_2$ along the top boundary. In \textbf{(b)} the ancilla system used for the measurement only has one layer, creating a low weight logical between the top and bottom smooth boundaries. To mitigate this we must use an ancilla with $d$ layers, as in \textbf{(c)}. This ensures that the distance of the code is preserved during the merge. It is worth noting that it may not always be the case that $d$ layers are needed to preserve the distance, and for certain codes it may suffice to only use one layer.}
    \label{fig:distance_example}
\end{figure}

Measuring a logical operator in $\mathcal{C}$ may introduce \rev{gauge} degrees of freedom for which we must account. Suppose we measure a logical operator $\tilde{L}$ in our code. We construct the ancilla system using the primal and dual graphs $\mathcal{G}_{\tilde{L}}[j]$ $\mathcal{G}^T_{\tilde{L}}[j]$ and merge them together with $\mathcal{C}$ to create $\mathcal{C}_{\text{merged}}$. If the number of independent check nodes in $\mathcal{G}^T_{\tilde{L}}$ is less than or equal to the number of variable nodes then $\mathcal{G}^T_{\tilde{L}}$ will have non-trivial degrees of freedom and making the measurement of $\tilde{L}$ will add extra \rev{gauge} \rev{qubits} to our code. \rev{In the case of an $\tilde{X}$ measurement, we can choose a canonical set of $Z$ gauge operators that are entirely contained in any dual layer of the ancilla system. Suppose that $r=1$ and $\mathcal{G}^T_{\tilde{L}}[1]$ contains $n'$ variable nodes and $m'$ check nodes. Then if we interpret $\mathcal{G}^T_{\tilde{L}}[1]$ as a classical code it contains at least $n'-m'$ logical bits. This is a lower bound since some of the checks in $\mathcal{G}^T_{\tilde{L}}[1]$ may be a linear combination of other checks. However in our case there are exactly $n'-m'+1$ logical bits in $\mathcal{G}^T_{\tilde{L}}[1]$. This is due to the fact that if there were a subset of check nodes in $\mathcal{G}^T_{\tilde{L}}[1]$ whose product gives the identity, then the equivalent qubits in $\tilde{L}$ would be a logical operator, and we have enforced the requirement that $\tilde{L}$ contains no subsets that support a logical operator. We will call the logical operators of $\mathcal{G}^T_{\tilde{L}}[1]$ cycle operators. However, we stress that the exact structure of these operators will not be important. Now when we create $\mathcal{G}_{\text{merged}}$ with $r=1$ we add $n'$ qubits, $m'$ stabilizer generators, and remove one logical qubit, thus adding at least $n'-m'+1$ new gauge qubits. However the same constraint for $\mathcal{G}^T_{\tilde{L}}[1]$ carries to $\mathcal{G}_{\text{merged}}$ and hence there are exactly $n'-m'+1$ new gauge qubits, and we can choose the canonical set of $Z$ gauge operators to be the logical operators of $\mathcal{G}^T_{\tilde{L}}[1]$.} These define all of the \rev{gauge} degrees of freedom.

The \rev{cycle operators} on $\mathcal{G}^T_{\tilde{L}}$ are independent of the original logical operators on $\mathcal{C}$, and correspond to stabilizers of $\mathcal{C}$. Each qubit in a cycle in $\mathcal{G}^T_{\tilde{L}}[1]$ has a corresponding stabilizer generator in $\mathcal{G}_{\tilde{L}}[1]$. By applying these generators we can \rev{clean} the \rev{cycle operator} so that it is supported entirely on $\mathcal{C}$. In this case we can see that the \rev{cycle operator} is thus equivalent to a product of old stabilizers in $\mathcal{G}_{\tilde{L}}[1]$. 

A cycle gauge operator in the bottom layer $\mathcal{G}^T_{\tilde{L}}[1]$ can be cleaned to other layers $\mathcal{G}^T_{\tilde{L}}[k]$ through the application of stabilizer generators in $\mathcal{G}_{\tilde{L}}[k]$ and so equivalent cycles in different layers of $\mathcal{G}^T_{\tilde{L}}[k]$ correspond to the same gauge qubit.

We illustrate this in Fig. 7, where we show the measurement of an $X$ logical operator. Here we can construct a \rev{canonical set of logical operators} of $\mathcal{G}^T_{\tilde{L}}[1]$ with two cycles and so there are two gauge qubits. In this case the cycles correspond to $Z$ gauge operators. The following lemma tells us about the nature of the $X$ gauge operators.

\begin{lemma} \label{lemma:gauge}
Let $\mathcal{C}_{\rm merged}$ be the code obtained after measuring an $\tilde{X}$ logical operator. Let $\tilde{Z}_g$ be a $Z$ gauge operator in $\mathcal{C}_{\textrm{merged}}$. If a logical or gauge operator $\tilde{O}$ anti-commutes with $\tilde{Z}_g$ then it has weight at least $r$.
\end{lemma}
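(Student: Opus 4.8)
The plan is to exploit the layered structure of the ancilla system together with the cleaning property established immediately before the lemma. Recall that a single $Z$ gauge operator $\tilde{Z}_g$ is a cycle operator that admits an equivalent representative supported entirely within each dual layer $\mathcal{G}^T_{\tilde{X}}[k]$, for every $1 \le k \le r$, where any two such representatives differ only by a product of the primal ($Z$-type) stabilizer generators of the layers $\mathcal{G}_{\tilde{X}}[k]$. I would denote these representatives $\tilde{Z}_g^{(k)}$ and record the key geometric fact that their supports $W_k \subseteq V^T[k]$ lie in pairwise disjoint sets of physical qubits, since distinct dual layers share no variable nodes.

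First I would observe that $\tilde{O}$, being a logical or gauge operator of $\mathcal{C}_{\textrm{merged}}$, commutes with every element of the stabilizer group. Because each $\tilde{Z}_g^{(k)}$ differs from $\tilde{Z}_g$ only by such stabilizers, the commutation relation between $\tilde{O}$ and $\tilde{Z}_g^{(k)}$ is identical to that between $\tilde{O}$ and $\tilde{Z}_g$. Hence the hypothesis that $\tilde{O}$ anti-commutes with $\tilde{Z}_g$ propagates to all representatives, so $\tilde{O}$ anti-commutes with $\tilde{Z}_g^{(k)}$ for every $k$.

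Next I would convert anti-commutation into a support statement. Since each $\tilde{Z}_g^{(k)}$ is a pure $Z$-type operator, $\tilde{O}$ anti-commutes with it precisely when $\tilde{O}$ acts with an $X$ or $Y$ on an odd number of the qubits of $W_k$; in particular this number is at least one. Thus $\tilde{O}$ has non-identity support on at least one qubit lying in the dual layer $\mathcal{G}^T_{\tilde{X}}[k]$, for each $k$. As the layers are disjoint, these are $r$ distinct qubits, and therefore $\mathrm{wt}(\tilde{O}) \ge r$.

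The step I expect to carry the real content is the first one: justifying that a single gauge qubit can be cleaned into each of the $r$ dual layers with mutually disjoint support. This is exactly the cleaning observation stated before the lemma (the cycle in $\mathcal{G}^T_{\tilde{X}}[1]$ can be moved into $\mathcal{G}^T_{\tilde{X}}[k]$ by applying the primal $Z$-generators of $\mathcal{G}_{\tilde{X}}[k]$), so I would lean on it directly, while the remaining commutation and counting steps are routine. The only minor point to verify is that the representative on each layer is genuinely non-trivial, which follows because the cycle operators are non-trivial logical operators of the classical code defined on each $\mathcal{G}^T_{\tilde{X}}[k]$, guaranteeing that $W_k \neq \emptyset$ for all $k$.
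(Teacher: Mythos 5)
Your proof is correct and follows essentially the same route as the paper's: clean $\tilde{Z}_g$ into each of the $r$ pairwise-disjoint dual layers, use the fact that anti-commutation is preserved under multiplication by stabilizers, and conclude that $\tilde{O}$ must have support in every layer. You merely make explicit two steps the paper leaves implicit (the propagation of anti-commutation to every representative, and the non-triviality of each layer's representative), which is a faithful filling-in rather than a different argument.
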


\begin{proof}
If $\tilde{O}$ anti-commutes with $\tilde{Z}_g$ then it must intersect every possible support of $\tilde{Z}_g$. There are equivalent cycles all representing $\tilde{Z}_g$ on each dual layer of $\mathcal{G}_{\textrm{anc}}$. These equivalent cycles are disjoint and so the $X$ gauge operator must have support on every dual layer. Since there are $r$ dual layers, $\tilde{O}$ must have weight at least $r$, and furthermore must have weight at least one on each dual layer of $\mathcal{G}_{\textrm{anc}}$.
\end{proof}

We can thus think of the $\tilde{X}$ gauge operators as `strings' that travel from the top boundary $\mathcal{G}^T_{\tilde{L}}[r]$ to the bottom of $\mathcal{G}^T_{\tilde{L}}[1]$ and terminate in $\mathcal{C}$. We now show that $\mathcal{C}_{\text{merged}}$, when treated as a subsystem code, has distance at least $d$.

\begin{theorem}
Let $\mathcal{C} = \llbracket n, k, d \rrbracket$ be a quantum CSS LDPC code and let $\tilde{L}$ be a logical operator in $\mathcal{C}$. Let $\mathcal{C}_{\text{merged}}$ be the code obtained after making a measurement of $\tilde{L}$ in $\mathcal{C}$ using an ancilla system $\mathcal{C}_{\textrm{anc}}$. Then $\mathcal{C}_{\text{merged}}$, when treated as a subsystem code, has distance $\geq d$ as long as the ancilla system has at least $2d-1$ layers.
\end{theorem}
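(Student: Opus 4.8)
The plan is to work directly from the definition of the subsystem-code distance, namely the minimum weight of a \emph{dressed} logical operator---an operator that commutes with every stabilizer of $\mathcal{C}_{\text{merged}}$ but acts nontrivially on one of the genuine (non-gauge) logical qubits. It suffices to treat the canonical case in which $\tilde{L}=\tilde{X}$ is a single $X$ logical operator of $\mathcal{C}$: the $Z$ case is identical after exchanging the roles of $X$ and $Z$, and the two-block, $\tilde{Y}$, and $\tilde{X}_1\tilde{Z}_2$ constructions reduce to this case applied to the fused ancilla, since fusion only merges or glues generators at the seams without introducing a new long direction through the layers. I will use that $2d-1$ layers is equivalent to $r\geq d$, because the ancilla contains $r$ dual layers and $r-1$ primal layers, for $2r-1$ in total. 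The argument is by contradiction: I assume $\bar{P}$ is a nontrivial dressed logical operator with $\mathrm{wt}(\bar{P})<d\leq r$ and aim to conclude that $\bar{P}$ must in fact lie in the gauge group, contradicting the assumption that it is a nontrivial dressed logical.

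The first step removes any layer-traversing component of $\bar{P}$ using Lemma~\ref{lemma:gauge}. Since $\mathrm{wt}(\bar{P})<r$, Lemma~\ref{lemma:gauge} forces $\bar{P}$ to commute with every $Z$ gauge operator, because any operator anticommuting with a $Z$ gauge operator must meet all $r$ disjoint equivalent cycles, one in each dual layer, and hence have weight at least $r$. Consequently $\bar{P}$ carries no net charge across the stack of dual layers: it cannot be (or be dressed to) one of the $X$-type ``strings'' that run from the boundary layer $\mathcal{G}^{T}_{\tilde{L}}[r]$ down into $\mathcal{C}$, since any such operator anticommutes with some $Z$ gauge operator. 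This pins down the only remaining possibility, namely that $\bar{P}$ is equivalent to a genuine logical operator of the original code $\mathcal{C}$.

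The second step is to clean $\bar{P}$ onto the qubits $V$ of $\mathcal{C}$ and compare it with the code distance of $\mathcal{C}$. Here I exploit the layer geometry together with the weight bound: since $\bar{P}$ touches fewer than $d$ qubits, it is supported on fewer than $d$ of the $2d-1$ layers, so at least $d$ layers---and in particular at least one empty dual layer lying above every touched layer---remain completely untouched. Choosing such an empty dual layer as a cut and peeling downward, the part of $\bar{P}$ supported in the ancilla above the cut can be written as a product of ancilla stabilizers and $Z$ gauge (cycle) operators: within each dual layer the local $X$ checks admit only the cycle operators as nontrivial commuting content, which we have already identified with gauge qubits, and the primal-layer $Z$ stabilizers identify equivalent cycles in adjacent dual layers, allowing successive layers to be cleared one at a time. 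Multiplying $\bar{P}$ by these stabilizers and gauge operators produces an equivalent representative supported entirely on $V$; this representative commutes with the stabilizers of $\mathcal{C}$ and is therefore either a stabilizer or a nontrivial logical operator of $\mathcal{C}$. In the former case $\bar{P}$ was a gauge element, contradicting nontriviality; in the latter, the original-code distance gives weight at least $d$.

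The main obstacle is making this cleaning step quantitative: the contradiction requires that confining $\bar{P}$ to $\mathcal{C}$ does not inflate its weight above the threshold, i.e.\ that the peeling operation reaches a representative of weight still below $d$ so that it violates $d(\mathcal{C})=d$, rather than merely producing some heavier logical of $\mathcal{C}$. I expect the decisive input to be that each layer can be cleared using operators supported only within the layers being removed, so that support is transported monotonically toward layer $1$ and into $\mathcal{C}$ without accumulating, which is precisely where the separation of the boundaries by $r\geq d$ layers does the work. A secondary technical point is to carry this layer-peeling argument through uniformly for the fused ancilla systems of the multi-block, $\tilde{Y}$, and $\tilde{X}_1\tilde{Z}_2$ constructions, where the weight-two glued and merged generators locally alter the check structure at the seam and must be checked not to create additional low-weight cleaning paths.
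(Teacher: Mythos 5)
Your contradiction framing (a dressed logical $\bar{P}$ of weight $<d\leq r$ must lie in the gauge group) is a legitimate contrapositive of the theorem, and your Step 1 is sound and matches the paper's use of Lemma~\ref{lemma:gauge}: an operator anticommuting with a $Z$ gauge operator must meet the $r$ disjoint cycle representatives, one per dual layer, so weight $<r$ forces $\bar{P}$ to carry no $X$-gauge (string) content. The genuine gap is exactly where you flag it, in Step 2, and it cannot be deferred: the weight-controlled cleaning you need \emph{is} the content of the theorem, and the mechanism you conjecture for it is the opposite of the true one. Transport across a dual layer can never be done ``without accumulating'': multiplying by any nonempty subset $A^{T}$ of the $X$ checks $C^{T}_{\tilde{L}}[j]$ necessarily leaves nontrivial support on the dual-layer qubits $V^{T}_{\tilde{L}}[j]$, because a subset whose product were trivial there would correspond to a subset $A\subseteq V_{\tilde{L}}[1]$ commuting with every $Z$ check, i.e.\ an $X$ logical (or stabilizer) supported on a strict subset of $\tilde{L}$, which the construction explicitly excludes. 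This unavoidable per-layer footprint is precisely what the paper's proof exploits, but in the reverse direction: starting from a canonical representative supported on $\mathcal{C}$ (weight $\geq d$ there), any multiplication by ancilla $X$ checks that removes intersection qubits either leaves, for each removed qubit, an equivalent qubit in some primal layer, or pushes support through the dual layers leaving one footprint per layer crossed; hence \emph{every} representative has weight $\geq d$. Were your ``monotone, non-accumulating transport toward layer 1'' available, the ancilla would in fact fail to protect the distance --- this is exactly the failure mode of the one-layer ancilla in Fig.~\ref{fig:distance_example}(b). So you never need to prove cleaning does not inflate weight; you need the footprint fact, and once you have it the paper's direct argument closes the proof without your peeling step.

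Two smaller problems: your claim that there is ``an empty dual layer lying above every touched layer'' can fail, since nothing prevents $\bar{P}$ from touching the boundary layer $\mathcal{G}^{T}_{\tilde{L}}[r]$; weight $<d$ only guarantees that \emph{some} dual layer is empty. And the assertion that the part of $\bar{P}$ above such a cut is a product of stabilizers and gauge operators supported above the cut is itself a structural claim requiring proof; proving it again needs the same no-sub-logical footprint fact, at which point the paper's route (treat $Z$ logicals via the unchanged $X$ stabilizers of $\mathcal{C}$, gauge-dressed $X$ logicals via Lemma~\ref{lemma:gauge}, and bare $X$ logicals via the layer-pushing argument) is both shorter and already complete.
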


\begin{proof}
    See 'Materials and methods'
\end{proof}

\begin{figure}
    \centering
    \includegraphics[width=0.45\textwidth]{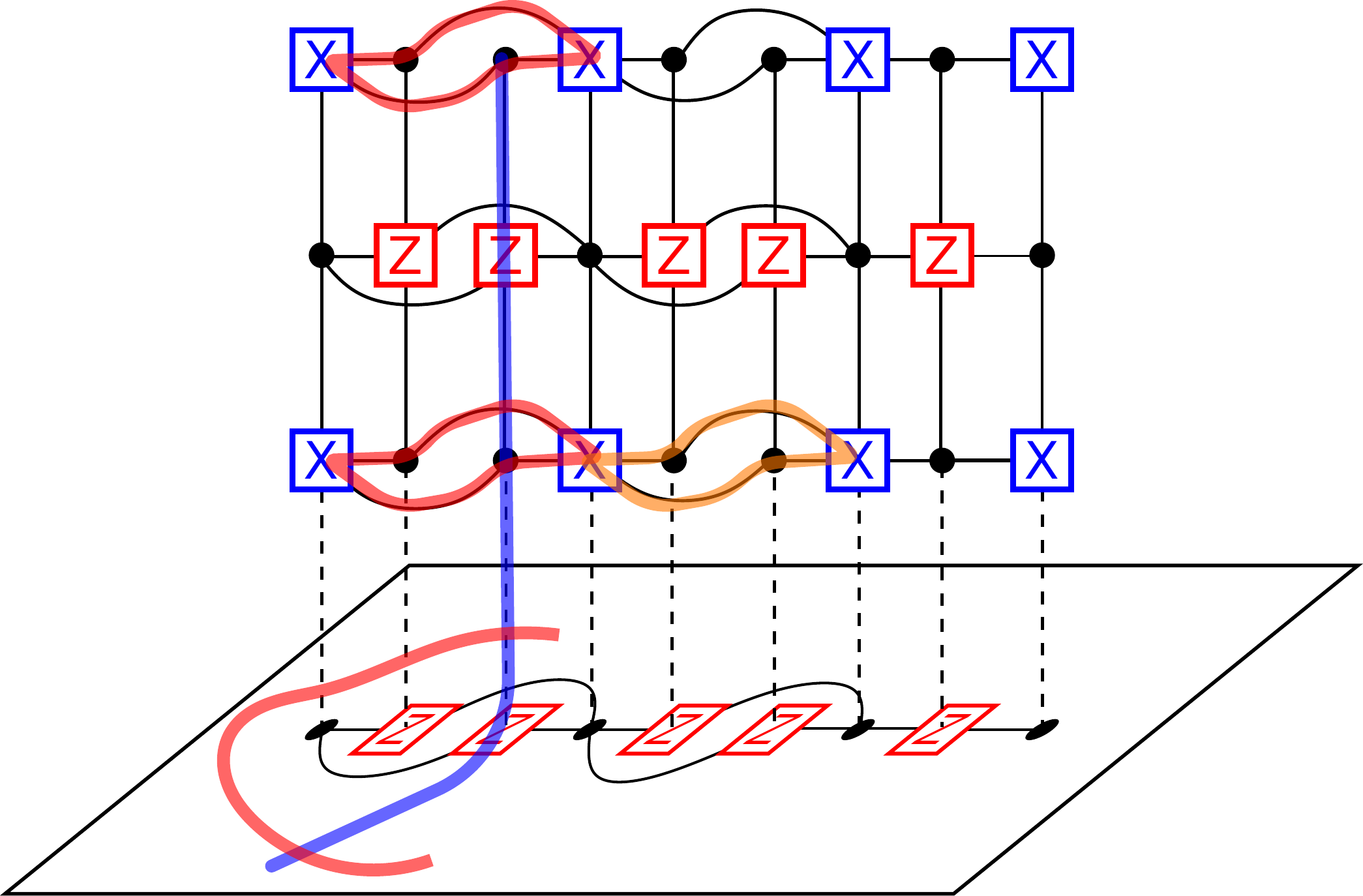}
    \caption{\textbf{Gauge operators of $\mathcal{C}_\textrm{merged}$.} The red and orange cycles in the first layer of the ancilla system give a canonical set for the $Z$ gauge operators. All red cycles correspond to the same gauge operator up to stabilizers. We can see that the $Z$ gauge operators can be \rev{deformed} so that they lie entirely in $\mathcal{C}$, and are equivalent to the product of the original $Z$ stabilizers that are in $\mathcal{G}_{\tilde{X}}$. The blue string is the $X$ gauge operator corresponding to the red cycles. It must anti-commute with any representation of the  conjugate $Z$ gauge operator and so it must intersect with every red gauge operator in the graph. Since the red gauge operator can be \rev{cleaned} to any dual layer of the ancilla system, the blue $X$ logical must have support at least on every dual layers of the ancilla system. Since we use an ancilla system with $d$ dual layers, this means an $X$ gauge operator must have weight at least $d$.}
    \label{fig:gauge}
\end{figure}

\subsection{\label{sec:FTQC}Low-overhead fault-tolerant quantum computation}

The previous section presents our construction for fault-tolerant measurements of logical Pauli operators.  These measurements allow for logical Clifford gates as well as initialization and measurement in the logical Pauli basis.  

Here, we consider what this construction means for performing low-overhead fault-tolerant quantum computing.  We first consider the space- and time-overheads associated with our approach to fault tolerance, with the \emph{parallelism} arising as a key quantity in determining these overheads.  We then discuss the overheads associated with non-Clifford gates implemented via magic states, and discuss the decoders available for quantum LDPC codes.

\subsubsection{\label{sec:space}Ancilla system size}

We now analyze the space overhead associated with making Pauli measurements using ancilla systems. Let $\tilde{L}$ be a logical of weight $w_L$ in a code $\mathcal{C}$ with distance $d$. Again, let $q$ be the maximum number of stabilizer generators connected to a physical qubit in $\mathcal{C}$. Then there are at most $qw_L/2$ check nodes in the graph $\mathcal{G}_{\tilde{L}}$ and to leading order the number of physical qubits in the ancilla system used to measure $\tilde{L}$ is

\begin{equation} \label{eq:size}
    \rev{n_a = (1 + q/2) w_L r} \,.
\end{equation}

\rev{In particular when $r = d$, as in Theorem 1, this means that the ancilla system is proportional to $w_L d$. This makes our scheme particularly applicable to codes such as the hypergraph product code, with distance scaling of $O(\sqrt{n})$.} Note that for the codes considered in Table 1, which are constructed from cyclic classical codes, we can choose a \rev{canonical set of logical operators} so that $w_L = d$ for all of the logical operators in this set. 

\subsubsection{\label{sec:parallel}Parallelism}

Let us now look at how the space and time overheads of our fault-tolerant scheme depend upon the weight of the logical parity measurements. We will see that there is a trade off between the space and the time overheads and that certain codes allow us to improve these overheads when compared to a general LDPC code. 

First note that the techniques for making parity measurements between two logical qubits can easily be generalised to make parity measurements between multiple logical qubits. For instance, suppose we wish to measure $\tilde{X}_1\tilde{X}_2\tilde{Z}_3$. This can be accomplished by first constructing the ancilla systems to measure $\tilde{X}_1\tilde{X}_2$ and $\tilde{Z}_3$ separately. We can then connect these ancilla systems to create the ancilla system to measure $\tilde{X}_1\tilde{X}_2\tilde{Z}_3$. We can then continue this construction to implement Pauli measurements of an arbitrary number of logical qubits. It is important that when we implement this construction we do not connect all the ancilla systems at the same qubit, as otherwise there can potentially be a qubit that is in the support of many stabilizer generators, and $\mathcal{C}_{\text{merged}}$ will not be an LDPC code.

There is however a limit to how many logical qubits we can include in a parity measurement if we want the code rate to remain constant during the deformation. Suppose the code has $\Theta(n)$ logical qubits. Then if we want to make a parity measurement of all $X$ logical operators in the code we would require $\Omega(nd)$ ancilla qubits. If $d > \Omega(1)$ then the number of ancillas we must add is $> \Omega(n)$ and hence the resource cost of computation will diminish the savings made by choosing a finite rate code. From an asymptotic standpoint, this means that the number of logical qubits measured in a single parity measurement should be kept at a constant.  This restriction will limit the number of logic gates that can be performed in parallel, and consequently the time overhead of the computation is increased. In general there is a trade-off between space and time overhead. If we increase the space overhead by adding many ancilla systems then we can decrease the time overhead by allowing very large multi-logical measurements. 

We encapsulate these notions in the \emph{parallelism} of our fault-tolerant quantum computing scheme, which we defined in the summary of results. The parallelism is \emph{not} inherent to a code, and is instead chosen depending on the available space and time overhead. For instance, imagine a quantum computing platform for which the space overhead, i.e., required number of qubits, is the primary constraint. In this case, it will be advantageous to reduce the parallelism at the expense of time. In this space-constrained regime, the most natural choice is to use a parallelism of two, performing a sequence of weight-two logical Pauli measurements, which can be easily converted to an entangling gate.

However, certain codes, e.g., hypergraph product codes constructed from good classical LDPC codes, possess an extra structure that allows us to achieve the same parallelism with a more modest resource. We can lay out a hypergraph product code on a two-dimensional grid so that the support of a canonical logical operator is contained entirely in one row or one column~\cite{Kovalev2013}. Each row and column can support multiple logical qubits and so we can configure an ancilla system for an entire row or column that can measure multiple logical qubits. In the asymptotic limit, each row or column contains $\Theta(\sqrt{n})$ physical qubits and $O(\sqrt{n})$ logical operators and so for a $d = O(\sqrt{n})$ hypergraph product code a single ancilla system of size $O(n)$ can be to used to measure Pauli operators with logical weight $O(\sqrt{n})$. A naive scheme would have yielded an ancilla system of size $O(wd^2 ) = O(n^{3/2})$. As an explicit example, the $\llbracket 7938, 578, 16 \rrbracket$ code from Table 1 can be arranged on a two-dimensional grid such that each row or column contains $63$ physical qubits and $17$ logical operators of the same type. Each qubit is in the support of at most $5$ stabilizer generators and the code has distance $16$. Hence using Eq.~(\ref{eq:size}) we find that we can construct an ancilla system for an entire row or column, and measure up to $17$ logical operators, using around $3528$ physical qubits. Furthermore, in such a layout logical operators of different type only intersect if they correspond to the same logical qubit, and hence if we assume that each logical qubit is only acted upon non-trivially by one logical operator in each round of error correction then we can keep the stabilizer generator weight low by avoiding the construction for the simultaneous measurement of commuting logical operators.

If we wish to be able to measure all the logical operators on the hypergraph product code then we require $O(n^{3/2})$ ancilla qubits. This reduces the rate and distance of the scheme to $O(n^{2/3})$ and $O(n^{1/3})$ respectively. Note that these code parameters cannot be achieved by a strictly local code in two dimensions because such codes necessarily obey the constraint $kd^2 = O(n)$~\cite{Bravyi2010}.

\subsubsection{\label{sec:scheme}Magic states}

With a scheme to perform Pauli measurements on quantum LDPC codes, we consider how this approach can be integrated into a broader scheme for performing universal fault tolerant quantum computing with low overhead. There is one main requirement that we still need in addition to Pauli measurement, which is the distillation of magic states~\cite{Bravyi05}.

As we have already mentioned, Clifford gates alone are not sufficient to perform universal quantum computing, and in addition we require a non-Clifford gate such as a $T$ gate or a CCZ gate. A standard approach to fault-tolerant non-Clifford gates is through injection of magic states. For example, the $T$ gate can be implemented using the magic state $\ket{T}$ and Clifford gates. Unless the code being used has transversal $T$ gates it is generally difficult to prepare these magic states in a fault-tolerant way. For this reason, magic state distillation, which prepare a small number of low error magic states from a larger number of noisy magic states, is often required for fault tolerant architectures. 

Here we assume that all the data qubits are stored on one LDPC code block, such as a hypergraph product code, along with ancilla systems to make Pauli measurements on the data block. To distill magic states we use a separate magic-state factory, and then inject the distilled magic states into our data block using the ancilla systems. We will consider previously designed magic-state factories that use surface codes. As an example computation, suppose we want to perform $10^{10}$ $T$ gates, the number required in~\cite{Gidney21}, with a tolerance of $1\%$, and each noisy $\ket{T}$ has an error of $10^{-3}$. Then we require a distillation scheme with an output error rate of $10^{-12}$. Such a magic-state distillation scheme can be implemented using $\sim 15000$ physical qubits where we do not count the qubits needed for stabilizer measurements. Such a distillation scheme in conjunction with the $\llbracket 7938, 578, 16 \rrbracket$ data block would still render a favourable overhead when compared to a full surface-code scheme. Of course this would render a fairly slow scheme, since we would only be producing one magic state at a time. In the discussion we consider the possibility of further reducing the overhead required for magic-state distillation by using LDPC codes, allowing us to increase the frequency of magic-state production while maintaining low overhead.

\subsubsection{\label{sec:decoding} Decoders}
There has been extensive work in designing efficient algorithms for decoding quantum LDPC codes. 
Many of these adapt known algorithms for classical LDPC codes, modifying them to deal with the nuances of quantum codes. 
One simple decoding algorithm for classical LDPC codes is the bit-flip algorithm. Leverrier, Tillich, and Zemor~\cite{Leverrier15} adapted this decoder for quantum LDPC codes and designed the small-set flip decoder, which was then shown to be able to correct a linear number of errors on quantum LDPC codes with sufficient expansion properties~\cite{Fawzi20}. The predominant algorithm for classical LDPC codes is belief propagation (BP) decoding. BP works by envisioning the code as a graph and transmitting likelihoods between the nodes. While BP works well on classical codes its performance is not as consistent on quantum codes due the degeneracy present in quantum LDPC codes, which results in split \rev{beliefs, where the decoder is not able to choose between two equivalent corrections}. As a result several modifications have been proposed to adapt BP decoding for quantum LDPC codes. In particular, Pantaleev and Kalachev combined BP with ordered statistic decoding (OSD) to design a decoder that appears to perform well on a variety of quantum LDPC codes~\cite{Panteleev19}. Hastings~\cite{Hastings14} created an efficient greedy decoding algorithm that is able to correct a constant number of errors on the hyperbolic surface codes. There is thus a sufficient body of work showing that efficient decoding of quantum LDPC codes is possible. See also~\cite{Breuckmann16,Breuckmann17,Breuckmann20}.
For decoding we will assume the use of one of the decoders outlined above. Given a good choice of code, these decoders appear to have performance comparable or even superior to that of the surface code.

\section{\label{sec:discussion} Discussion}
We have shown that it is possible to use LDPC codes to achieve fault-tolerant quantum computing with overheads favourable to surface code schemes, even at reasonable scales. Our scheme uses a generalized form of lattice surgery, which when coupled with magic-state distillation can implement universal quantum computing.

\rev{Our construction highlights parallelism as a key constraint in maintaining low overheads, and we note that this role of parallelism has also been identified in other schemes for quantum computing using LDPC codes, in particular the scheme by Gottesman~\cite{Gottesman14}. In Ref.~\cite{Gottesman14}, quantum computation proceeds by preparing encoded resource states and then using teleportation to execute quantum gates.  A concatenation scheme is used to prepare arbitrary resource states into the LDPC code blocks, which has a non-negligible overhead, and thus requires a fixed value of the parallelism in order to maintain a constant overhead.  In constrast, our proposal allows for parallelism to increase with distance for certain code families, such as the hypergraph product codes. Furthermore, while the scheme in Ref.~\cite{Gottesman14} has the required asymptotic behavior, concatenation often suffers from poor error thresholds as well as undesirable overheads in the practical regimes of interest. For our proposal, based on code deformation, we can expect similar thresholds to those obtained for LDPC codes used as quantum memories~\cite{tremblay2021constantoverhead}.}

There are two main bottlenecks, in terms of overhead, in our scheme, and we now address how these may be overcome. One of the main contributions to the overhead in our scheme is magic-state distillation. Although we argued that it is possible to use surface-code schemes while still maintaining an overhead advantage, it is generally necessary to use several magic-state factories to achieve a high enough rate of magic-state production. There is a potential for LDPC codes to be used to improve the overhead of magic-state distillation schemes.  We can look to some ideas from surface codes. Reference~\cite{Litinski19} presents several surface-code distillation schemes that achieve low overheads using the following idea. Circuits for magic-state distillation using codes with transversal $T$ gates can be rewritten to use only $Z$ measurements. This means that $Z$ logical errors are far less destructive than $X$ logical errors, and hence it is sufficient to use surface codes with low $d_z$. It is very straightforward to generalize this to LDPC codes such as the hypergraph product code. By taking the product of a high rate, low distance classical LDPC code and a high distance repetition code, we can construct LDPC codes with low $d_z$, high $d_x$, and higher encoding rates than the surface code. While this will not improve the overhead for constructing a single magic state factory, it may allow us to fit several identical factories in the same space as one surface-code factory.

One advantage that such a magic-state scheme would offer is that the hypergraph product code used is only non local in one dimension. This may be a desirable simplification in some quantum systems. In general, a promising line of research is to consider how much we can gain from LDPC codes while restricting the level of non locality allowed. It has been shown that there is a connection between the distance of the code and the connectivity~\cite{Krishna21,delfosse2021bounds}, and that limiting the non locality will limit the attainable distances~\cite{Baspin21}. This may however be a sacrifice that is necessary to see quantum LDPC codes physically implemented.

The other main contribution to the overhead comes from the ancilla systems used to make logical measurements. As we established, in order to maintain the distance of our code during the code deformation, we require that the ancilla systems have a height of at least $d$. Lowering this overhead while maintaining the required distance may be achieved using a technique established in Ref.~\cite{Hastings21}. In this reference Hastings proposes increasing the expansion of the graph $\mathcal{G}_{\tilde{L}}$, and as a result the number of layers in the ancilla system would only need to be constant in size, as opposed to scaling with $d$. In order to implement this approach, we would require a deterministic method for creating graphs with sufficient expansion. While constructions for Ramanujan graphs with high expansion exist, further work is needed to see how these can be integrated into our scheme. We also note that it is unclear that this proposal will offer a significant overhead improvement at the scales that we have considered, and instead give improvements to the resource cost of quantum computing at larger scales. Furthermore, it will be worthwhile to investigate the time-overhead of fault-tolerant gates. The discovery of ancilla systems that enable code deformations by single-shot error correction~\cite{Bombin2015,Fawzi20,delfosse2020short} would permit measurement-based gates in constant time, leaving open the tantalising prospect of fault-tolerant quantum computing with constant-space \emph{and} constant-time overhead. 

\section{Materials and methods}

\begin{figure}
    \centering
    \includegraphics[width=0.45\textwidth]{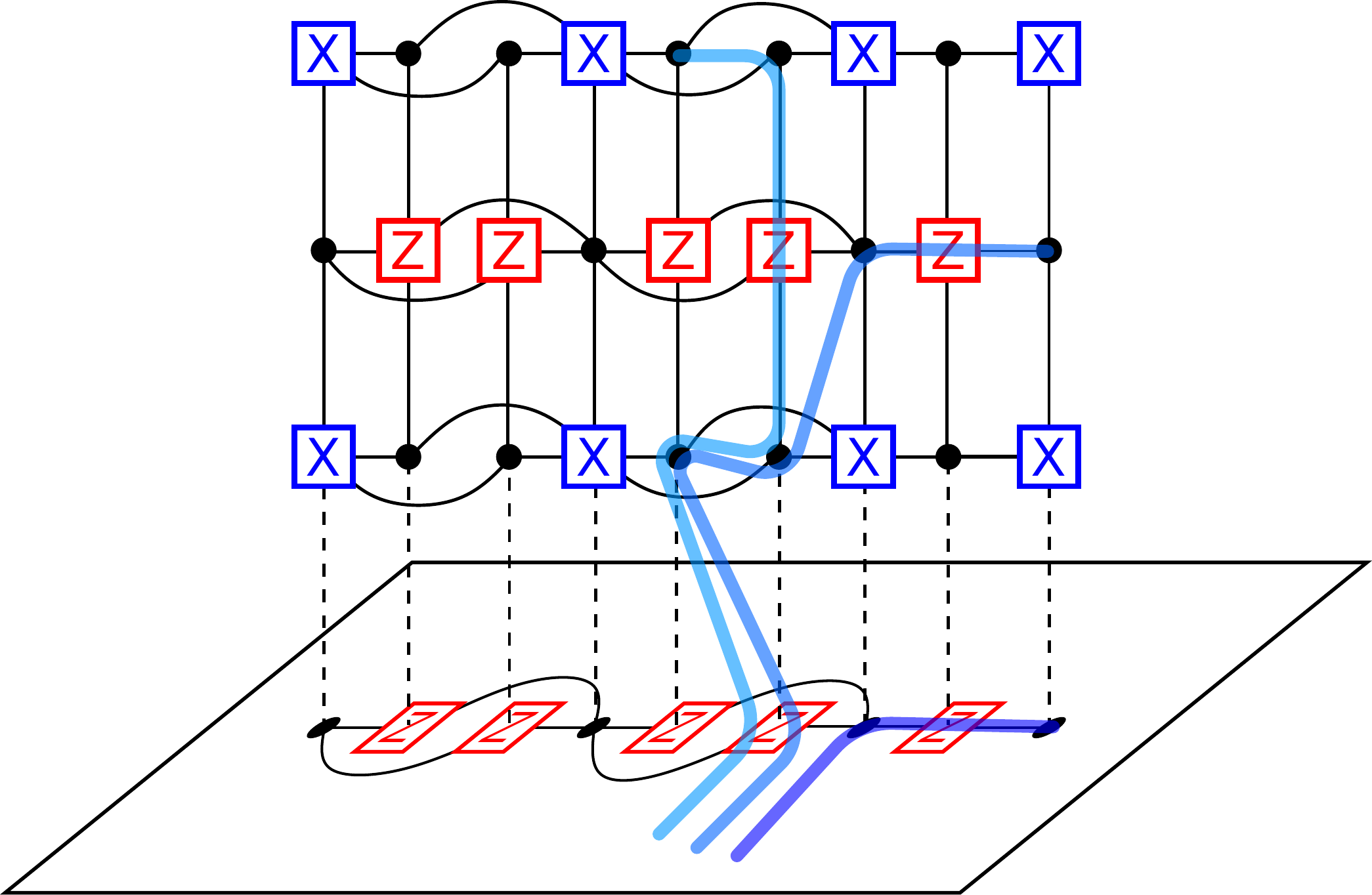}
    \caption{\rev{\textbf{Equivalent logical operators.}  The three curves here all represent equivalent $X$ logical operators. The dark curve is an $X$ logical operator on the original code $\mathcal{C}$. The other two logical operators are obtained by successive application of $X$ stabilizer generators in the dual layers of the ancilla system. In particular, after each application of $X$ stabilizer generators in a dual layer, the logical operator has support on the same dual layer since any subset of $X$ stabilizer generators in a dual layer are independent.}}
    \label{fig:proof}
\end{figure}

We now restate and prove Theorem~1.

\begin{theorem2}
    Let $\mathcal{C} = \llbracket n, k, d \rrbracket$ be a quantum CSS LDPC code and let $\tilde{L}$ be a logical operator in $\mathcal{C}$. Let $\mathcal{C}_{\text{merged}}$ be the code obtained after making a measurement of $\tilde{L}$ in $\mathcal{C}$ using an ancilla system $\mathcal{C}_{\textrm{anc}}$. Then $\mathcal{C}_{\text{merged}}$, when treated as a subsystem code, has distance $\geq d$ as long as the ancilla system has at least $2d-1$ layers.
\end{theorem2}
\begin{proof}

We prove the theorem for the simple case where $\tilde{L}$ is an $X$ logical operator. Let us also assume without loss of generality that $\tilde{L}$ does not contain any $X$ type stabilizers as a subset of $V_{\tilde{L}}[1]$. The proof of more general logical parity measurements can be obtained with the methods we develop here with the most simple case. 

Let $\tilde{Z}$ be a $Z$ logical operator contained entirely in $\mathcal{C}$. First note  that the $X$ stabilizers in $\mathcal{C}$ are left unchanged in $\mathcal{C}_{\textrm{merged}}$ and so $\tilde{Z}$ must still have weight at least $d$ on the physical qubits in $\mathcal{G}_\textrm{merged} \backslash \mathcal{G}_{\textrm{anc}}$. We now need to show that the application of a gauge operator and an arbitrary stabilizer does not reduce the weight of $\tilde{Z}$ below $d$. We are only interested in $Z$ type stabilizers and gauge operators as $X$ type stabilizers and gauge operators will not reduce the weight of a $Z$ logical operator. Let us consider what happens when we apply the $Z$ generators $C[1] \subset \mathcal{G}_{\tilde{L}}[1]$.
These generators are identical to the original generators from $\mathcal{C}$ with the addition of one extra physical qubit from $V^T[1] \subset \mathcal{G}^T_{\tilde{L}}[1]$ in each generator. Hence, after applying these generators \rev{$\tilde{Z}$} will still have weight at least $d$ on the physical qubits of $\mathcal{G}_\textrm{merged} \backslash \mathcal{G}_{\textrm{anc}}$. Stabilizer generators $C[k]$ for $k \ge 2$ can only increase the weight of the logical operator beyond $d$, as they have no common support with $\mathcal{G}_\textrm{merged} \backslash \mathcal{G}_{\textrm{anc}}$.

The $Z$ gauge operators when we measure an $X$ logical operator are cycles in $\mathcal{G}^T_{\tilde{Z}}[1]$. As discussed above, these cycles can be cleaned entirely into $\mathcal{C}$. These gauge operators are stabilizers of $\mathcal{C}$, and since the support of $\tilde{Z}$ in $\mathcal{C}$ remains the same in $\mathcal{C}_{\text{merged}}$, the application of a stabilizer of $\mathcal{C}$ will not reduce the weight of $\tilde{Z}$ below $d$ since the weight of a logical operator is lower bounded by $d$ even with arbitrary stabilizer generators applied.

Let us now check that we do not change the weight of any $X$ logical operators by adding the ancilla system. Let $\tilde{X} \not= \tilde{L}$ be an $X$ logical operator contained entirely in $\mathcal{C}$. We must show that the weight of $\tilde{X}$ remains above $d$ with arbitrary application of $X$-type stabilizers and gauge operators. Let $\tilde{X}_g$ be an arbitrary, non-trivial $X$ gauge operator and let $\tilde{Z}_g$ be any $Z$ gauge operator that anti-commutes with $\tilde{X}_g$. Then $\tilde{X}\tilde{X}_g$ anti-commutes with $Z_g$ and so by Lemma~\ref{lemma:gauge} $\tilde{X} \tilde{X}_g$ has weight at least $d$, even with application of arbitrary stabilizers. 

The last case to consider is when no gauge operators are applied and we apply arbitrary stabilizers. This is only relevant if $\tilde{X}$ intersects with $\tilde{L}$ at some physical qubits. In this case we can apply an $X$ generator in the dual layer $\mathcal{G}^T_{\tilde{L}}[1]$ which will \rev{clean $\tilde{L}$ from the qubits at the intersection of $\tilde{L}$ and $\tilde{X}$ onto the equivalent qubits in the first primal layer of the ancilla system $\mathcal{G}_{\tilde{L}}[2]$}.

Doing this will also create non-trivial support on the first dual layer $\mathcal{G}^T_{\tilde{L}}[1]$, \rev{see Fig. 8}. To see why this is the case suppose there is a subset of $X$ generators in $C^T_{\tilde{L}}[1]$ such that their product gives trivial support on $V^T_{\tilde{L}}[1]$. That is, there is a subset $A^T$ of $C^T_{\tilde{L}}[1]$ such that each qubit in $V^T_{\tilde{L}}[1]$ is connected to an even number of generators in $A^T$. This implies there is a subset $A$ of $V_{\tilde{L}}[1]$ such that each check node in $C_{\tilde{L}}[1]$ is connected to an even number qubits in $A$. This means that if we apply $X$ to each physical qubit in $A$ to form the operator $\tilde{X}_A$ then $\tilde{X}_a$ commutes with each $Z$ generator in $C_{\tilde{L}}[1]$ and hence $\tilde{X}_A$ is an $X$ logical operator in $\mathcal{C}$, but we have assumed that $\tilde{L}$ contains no logical operators as a subset and hence there is no such set $A^T$ that gives trivial support on $V^T_{\tilde{L}}[1]$. Note in the general case we can have subsets of $V_{\tilde{L}}[1]$ that support a logical operator, however applying all the equivalent stabilizers in the first dual layer of $\mathcal{G}_{\textrm{anc}}$ will create support on the qubit connecting the two ancilla systems. 

The above argument tells us that \rev{we can clean $\tilde{L}$ from qubits in  $V_{\tilde{L}}[1]$, replacing them with qubits in $V_{\tilde{L}}[2]$} while creating non-trivial support on $V_{\tilde{L}}^T[1]$. Now we can continue this process to \rev{clean $\tilde{L}$ from qubits in $V_{\tilde{L}}[2]$ to qubits in $V_{\tilde{L}}[3]$ in the support of $\tilde{L}$} while creating non-trivial support on $V^T[2]$, and so on. For each qubit in the intersection of $\tilde{X}$ and $\tilde{L}$ that we \rev{cleaned from $\tilde{L}$}, there will always be an equivalent qubit in the primal layer $\mathcal{G}_{\tilde{L}}[j]$, until we reach the top boundary, at which point there will be support on at least one physical qubit in $V_{\tilde{L}}[j]$, and so the weight of $\tilde{X}$ will always be at least $d$. 
\end{proof}

\begin{acknowledgements}
We gratefully acknowledge David Poulin for early discussions on using quantum LDPC codes for computation.  We thank Andre Saraiva, Michael Vasmer, and Paul Webster for discussions and comments on the manuscript.
This work is supported by the Australian Research Council via the Centre of Excellence in Engineered Quantum Systems (EQUS) project number CE170100009, and by the ARO under Grant Number: W911NF-21-1-0007.  Competing Interests: The authors declare that they have no competing interests.
Author contributions: All authors, LZC, IHK, SDB and BJB conceived of the methodology and derived the main results. All authors contributed to the writing of the manuscript. 
Data availability: All data needed to evaluate the conclusions of this work are present in the paper.
\end{acknowledgements}

\bibstyle{plain}
\bibliography{apssamp}

\providecommand{\noopsort}[1]{}\providecommand{\singleletter}[1]{#1}%
\begin{thebibliography}{68}%
\makeatletter
\providecommand \@ifxundefined [1]{%
 \@ifx{#1\undefined}
}%
\providecommand \@ifnum [1]{%
 \ifnum #1\expandafter \@firstoftwo
 \else \expandafter \@secondoftwo
 \fi
}%
\providecommand \@ifx [1]{%
 \ifx #1\expandafter \@firstoftwo
 \else \expandafter \@secondoftwo
 \fi
}%
\providecommand \natexlab [1]{#1}%
\providecommand \enquote  [1]{``#1''}%
\providecommand \bibnamefont  [1]{#1}%
\providecommand \bibfnamefont [1]{#1}%
\providecommand \citenamefont [1]{#1}%
\providecommand \href@noop [0]{\@secondoftwo}%
\providecommand \href [0]{\begingroup \@sanitize@url \@href}%
\providecommand \@href[1]{\@@startlink{#1}\@@href}%
\providecommand \@@href[1]{\endgroup#1\@@endlink}%
\providecommand \@sanitize@url [0]{\catcode `\\12\catcode `\$12\catcode
  `\&12\catcode `\#12\catcode `\^12\catcode `\_12\catcode `\%12\relax}%
\providecommand \@@startlink[1]{}%
\providecommand \@@endlink[0]{}%
\providecommand \url  [0]{\begingroup\@sanitize@url \@url }%
\providecommand \@url [1]{\endgroup\@href {#1}{\urlprefix }}%
\providecommand \urlprefix  [0]{URL }%
\providecommand \Eprint [0]{\href }%
\providecommand \doibase [0]{http://dx.doi.org/}%
\providecommand \selectlanguage [0]{\@gobble}%
\providecommand \bibinfo  [0]{\@secondoftwo}%
\providecommand \bibfield  [0]{\@secondoftwo}%
\providecommand \translation [1]{[#1]}%
\providecommand \BibitemOpen [0]{}%
\providecommand \bibitemStop [0]{}%
\providecommand \bibitemNoStop [0]{.\EOS\space}%
\providecommand \EOS [0]{\spacefactor3000\relax}%
\providecommand \BibitemShut  [1]{\csname bibitem#1\endcsname}%
\let\auto@bib@innerbib\@empty
\bibitem [{\citenamefont {Arute}\ \emph {et~al.}(2019)\citenamefont {Arute},
  \citenamefont {Arya}, \citenamefont {Babbush}, \citenamefont {Bacon},
  \citenamefont {Bardin}, \citenamefont {Barends}, \citenamefont {Biswas},
  \citenamefont {Boixo}, \citenamefont {Brandao}, \citenamefont {Buell},
  \citenamefont {Burkett}, \citenamefont {Chen}, \citenamefont {Chen},
  \citenamefont {Chiaro}, \citenamefont {Collins}, \citenamefont {Courtney},
  \citenamefont {Dunsworth}, \citenamefont {Farhi}, \citenamefont {Foxen},
  \citenamefont {Fowler}, \citenamefont {Gidney}, \citenamefont {Giustina},
  \citenamefont {Graff}, \citenamefont {Guerin}, \citenamefont {Habegger},
  \citenamefont {Harrigan}, \citenamefont {Hartmann}, \citenamefont {Ho},
  \citenamefont {Hoffmann}, \citenamefont {Huang}, \citenamefont {Humble},
  \citenamefont {Isakov}, \citenamefont {Jeffrey}, \citenamefont {Jiang},
  \citenamefont {Kafri}, \citenamefont {Kechedzhi}, \citenamefont {Kelly},
  \citenamefont {Klimov}, \citenamefont {Knysh}, \citenamefont {Korotkov},
  \citenamefont {Kostritsa}, \citenamefont {Landhuis}, \citenamefont
  {Lindmark}, \citenamefont {Lucero}, \citenamefont {Lyakh}, \citenamefont
  {Mandr{\`a}}, \citenamefont {McClean}, \citenamefont {McEwen}, \citenamefont
  {Megrant}, \citenamefont {Mi}, \citenamefont {Michielsen}, \citenamefont
  {Mohseni}, \citenamefont {Mutus}, \citenamefont {Naaman}, \citenamefont
  {Neeley}, \citenamefont {Neill}, \citenamefont {Niu}, \citenamefont {Ostby},
  \citenamefont {Petukhov}, \citenamefont {Platt}, \citenamefont {Quintana},
  \citenamefont {Rieffel}, \citenamefont {Roushan}, \citenamefont {Rubin},
  \citenamefont {Sank}, \citenamefont {Satzinger}, \citenamefont {Smelyanskiy},
  \citenamefont {Sung}, \citenamefont {Trevithick}, \citenamefont
  {Vainsencher}, \citenamefont {Villalonga}, \citenamefont {White},
  \citenamefont {Yao}, \citenamefont {Yeh}, \citenamefont {Zalcman},
  \citenamefont {Neven},\ and\ \citenamefont {Martinis}}]{arute2019quantum}%
  \BibitemOpen
  \bibfield  {author} {\bibinfo {author} {\bibfnamefont {Frank}\ \bibnamefont
  {Arute}}, \bibinfo {author} {\bibfnamefont {Kunal}\ \bibnamefont {Arya}},
  \bibinfo {author} {\bibfnamefont {Ryan}\ \bibnamefont {Babbush}}, \bibinfo
  {author} {\bibfnamefont {Dave}\ \bibnamefont {Bacon}}, \bibinfo {author}
  {\bibfnamefont {Joseph~C.}\ \bibnamefont {Bardin}}, \bibinfo {author}
  {\bibfnamefont {Rami}\ \bibnamefont {Barends}}, \bibinfo {author}
  {\bibfnamefont {Rupak}\ \bibnamefont {Biswas}}, \bibinfo {author}
  {\bibfnamefont {Sergio}\ \bibnamefont {Boixo}}, \bibinfo {author}
  {\bibfnamefont {Fernando G. S.~L.}\ \bibnamefont {Brandao}}, \bibinfo
  {author} {\bibfnamefont {David~A.}\ \bibnamefont {Buell}}, \bibinfo {author}
  {\bibfnamefont {Brian}\ \bibnamefont {Burkett}}, \bibinfo {author}
  {\bibfnamefont {Yu}~\bibnamefont {Chen}}, \bibinfo {author} {\bibfnamefont
  {Zijun}\ \bibnamefont {Chen}}, \bibinfo {author} {\bibfnamefont {Ben}\
  \bibnamefont {Chiaro}}, \bibinfo {author} {\bibfnamefont {Roberto}\
  \bibnamefont {Collins}}, \bibinfo {author} {\bibfnamefont {William}\
  \bibnamefont {Courtney}}, \bibinfo {author} {\bibfnamefont {Andrew}\
  \bibnamefont {Dunsworth}}, \bibinfo {author} {\bibfnamefont {Edward}\
  \bibnamefont {Farhi}}, \bibinfo {author} {\bibfnamefont {Brooks}\
  \bibnamefont {Foxen}}, \bibinfo {author} {\bibfnamefont {Austin}\
  \bibnamefont {Fowler}}, \bibinfo {author} {\bibfnamefont {Craig}\
  \bibnamefont {Gidney}}, \bibinfo {author} {\bibfnamefont {Marissa}\
  \bibnamefont {Giustina}}, \bibinfo {author} {\bibfnamefont {Rob}\
  \bibnamefont {Graff}}, \bibinfo {author} {\bibfnamefont {Keith}\ \bibnamefont
  {Guerin}}, \bibinfo {author} {\bibfnamefont {Steve}\ \bibnamefont
  {Habegger}}, \bibinfo {author} {\bibfnamefont {Matthew~P.}\ \bibnamefont
  {Harrigan}}, \bibinfo {author} {\bibfnamefont {Michael~J.}\ \bibnamefont
  {Hartmann}}, \bibinfo {author} {\bibfnamefont {Alan}\ \bibnamefont {Ho}},
  \bibinfo {author} {\bibfnamefont {Markus}\ \bibnamefont {Hoffmann}}, \bibinfo
  {author} {\bibfnamefont {Trent}\ \bibnamefont {Huang}}, \bibinfo {author}
  {\bibfnamefont {Travis~S.}\ \bibnamefont {Humble}}, \bibinfo {author}
  {\bibfnamefont {Sergei~V.}\ \bibnamefont {Isakov}}, \bibinfo {author}
  {\bibfnamefont {Evan}\ \bibnamefont {Jeffrey}}, \bibinfo {author}
  {\bibfnamefont {Zhang}\ \bibnamefont {Jiang}}, \bibinfo {author}
  {\bibfnamefont {Dvir}\ \bibnamefont {Kafri}}, \bibinfo {author}
  {\bibfnamefont {Kostyantyn}\ \bibnamefont {Kechedzhi}}, \bibinfo {author}
  {\bibfnamefont {Julian}\ \bibnamefont {Kelly}}, \bibinfo {author}
  {\bibfnamefont {Paul~V.}\ \bibnamefont {Klimov}}, \bibinfo {author}
  {\bibfnamefont {Sergey}\ \bibnamefont {Knysh}}, \bibinfo {author}
  {\bibfnamefont {Alexander}\ \bibnamefont {Korotkov}}, \bibinfo {author}
  {\bibfnamefont {Fedor}\ \bibnamefont {Kostritsa}}, \bibinfo {author}
  {\bibfnamefont {David}\ \bibnamefont {Landhuis}}, \bibinfo {author}
  {\bibfnamefont {Mike}\ \bibnamefont {Lindmark}}, \bibinfo {author}
  {\bibfnamefont {Erik}\ \bibnamefont {Lucero}}, \bibinfo {author}
  {\bibfnamefont {Dmitry}\ \bibnamefont {Lyakh}}, \bibinfo {author}
  {\bibfnamefont {Salvatore}\ \bibnamefont {Mandr{\`a}}}, \bibinfo {author}
  {\bibfnamefont {Jarrod~R.}\ \bibnamefont {McClean}}, \bibinfo {author}
  {\bibfnamefont {Matthew}\ \bibnamefont {McEwen}}, \bibinfo {author}
  {\bibfnamefont {Anthony}\ \bibnamefont {Megrant}}, \bibinfo {author}
  {\bibfnamefont {Xiao}\ \bibnamefont {Mi}}, \bibinfo {author} {\bibfnamefont
  {Kristel}\ \bibnamefont {Michielsen}}, \bibinfo {author} {\bibfnamefont
  {Masoud}\ \bibnamefont {Mohseni}}, \bibinfo {author} {\bibfnamefont {Josh}\
  \bibnamefont {Mutus}}, \bibinfo {author} {\bibfnamefont {Ofer}\ \bibnamefont
  {Naaman}}, \bibinfo {author} {\bibfnamefont {Matthew}\ \bibnamefont
  {Neeley}}, \bibinfo {author} {\bibfnamefont {Charles}\ \bibnamefont {Neill}},
  \bibinfo {author} {\bibfnamefont {Murphy~Yuezhen}\ \bibnamefont {Niu}},
  \bibinfo {author} {\bibfnamefont {Eric}\ \bibnamefont {Ostby}}, \bibinfo
  {author} {\bibfnamefont {Andre}\ \bibnamefont {Petukhov}}, \bibinfo {author}
  {\bibfnamefont {John~C.}\ \bibnamefont {Platt}}, \bibinfo {author}
  {\bibfnamefont {Chris}\ \bibnamefont {Quintana}}, \bibinfo {author}
  {\bibfnamefont {Eleanor~G.}\ \bibnamefont {Rieffel}}, \bibinfo {author}
  {\bibfnamefont {Pedram}\ \bibnamefont {Roushan}}, \bibinfo {author}
  {\bibfnamefont {Nicholas~C.}\ \bibnamefont {Rubin}}, \bibinfo {author}
  {\bibfnamefont {Daniel}\ \bibnamefont {Sank}}, \bibinfo {author}
  {\bibfnamefont {Kevin~J.}\ \bibnamefont {Satzinger}}, \bibinfo {author}
  {\bibfnamefont {Vadim}\ \bibnamefont {Smelyanskiy}}, \bibinfo {author}
  {\bibfnamefont {Kevin~J.}\ \bibnamefont {Sung}}, \bibinfo {author}
  {\bibfnamefont {Matthew~D.}\ \bibnamefont {Trevithick}}, \bibinfo {author}
  {\bibfnamefont {Amit}\ \bibnamefont {Vainsencher}}, \bibinfo {author}
  {\bibfnamefont {Benjamin}\ \bibnamefont {Villalonga}}, \bibinfo {author}
  {\bibfnamefont {Theodore}\ \bibnamefont {White}}, \bibinfo {author}
  {\bibfnamefont {Z.~Jamie}\ \bibnamefont {Yao}}, \bibinfo {author}
  {\bibfnamefont {Ping}\ \bibnamefont {Yeh}}, \bibinfo {author} {\bibfnamefont
  {Adam}\ \bibnamefont {Zalcman}}, \bibinfo {author} {\bibfnamefont {Hartmut}\
  \bibnamefont {Neven}}, \ and\ \bibinfo {author} {\bibfnamefont {John~M.}\
  \bibnamefont {Martinis}},\ }\bibfield  {title} {\enquote {\bibinfo {title}
  {Quantum supremacy using a programmable superconducting processor},}\ }\href
  {\doibase 10.1038/s41586-019-1666-5} {\bibfield  {journal} {\bibinfo
  {journal} {Nature}\ }\textbf {\bibinfo {volume} {574}},\ \bibinfo {pages}
  {505--510} (\bibinfo {year} {2019})}\BibitemShut {NoStop}%
\bibitem [{\citenamefont {von Burg}\ \emph {et~al.}(2021)\citenamefont {von
  Burg}, \citenamefont {Low}, \citenamefont {H\"aner}, \citenamefont {Steiger},
  \citenamefont {Reiher}, \citenamefont {Roetteler},\ and\ \citenamefont
  {Troyer}}]{vonBurg2021}%
  \BibitemOpen
  \bibfield  {author} {\bibinfo {author} {\bibfnamefont {Vera}\ \bibnamefont
  {von Burg}}, \bibinfo {author} {\bibfnamefont {Guang~Hao}\ \bibnamefont
  {Low}}, \bibinfo {author} {\bibfnamefont {Thomas}\ \bibnamefont {H\"aner}},
  \bibinfo {author} {\bibfnamefont {Damian~S.}\ \bibnamefont {Steiger}},
  \bibinfo {author} {\bibfnamefont {Markus}\ \bibnamefont {Reiher}}, \bibinfo
  {author} {\bibfnamefont {Martin}\ \bibnamefont {Roetteler}}, \ and\ \bibinfo
  {author} {\bibfnamefont {Matthias}\ \bibnamefont {Troyer}},\ }\bibfield
  {title} {\enquote {\bibinfo {title} {Quantum computing enhanced computational
  catalysis},}\ }\href {\doibase 10.1103/PhysRevResearch.3.033055} {\bibfield
  {journal} {\bibinfo  {journal} {Phys. Rev. Research}\ }\textbf {\bibinfo
  {volume} {3}},\ \bibinfo {pages} {033055} (\bibinfo {year}
  {2021})}\BibitemShut {NoStop}%
\bibitem [{\citenamefont {Lee}\ \emph {et~al.}(2021)\citenamefont {Lee},
  \citenamefont {Berry}, \citenamefont {Gidney}, \citenamefont {Huggins},
  \citenamefont {McClean}, \citenamefont {Wiebe},\ and\ \citenamefont
  {Babbush}}]{Lee2020}%
  \BibitemOpen
  \bibfield  {author} {\bibinfo {author} {\bibfnamefont {Joonho}\ \bibnamefont
  {Lee}}, \bibinfo {author} {\bibfnamefont {Dominic~W.}\ \bibnamefont {Berry}},
  \bibinfo {author} {\bibfnamefont {Craig}\ \bibnamefont {Gidney}}, \bibinfo
  {author} {\bibfnamefont {William~J.}\ \bibnamefont {Huggins}}, \bibinfo
  {author} {\bibfnamefont {Jarrod~R.}\ \bibnamefont {McClean}}, \bibinfo
  {author} {\bibfnamefont {Nathan}\ \bibnamefont {Wiebe}}, \ and\ \bibinfo
  {author} {\bibfnamefont {Ryan}\ \bibnamefont {Babbush}},\ }\bibfield  {title}
  {\enquote {\bibinfo {title} {Even more efficient quantum computations of
  chemistry through tensor hypercontraction},}\ }\href {\doibase
  10.1103/PRXQuantum.2.030305} {\bibfield  {journal} {\bibinfo  {journal} {PRX
  Quantum}\ }\textbf {\bibinfo {volume} {2}},\ \bibinfo {pages} {030305}
  (\bibinfo {year} {2021})}\BibitemShut {NoStop}%
\bibitem [{\citenamefont {Kim}\ \emph {et~al.}(2021)\citenamefont {Kim},
  \citenamefont {Lee}, \citenamefont {Liu}, \citenamefont {Pallister},
  \citenamefont {Pol},\ and\ \citenamefont {Roberts}}]{Kim21}%
  \BibitemOpen
  \bibfield  {author} {\bibinfo {author} {\bibfnamefont {Isaac~H.}\
  \bibnamefont {Kim}}, \bibinfo {author} {\bibfnamefont {Eunseok}\ \bibnamefont
  {Lee}}, \bibinfo {author} {\bibfnamefont {Ye-Hua}\ \bibnamefont {Liu}},
  \bibinfo {author} {\bibfnamefont {Sam}\ \bibnamefont {Pallister}}, \bibinfo
  {author} {\bibfnamefont {William}\ \bibnamefont {Pol}}, \ and\ \bibinfo
  {author} {\bibfnamefont {Sam}\ \bibnamefont {Roberts}},\ }\href@noop {}
  {\enquote {\bibinfo {title} {Fault-tolerant resource estimate for quantum
  chemical simulations: Case study on li-ion battery electrolyte molecules},}\
  } (\bibinfo {year} {2021}),\ \Eprint {http://arxiv.org/abs/2104.10653}
  {arXiv:2104.10653 [quant-ph]} \BibitemShut {NoStop}%
\bibitem [{\citenamefont {Fowler}\ \emph {et~al.}(2012)\citenamefont {Fowler},
  \citenamefont {Mariantoni}, \citenamefont {Martinis},\ and\ \citenamefont
  {Cleland}}]{Fowler12}%
  \BibitemOpen
  \bibfield  {author} {\bibinfo {author} {\bibfnamefont {Austin~G.}\
  \bibnamefont {Fowler}}, \bibinfo {author} {\bibfnamefont {Matteo}\
  \bibnamefont {Mariantoni}}, \bibinfo {author} {\bibfnamefont {John~M.}\
  \bibnamefont {Martinis}}, \ and\ \bibinfo {author} {\bibfnamefont
  {Andrew~N.}\ \bibnamefont {Cleland}},\ }\bibfield  {title} {\enquote
  {\bibinfo {title} {Surface codes: Towards practical large-scale quantum
  computation},}\ }\href {\doibase 10.1103/PhysRevA.86.032324} {\bibfield
  {journal} {\bibinfo  {journal} {Phys. Rev. A}\ }\textbf {\bibinfo {volume}
  {86}},\ \bibinfo {pages} {032324} (\bibinfo {year} {2012})}\BibitemShut
  {NoStop}%
\bibitem [{\citenamefont {Gidney}\ and\ \citenamefont
  {Eker{\aa{}}}(2021)}]{Gidney21}%
  \BibitemOpen
  \bibfield  {author} {\bibinfo {author} {\bibfnamefont {Craig}\ \bibnamefont
  {Gidney}}\ and\ \bibinfo {author} {\bibfnamefont {Martin}\ \bibnamefont
  {Eker{\aa{}}}},\ }\bibfield  {title} {\enquote {\bibinfo {title} {How to
  factor 2048 bit {RSA} integers in 8 hours using 20 million noisy qubits},}\
  }\href {\doibase 10.22331/q-2021-04-15-433} {\bibfield  {journal} {\bibinfo
  {journal} {{Quantum}}\ }\textbf {\bibinfo {volume} {5}},\ \bibinfo {pages}
  {433} (\bibinfo {year} {2021})}\BibitemShut {NoStop}%
\bibitem [{\citenamefont {Campbell}\ \emph {et~al.}(2019)\citenamefont
  {Campbell}, \citenamefont {Khurana},\ and\ \citenamefont
  {Montanaro}}]{Campbell19}%
  \BibitemOpen
  \bibfield  {author} {\bibinfo {author} {\bibfnamefont {Earl}\ \bibnamefont
  {Campbell}}, \bibinfo {author} {\bibfnamefont {Ankur}\ \bibnamefont
  {Khurana}}, \ and\ \bibinfo {author} {\bibfnamefont {Ashley}\ \bibnamefont
  {Montanaro}},\ }\bibfield  {title} {\enquote {\bibinfo {title} {Applying
  quantum algorithms to constraint satisfaction problems},}\ }\href {\doibase
  10.22331/q-2019-07-18-167} {\bibfield  {journal} {\bibinfo  {journal}
  {{Quantum}}\ }\textbf {\bibinfo {volume} {3}},\ \bibinfo {pages} {167}
  (\bibinfo {year} {2019})}\BibitemShut {NoStop}%
\bibitem [{\citenamefont {Sanders}\ \emph {et~al.}(2020)\citenamefont
  {Sanders}, \citenamefont {Berry}, \citenamefont {Costa}, \citenamefont
  {Tessler}, \citenamefont {Wiebe}, \citenamefont {Gidney}, \citenamefont
  {Neven},\ and\ \citenamefont {Babbush}}]{Sanders2020}%
  \BibitemOpen
  \bibfield  {author} {\bibinfo {author} {\bibfnamefont {Yuval~R.}\
  \bibnamefont {Sanders}}, \bibinfo {author} {\bibfnamefont {Dominic~W.}\
  \bibnamefont {Berry}}, \bibinfo {author} {\bibfnamefont {Pedro~C.S.}\
  \bibnamefont {Costa}}, \bibinfo {author} {\bibfnamefont {Louis~W.}\
  \bibnamefont {Tessler}}, \bibinfo {author} {\bibfnamefont {Nathan}\
  \bibnamefont {Wiebe}}, \bibinfo {author} {\bibfnamefont {Craig}\ \bibnamefont
  {Gidney}}, \bibinfo {author} {\bibfnamefont {Hartmut}\ \bibnamefont {Neven}},
  \ and\ \bibinfo {author} {\bibfnamefont {Ryan}\ \bibnamefont {Babbush}},\
  }\bibfield  {title} {\enquote {\bibinfo {title} {Compilation of
  fault-tolerant quantum heuristics for combinatorial optimization},}\ }\href
  {\doibase 10.1103/PRXQuantum.1.020312} {\bibfield  {journal} {\bibinfo
  {journal} {PRX Quantum}\ }\textbf {\bibinfo {volume} {1}},\ \bibinfo {pages}
  {020312} (\bibinfo {year} {2020})}\BibitemShut {NoStop}%
\bibitem [{\citenamefont {Dennis}\ \emph {et~al.}(2002)\citenamefont {Dennis},
  \citenamefont {Kitaev}, \citenamefont {Landahl},\ and\ \citenamefont
  {Preskill}}]{Dennis02}%
  \BibitemOpen
  \bibfield  {author} {\bibinfo {author} {\bibfnamefont {Eric}\ \bibnamefont
  {Dennis}}, \bibinfo {author} {\bibfnamefont {Alexei}\ \bibnamefont {Kitaev}},
  \bibinfo {author} {\bibfnamefont {Andrew}\ \bibnamefont {Landahl}}, \ and\
  \bibinfo {author} {\bibfnamefont {John}\ \bibnamefont {Preskill}},\
  }\bibfield  {title} {\enquote {\bibinfo {title} {Topological quantum
  memory},}\ }\href {\doibase 10.1063/1.1499754} {\bibfield  {journal}
  {\bibinfo  {journal} {Journal of Mathematical Physics}\ }\textbf {\bibinfo
  {volume} {43}},\ \bibinfo {pages} {4452--4505} (\bibinfo {year}
  {2002})}\BibitemShut {NoStop}%
\bibitem [{\citenamefont {Bombin}\ and\ \citenamefont
  {Martin-Delgado}(2006)}]{Bombin06}%
  \BibitemOpen
  \bibfield  {author} {\bibinfo {author} {\bibfnamefont {H.}~\bibnamefont
  {Bombin}}\ and\ \bibinfo {author} {\bibfnamefont {M.~A.}\ \bibnamefont
  {Martin-Delgado}},\ }\bibfield  {title} {\enquote {\bibinfo {title}
  {Topological quantum distillation},}\ }\href {\doibase
  10.1103/PhysRevLett.97.180501} {\bibfield  {journal} {\bibinfo  {journal}
  {Phys. Rev. Lett.}\ }\textbf {\bibinfo {volume} {97}},\ \bibinfo {pages}
  {180501} (\bibinfo {year} {2006})}\BibitemShut {NoStop}%
\bibitem [{\citenamefont {Bacon}(2006)}]{Bacon06}%
  \BibitemOpen
  \bibfield  {author} {\bibinfo {author} {\bibfnamefont {Dave}\ \bibnamefont
  {Bacon}},\ }\bibfield  {title} {\enquote {\bibinfo {title} {Operator quantum
  error-correcting subsystems for self-correcting quantum memories},}\ }\href
  {\doibase 10.1103/PhysRevA.73.012340} {\bibfield  {journal} {\bibinfo
  {journal} {Phys. Rev. A}\ }\textbf {\bibinfo {volume} {73}},\ \bibinfo
  {pages} {012340} (\bibinfo {year} {2006})}\BibitemShut {NoStop}%
\bibitem [{\citenamefont {Bonilla~Ataides}\ \emph {et~al.}(2021)\citenamefont
  {Bonilla~Ataides}, \citenamefont {Tuckett}, \citenamefont {Bartlett},
  \citenamefont {Flammia},\ and\ \citenamefont {Brown}}]{BonillaAtaides21}%
  \BibitemOpen
  \bibfield  {author} {\bibinfo {author} {\bibfnamefont {J.~Pablo}\
  \bibnamefont {Bonilla~Ataides}}, \bibinfo {author} {\bibfnamefont {David~K.}\
  \bibnamefont {Tuckett}}, \bibinfo {author} {\bibfnamefont {Stephen~D.}\
  \bibnamefont {Bartlett}}, \bibinfo {author} {\bibfnamefont {Steven~T.}\
  \bibnamefont {Flammia}}, \ and\ \bibinfo {author} {\bibfnamefont
  {Benjamin~J.}\ \bibnamefont {Brown}},\ }\bibfield  {title} {\enquote
  {\bibinfo {title} {The {XZZX} surface code},}\ }\href {\doibase
  10.1038/s41467-021-22274-1} {\bibfield  {journal} {\bibinfo  {journal}
  {Nature Communications}\ }\textbf {\bibinfo {volume} {12}},\ \bibinfo {pages}
  {2172} (\bibinfo {year} {2021})}\BibitemShut {NoStop}%
\bibitem [{\citenamefont {Raussendorf}\ and\ \citenamefont
  {Harrington}(2007)}]{Raussendorf07}%
  \BibitemOpen
  \bibfield  {author} {\bibinfo {author} {\bibfnamefont {Robert}\ \bibnamefont
  {Raussendorf}}\ and\ \bibinfo {author} {\bibfnamefont {Jim}\ \bibnamefont
  {Harrington}},\ }\bibfield  {title} {\enquote {\bibinfo {title}
  {Fault-tolerant quantum computation with high threshold in two dimensions},}\
  }\href {\doibase 10.1103/PhysRevLett.98.190504} {\bibfield  {journal}
  {\bibinfo  {journal} {Phys. Rev. Lett.}\ }\textbf {\bibinfo {volume} {98}},\
  \bibinfo {pages} {190504} (\bibinfo {year} {2007})}\BibitemShut {NoStop}%
\bibitem [{\citenamefont {Magnard}\ \emph {et~al.}(2020)\citenamefont
  {Magnard}, \citenamefont {Storz}, \citenamefont {Kurpiers}, \citenamefont
  {Sch\"ar}, \citenamefont {Marxer}, \citenamefont {L\"utolf}, \citenamefont
  {Walter}, \citenamefont {Besse}, \citenamefont {Gabureac}, \citenamefont
  {Reuer}, \citenamefont {Akin}, \citenamefont {Royer}, \citenamefont {Blais},\
  and\ \citenamefont {Wallraff}}]{Magnard2020}%
  \BibitemOpen
  \bibfield  {author} {\bibinfo {author} {\bibfnamefont {P.}~\bibnamefont
  {Magnard}}, \bibinfo {author} {\bibfnamefont {S.}~\bibnamefont {Storz}},
  \bibinfo {author} {\bibfnamefont {P.}~\bibnamefont {Kurpiers}}, \bibinfo
  {author} {\bibfnamefont {J.}~\bibnamefont {Sch\"ar}}, \bibinfo {author}
  {\bibfnamefont {F.}~\bibnamefont {Marxer}}, \bibinfo {author} {\bibfnamefont
  {J.}~\bibnamefont {L\"utolf}}, \bibinfo {author} {\bibfnamefont
  {T.}~\bibnamefont {Walter}}, \bibinfo {author} {\bibfnamefont {J.-C.}\
  \bibnamefont {Besse}}, \bibinfo {author} {\bibfnamefont {M.}~\bibnamefont
  {Gabureac}}, \bibinfo {author} {\bibfnamefont {K.}~\bibnamefont {Reuer}},
  \bibinfo {author} {\bibfnamefont {A.}~\bibnamefont {Akin}}, \bibinfo {author}
  {\bibfnamefont {B.}~\bibnamefont {Royer}}, \bibinfo {author} {\bibfnamefont
  {A.}~\bibnamefont {Blais}}, \ and\ \bibinfo {author} {\bibfnamefont
  {A.}~\bibnamefont {Wallraff}},\ }\bibfield  {title} {\enquote {\bibinfo
  {title} {Microwave quantum link between superconducting circuits housed in
  spatially separated cryogenic systems},}\ }\href {\doibase
  10.1103/PhysRevLett.125.260502} {\bibfield  {journal} {\bibinfo  {journal}
  {Phys. Rev. Lett.}\ }\textbf {\bibinfo {volume} {125}},\ \bibinfo {pages}
  {260502} (\bibinfo {year} {2020})}\BibitemShut {NoStop}%
\bibitem [{\citenamefont {Mills}\ \emph {et~al.}(2019)\citenamefont {Mills},
  \citenamefont {Zajac}, \citenamefont {Gullans}, \citenamefont {Schupp},
  \citenamefont {Hazard},\ and\ \citenamefont {Petta}}]{Mills2019}%
  \BibitemOpen
  \bibfield  {author} {\bibinfo {author} {\bibfnamefont {A.~R.}\ \bibnamefont
  {Mills}}, \bibinfo {author} {\bibfnamefont {D.~M.}\ \bibnamefont {Zajac}},
  \bibinfo {author} {\bibfnamefont {M.~J.}\ \bibnamefont {Gullans}}, \bibinfo
  {author} {\bibfnamefont {F.~J.}\ \bibnamefont {Schupp}}, \bibinfo {author}
  {\bibfnamefont {T.~M.}\ \bibnamefont {Hazard}}, \ and\ \bibinfo {author}
  {\bibfnamefont {J.~R.}\ \bibnamefont {Petta}},\ }\bibfield  {title} {\enquote
  {\bibinfo {title} {Shuttling a single charge across a one-dimensional array
  of silicon quantum dots},}\ }\href {\doibase 10.1038/s41467-019-08970-z}
  {\bibfield  {journal} {\bibinfo  {journal} {Nature Communications}\ }\textbf
  {\bibinfo {volume} {10}},\ \bibinfo {pages} {1063} (\bibinfo {year}
  {2019})}\BibitemShut {NoStop}%
\bibitem [{\citenamefont {Borjans}\ \emph {et~al.}(2020)\citenamefont
  {Borjans}, \citenamefont {Croot}, \citenamefont {Mi}, \citenamefont
  {Gullans},\ and\ \citenamefont {Petta}}]{Borjans2020}%
  \BibitemOpen
  \bibfield  {author} {\bibinfo {author} {\bibfnamefont {F.}~\bibnamefont
  {Borjans}}, \bibinfo {author} {\bibfnamefont {X.~G.}\ \bibnamefont {Croot}},
  \bibinfo {author} {\bibfnamefont {X.}~\bibnamefont {Mi}}, \bibinfo {author}
  {\bibfnamefont {M.~J.}\ \bibnamefont {Gullans}}, \ and\ \bibinfo {author}
  {\bibfnamefont {J.~R.}\ \bibnamefont {Petta}},\ }\bibfield  {title} {\enquote
  {\bibinfo {title} {Resonant microwave-mediated interactions between distant
  electron spins},}\ }\href {\doibase 10.1038/s41586-019-1867-y} {\bibfield
  {journal} {\bibinfo  {journal} {Nature}\ }\textbf {\bibinfo {volume} {577}},\
  \bibinfo {pages} {195--198} (\bibinfo {year} {2020})}\BibitemShut {NoStop}%
\bibitem [{\citenamefont {Yoneda}\ \emph {et~al.}(2021)\citenamefont {Yoneda},
  \citenamefont {Huang}, \citenamefont {Feng}, \citenamefont {Yang},
  \citenamefont {Chan}, \citenamefont {Tanttu}, \citenamefont {Gilbert},
  \citenamefont {Leon}, \citenamefont {Hudson}, \citenamefont {Itoh},
  \citenamefont {Morello}, \citenamefont {Bartlett}, \citenamefont {Laucht},
  \citenamefont {Saraiva},\ and\ \citenamefont {Dzurak}}]{yoneda2021coherent}%
  \BibitemOpen
  \bibfield  {author} {\bibinfo {author} {\bibfnamefont {J.}~\bibnamefont
  {Yoneda}}, \bibinfo {author} {\bibfnamefont {W.}~\bibnamefont {Huang}},
  \bibinfo {author} {\bibfnamefont {M.}~\bibnamefont {Feng}}, \bibinfo {author}
  {\bibfnamefont {C.~H.}\ \bibnamefont {Yang}}, \bibinfo {author}
  {\bibfnamefont {K.~W.}\ \bibnamefont {Chan}}, \bibinfo {author}
  {\bibfnamefont {T.}~\bibnamefont {Tanttu}}, \bibinfo {author} {\bibfnamefont
  {W.}~\bibnamefont {Gilbert}}, \bibinfo {author} {\bibfnamefont {R.~C.~C.}\
  \bibnamefont {Leon}}, \bibinfo {author} {\bibfnamefont {F.~E.}\ \bibnamefont
  {Hudson}}, \bibinfo {author} {\bibfnamefont {K.~M.}\ \bibnamefont {Itoh}},
  \bibinfo {author} {\bibfnamefont {A.}~\bibnamefont {Morello}}, \bibinfo
  {author} {\bibfnamefont {S.~D.}\ \bibnamefont {Bartlett}}, \bibinfo {author}
  {\bibfnamefont {A.}~\bibnamefont {Laucht}}, \bibinfo {author} {\bibfnamefont
  {A.}~\bibnamefont {Saraiva}}, \ and\ \bibinfo {author} {\bibfnamefont
  {A.~S.}\ \bibnamefont {Dzurak}},\ }\bibfield  {title} {\enquote {\bibinfo
  {title} {Coherent spin qubit transport in silicon},}\ }\href {\doibase
  10.1038/s41467-021-24371-7} {\bibfield  {journal} {\bibinfo  {journal}
  {Nature Communications}\ }\textbf {\bibinfo {volume} {12}},\ \bibinfo {pages}
  {4114} (\bibinfo {year} {2021})}\BibitemShut {NoStop}%
\bibitem [{\citenamefont {Debnath}\ \emph {et~al.}(2016)\citenamefont
  {Debnath}, \citenamefont {Linke}, \citenamefont {Figgatt}, \citenamefont
  {Landsman}, \citenamefont {Wright},\ and\ \citenamefont
  {Monroe}}]{debnath2016demonstration}%
  \BibitemOpen
  \bibfield  {author} {\bibinfo {author} {\bibfnamefont {Shantanu}\
  \bibnamefont {Debnath}}, \bibinfo {author} {\bibfnamefont {Norbert~M}\
  \bibnamefont {Linke}}, \bibinfo {author} {\bibfnamefont {Caroline}\
  \bibnamefont {Figgatt}}, \bibinfo {author} {\bibfnamefont {Kevin~A}\
  \bibnamefont {Landsman}}, \bibinfo {author} {\bibfnamefont {Kevin}\
  \bibnamefont {Wright}}, \ and\ \bibinfo {author} {\bibfnamefont
  {Christopher}\ \bibnamefont {Monroe}},\ }\bibfield  {title} {\enquote
  {\bibinfo {title} {Demonstration of a small programmable quantum computer
  with atomic qubits},}\ }\href@noop {} {\bibfield  {journal} {\bibinfo
  {journal} {Nature}\ }\textbf {\bibinfo {volume} {536}},\ \bibinfo {pages}
  {63--66} (\bibinfo {year} {2016})}\BibitemShut {NoStop}%
\bibitem [{\citenamefont {Pino}\ \emph {et~al.}(2021)\citenamefont {Pino},
  \citenamefont {Dreiling}, \citenamefont {Figgatt}, \citenamefont {Gaebler},
  \citenamefont {Moses}, \citenamefont {Allman}, \citenamefont {Baldwin},
  \citenamefont {Foss-Feig}, \citenamefont {Hayes}, \citenamefont {Mayer},
  \citenamefont {Ryan-Anderson},\ and\ \citenamefont
  {Neyenhuis}}]{pino2021demonstration}%
  \BibitemOpen
  \bibfield  {author} {\bibinfo {author} {\bibfnamefont {J.~M.}\ \bibnamefont
  {Pino}}, \bibinfo {author} {\bibfnamefont {J.~M.}\ \bibnamefont {Dreiling}},
  \bibinfo {author} {\bibfnamefont {C.}~\bibnamefont {Figgatt}}, \bibinfo
  {author} {\bibfnamefont {J.~P.}\ \bibnamefont {Gaebler}}, \bibinfo {author}
  {\bibfnamefont {S.~A.}\ \bibnamefont {Moses}}, \bibinfo {author}
  {\bibfnamefont {M.~S.}\ \bibnamefont {Allman}}, \bibinfo {author}
  {\bibfnamefont {C.~H.}\ \bibnamefont {Baldwin}}, \bibinfo {author}
  {\bibfnamefont {M.}~\bibnamefont {Foss-Feig}}, \bibinfo {author}
  {\bibfnamefont {D.}~\bibnamefont {Hayes}}, \bibinfo {author} {\bibfnamefont
  {K.}~\bibnamefont {Mayer}}, \bibinfo {author} {\bibfnamefont
  {C.}~\bibnamefont {Ryan-Anderson}}, \ and\ \bibinfo {author} {\bibfnamefont
  {B.}~\bibnamefont {Neyenhuis}},\ }\bibfield  {title} {\enquote {\bibinfo
  {title} {Demonstration of the trapped-ion quantum ccd computer
  architecture},}\ }\href {\doibase 10.1038/s41586-021-03318-4} {\bibfield
  {journal} {\bibinfo  {journal} {Nature}\ }\textbf {\bibinfo {volume} {592}},\
  \bibinfo {pages} {209--213} (\bibinfo {year} {2021})}\BibitemShut {NoStop}%
\bibitem [{\citenamefont {Bombin}\ \emph {et~al.}(2021)\citenamefont {Bombin},
  \citenamefont {Kim}, \citenamefont {Litinski}, \citenamefont {Nickerson},
  \citenamefont {Pant}, \citenamefont {Pastawski}, \citenamefont {Roberts},\
  and\ \citenamefont {Rudolph}}]{bombin2021interleaving}%
  \BibitemOpen
  \bibfield  {author} {\bibinfo {author} {\bibfnamefont {Hector}\ \bibnamefont
  {Bombin}}, \bibinfo {author} {\bibfnamefont {Isaac~H}\ \bibnamefont {Kim}},
  \bibinfo {author} {\bibfnamefont {Daniel}\ \bibnamefont {Litinski}}, \bibinfo
  {author} {\bibfnamefont {Naomi}\ \bibnamefont {Nickerson}}, \bibinfo {author}
  {\bibfnamefont {Mihir}\ \bibnamefont {Pant}}, \bibinfo {author}
  {\bibfnamefont {Fernando}\ \bibnamefont {Pastawski}}, \bibinfo {author}
  {\bibfnamefont {Sam}\ \bibnamefont {Roberts}}, \ and\ \bibinfo {author}
  {\bibfnamefont {Terry}\ \bibnamefont {Rudolph}},\ }\bibfield  {title}
  {\enquote {\bibinfo {title} {Interleaving: Modular architectures for
  fault-tolerant photonic quantum computing},}\ }\href@noop {} {\bibfield
  {journal} {\bibinfo  {journal} {arXiv preprint arXiv:2103.08612}\ } (\bibinfo
  {year} {2021})}\BibitemShut {NoStop}%
\bibitem [{\citenamefont {Bartolucci}\ \emph {et~al.}(2021)\citenamefont
  {Bartolucci}, \citenamefont {Birchall}, \citenamefont {Bombin}, \citenamefont
  {Cable}, \citenamefont {Dawson}, \citenamefont {Gimeno-Segovia},
  \citenamefont {Johnston}, \citenamefont {Kieling}, \citenamefont {Nickerson},
  \citenamefont {Pant}, \citenamefont {Pastawski}, \citenamefont {Rudolph},\
  and\ \citenamefont {Sparrow}}]{bartolucci2021fusion}%
  \BibitemOpen
  \bibfield  {author} {\bibinfo {author} {\bibfnamefont {Sara}\ \bibnamefont
  {Bartolucci}}, \bibinfo {author} {\bibfnamefont {Patrick}\ \bibnamefont
  {Birchall}}, \bibinfo {author} {\bibfnamefont {Hector}\ \bibnamefont
  {Bombin}}, \bibinfo {author} {\bibfnamefont {Hugo}\ \bibnamefont {Cable}},
  \bibinfo {author} {\bibfnamefont {Chris}\ \bibnamefont {Dawson}}, \bibinfo
  {author} {\bibfnamefont {Mercedes}\ \bibnamefont {Gimeno-Segovia}}, \bibinfo
  {author} {\bibfnamefont {Eric}\ \bibnamefont {Johnston}}, \bibinfo {author}
  {\bibfnamefont {Konrad}\ \bibnamefont {Kieling}}, \bibinfo {author}
  {\bibfnamefont {Naomi}\ \bibnamefont {Nickerson}}, \bibinfo {author}
  {\bibfnamefont {Mihir}\ \bibnamefont {Pant}}, \bibinfo {author}
  {\bibfnamefont {Fernando}\ \bibnamefont {Pastawski}}, \bibinfo {author}
  {\bibfnamefont {Terry}\ \bibnamefont {Rudolph}}, \ and\ \bibinfo {author}
  {\bibfnamefont {Chris}\ \bibnamefont {Sparrow}},\ }\href@noop {} {\enquote
  {\bibinfo {title} {Fusion-based quantum computation},}\ } (\bibinfo {year}
  {2021}),\ \Eprint {http://arxiv.org/abs/2101.09310} {arXiv:2101.09310
  [quant-ph]} \BibitemShut {NoStop}%
\bibitem [{\citenamefont {Barrett}\ and\ \citenamefont
  {Kok}(2005)}]{Barrett05}%
  \BibitemOpen
  \bibfield  {author} {\bibinfo {author} {\bibfnamefont {Sean~D.}\ \bibnamefont
  {Barrett}}\ and\ \bibinfo {author} {\bibfnamefont {Pieter}\ \bibnamefont
  {Kok}},\ }\bibfield  {title} {\enquote {\bibinfo {title} {Efficient
  high-fidelity quantum computation using matter qubits and linear optics},}\
  }\href {\doibase 10.1103/PhysRevA.71.060310} {\bibfield  {journal} {\bibinfo
  {journal} {Phys. Rev. A}\ }\textbf {\bibinfo {volume} {71}},\ \bibinfo
  {pages} {060310} (\bibinfo {year} {2005})}\BibitemShut {NoStop}%
\bibitem [{\citenamefont {Nickerson}\ \emph {et~al.}(2013)\citenamefont
  {Nickerson}, \citenamefont {Li},\ and\ \citenamefont
  {Benjamin}}]{Nickerson13}%
  \BibitemOpen
  \bibfield  {author} {\bibinfo {author} {\bibfnamefont {Naomi~H.}\
  \bibnamefont {Nickerson}}, \bibinfo {author} {\bibfnamefont {Ying}\
  \bibnamefont {Li}}, \ and\ \bibinfo {author} {\bibfnamefont {Simon~C.}\
  \bibnamefont {Benjamin}},\ }\bibfield  {title} {\enquote {\bibinfo {title}
  {Topological quantum computing with a very noisy network and local error
  rates approaching one percent},}\ }\href {\doibase 10.1038/ncomms2773}
  {\bibfield  {journal} {\bibinfo  {journal} {Nature Communications}\ }\textbf
  {\bibinfo {volume} {4}},\ \bibinfo {pages} {1756} (\bibinfo {year}
  {2013})}\BibitemShut {NoStop}%
\bibitem [{\citenamefont {Kalb}\ \emph {et~al.}(2017)\citenamefont {Kalb},
  \citenamefont {Reiserer}, \citenamefont {Humphreys}, \citenamefont
  {Bakermans}, \citenamefont {Kamerling}, \citenamefont {Nickerson},
  \citenamefont {Benjamin}, \citenamefont {Twitchen}, \citenamefont {Markham},\
  and\ \citenamefont {Hanson}}]{Kalb17}%
  \BibitemOpen
  \bibfield  {author} {\bibinfo {author} {\bibfnamefont {N.}~\bibnamefont
  {Kalb}}, \bibinfo {author} {\bibfnamefont {A.~A.}\ \bibnamefont {Reiserer}},
  \bibinfo {author} {\bibfnamefont {P.~C.}\ \bibnamefont {Humphreys}}, \bibinfo
  {author} {\bibfnamefont {J.~J.~W.}\ \bibnamefont {Bakermans}}, \bibinfo
  {author} {\bibfnamefont {S.~J.}\ \bibnamefont {Kamerling}}, \bibinfo {author}
  {\bibfnamefont {N.~H.}\ \bibnamefont {Nickerson}}, \bibinfo {author}
  {\bibfnamefont {S.~C.}\ \bibnamefont {Benjamin}}, \bibinfo {author}
  {\bibfnamefont {D.~J.}\ \bibnamefont {Twitchen}}, \bibinfo {author}
  {\bibfnamefont {M.}~\bibnamefont {Markham}}, \ and\ \bibinfo {author}
  {\bibfnamefont {R.}~\bibnamefont {Hanson}},\ }\bibfield  {title} {\enquote
  {\bibinfo {title} {Entanglement distillation between solid-state quantum
  network nodes},}\ }\href {\doibase 10.1126/science.aan0070} {\bibfield
  {journal} {\bibinfo  {journal} {Science}\ }\textbf {\bibinfo {volume}
  {356}},\ \bibinfo {pages} {928--932} (\bibinfo {year} {2017})}\BibitemShut
  {NoStop}%
\bibitem [{\citenamefont {Stephenson}\ \emph {et~al.}(2020)\citenamefont
  {Stephenson}, \citenamefont {Nadlinger}, \citenamefont {Nichol},
  \citenamefont {An}, \citenamefont {Drmota}, \citenamefont {Ballance},
  \citenamefont {Thirumalai}, \citenamefont {Goodwin}, \citenamefont {Lucas},\
  and\ \citenamefont {Ballance}}]{Stephenson20}%
  \BibitemOpen
  \bibfield  {author} {\bibinfo {author} {\bibfnamefont {L.~J.}\ \bibnamefont
  {Stephenson}}, \bibinfo {author} {\bibfnamefont {D.~P.}\ \bibnamefont
  {Nadlinger}}, \bibinfo {author} {\bibfnamefont {B.~C.}\ \bibnamefont
  {Nichol}}, \bibinfo {author} {\bibfnamefont {S.}~\bibnamefont {An}}, \bibinfo
  {author} {\bibfnamefont {P.}~\bibnamefont {Drmota}}, \bibinfo {author}
  {\bibfnamefont {T.~G.}\ \bibnamefont {Ballance}}, \bibinfo {author}
  {\bibfnamefont {K.}~\bibnamefont {Thirumalai}}, \bibinfo {author}
  {\bibfnamefont {J.~F.}\ \bibnamefont {Goodwin}}, \bibinfo {author}
  {\bibfnamefont {D.~M.}\ \bibnamefont {Lucas}}, \ and\ \bibinfo {author}
  {\bibfnamefont {C.~J.}\ \bibnamefont {Ballance}},\ }\bibfield  {title}
  {\enquote {\bibinfo {title} {High-rate, high-fidelity entanglement of qubits
  across an elementary quantum network},}\ }\href {\doibase
  10.1103/PhysRevLett.124.110501} {\bibfield  {journal} {\bibinfo  {journal}
  {Phys. Rev. Lett.}\ }\textbf {\bibinfo {volume} {124}},\ \bibinfo {pages}
  {110501} (\bibinfo {year} {2020})}\BibitemShut {NoStop}%
\bibitem [{\citenamefont {Delfosse}\ \emph {et~al.}(2021)\citenamefont
  {Delfosse}, \citenamefont {Beverland},\ and\ \citenamefont
  {Tremblay}}]{delfosse2021bounds}%
  \BibitemOpen
  \bibfield  {author} {\bibinfo {author} {\bibfnamefont {Nicolas}\ \bibnamefont
  {Delfosse}}, \bibinfo {author} {\bibfnamefont {Michael~E.}\ \bibnamefont
  {Beverland}}, \ and\ \bibinfo {author} {\bibfnamefont {Maxime~A.}\
  \bibnamefont {Tremblay}},\ }\href@noop {} {\enquote {\bibinfo {title} {Bounds
  on stabilizer measurement circuits and obstructions to local implementations
  of quantum ldpc codes},}\ } (\bibinfo {year} {2021}),\ \Eprint
  {http://arxiv.org/abs/2109.14599} {arXiv:2109.14599 [quant-ph]} \BibitemShut
  {NoStop}%
\bibitem [{\citenamefont {Tremblay}\ \emph {et~al.}(2021)\citenamefont
  {Tremblay}, \citenamefont {Delfosse},\ and\ \citenamefont
  {Beverland}}]{tremblay2021constantoverhead}%
  \BibitemOpen
  \bibfield  {author} {\bibinfo {author} {\bibfnamefont {Maxime~A.}\
  \bibnamefont {Tremblay}}, \bibinfo {author} {\bibfnamefont {Nicolas}\
  \bibnamefont {Delfosse}}, \ and\ \bibinfo {author} {\bibfnamefont
  {Michael~E.}\ \bibnamefont {Beverland}},\ }\href@noop {} {\enquote {\bibinfo
  {title} {Constant-overhead quantum error correction with thin planar
  connectivity},}\ } (\bibinfo {year} {2021}),\ \Eprint
  {http://arxiv.org/abs/2109.14609} {arXiv:2109.14609 [quant-ph]} \BibitemShut
  {NoStop}%
\bibitem [{\citenamefont {Panteleev}\ and\ \citenamefont
  {Kalachev}(2021)}]{Panteleev19}%
  \BibitemOpen
  \bibfield  {author} {\bibinfo {author} {\bibfnamefont {Pavel}\ \bibnamefont
  {Panteleev}}\ and\ \bibinfo {author} {\bibfnamefont {Gleb}\ \bibnamefont
  {Kalachev}},\ }\bibfield  {title} {\enquote {\bibinfo {title} {Degenerate
  {Q}uantum {LDPC} {C}odes {W}ith {G}ood {F}inite {L}ength {P}erformance},}\
  }\href {\doibase 10.22331/q-2021-11-22-585} {\bibfield  {journal} {\bibinfo
  {journal} {{Quantum}}\ }\textbf {\bibinfo {volume} {5}},\ \bibinfo {pages}
  {585} (\bibinfo {year} {2021})}\BibitemShut {NoStop}%
\bibitem [{\citenamefont {Kovalev}\ and\ \citenamefont
  {Pryadko}(2013)}]{Kovalev2013}%
  \BibitemOpen
  \bibfield  {author} {\bibinfo {author} {\bibfnamefont {Alexey~A.}\
  \bibnamefont {Kovalev}}\ and\ \bibinfo {author} {\bibfnamefont {Leonid~P.}\
  \bibnamefont {Pryadko}},\ }\bibfield  {title} {\enquote {\bibinfo {title}
  {Quantum kronecker sum-product low-density parity-check codes with finite
  rate},}\ }\href {\doibase 10.1103/PhysRevA.88.012311} {\bibfield  {journal}
  {\bibinfo  {journal} {Phys. Rev. A}\ }\textbf {\bibinfo {volume} {88}},\
  \bibinfo {pages} {012311} (\bibinfo {year} {2013})}\BibitemShut {NoStop}%
\bibitem [{\citenamefont {Litinski}(2019)}]{Litinski19}%
  \BibitemOpen
  \bibfield  {author} {\bibinfo {author} {\bibfnamefont {Daniel}\ \bibnamefont
  {Litinski}},\ }\bibfield  {title} {\enquote {\bibinfo {title} {A {G}ame of
  {S}urface {C}odes: {L}arge-{S}cale {Q}uantum {C}omputing with {L}attice
  {S}urgery},}\ }\href {\doibase 10.22331/q-2019-03-05-128} {\bibfield
  {journal} {\bibinfo  {journal} {{Quantum}}\ }\textbf {\bibinfo {volume}
  {3}},\ \bibinfo {pages} {128} (\bibinfo {year} {2019})}\BibitemShut {NoStop}%
\bibitem [{\citenamefont {Breuckmann}\ and\ \citenamefont
  {Eberhardt}(2021{\natexlab{a}})}]{Breuckmann21}%
  \BibitemOpen
  \bibfield  {author} {\bibinfo {author} {\bibfnamefont {Nikolas~P.}\
  \bibnamefont {Breuckmann}}\ and\ \bibinfo {author} {\bibfnamefont
  {Jens~Niklas}\ \bibnamefont {Eberhardt}},\ }\bibfield  {title} {\enquote
  {\bibinfo {title} {Quantum low-density parity-check codes},}\ }\href
  {\doibase 10.1103/PRXQuantum.2.040101} {\bibfield  {journal} {\bibinfo
  {journal} {PRX Quantum}\ }\textbf {\bibinfo {volume} {2}},\ \bibinfo {pages}
  {040101} (\bibinfo {year} {2021}{\natexlab{a}})}\BibitemShut {NoStop}%
\bibitem [{\citenamefont {Gottesman}(2014)}]{Gottesman14}%
  \BibitemOpen
  \bibfield  {author} {\bibinfo {author} {\bibfnamefont {Daniel}\ \bibnamefont
  {Gottesman}},\ }\bibfield  {title} {\enquote {\bibinfo {title}
  {Fault-tolerant quantum computation with constant overhead},}\ }\href@noop {}
  {\bibfield  {journal} {\bibinfo  {journal} {Quantum Info. Comput.}\ }\textbf
  {\bibinfo {volume} {14}},\ \bibinfo {pages} {1338--1372} (\bibinfo {year}
  {2014})}\BibitemShut {NoStop}%
\bibitem [{\citenamefont {Fawzi}\ \emph {et~al.}(2020)\citenamefont {Fawzi},
  \citenamefont {Grospellier},\ and\ \citenamefont {Leverrier}}]{Fawzi20}%
  \BibitemOpen
  \bibfield  {author} {\bibinfo {author} {\bibfnamefont {Omar}\ \bibnamefont
  {Fawzi}}, \bibinfo {author} {\bibfnamefont {Antoine}\ \bibnamefont
  {Grospellier}}, \ and\ \bibinfo {author} {\bibfnamefont {Anthony}\
  \bibnamefont {Leverrier}},\ }\bibfield  {title} {\enquote {\bibinfo {title}
  {Constant overhead quantum fault tolerance with quantum expander codes},}\
  }\href {\doibase 10.1145/3434163} {\bibfield  {journal} {\bibinfo  {journal}
  {Commun. ACM}\ }\textbf {\bibinfo {volume} {64}},\ \bibinfo {pages}
  {106--114} (\bibinfo {year} {2020})}\BibitemShut {NoStop}%
\bibitem [{\citenamefont {Roffe}\ \emph {et~al.}(2020)\citenamefont {Roffe},
  \citenamefont {White}, \citenamefont {Burton},\ and\ \citenamefont
  {Campbell}}]{Roffe20}%
  \BibitemOpen
  \bibfield  {author} {\bibinfo {author} {\bibfnamefont {Joschka}\ \bibnamefont
  {Roffe}}, \bibinfo {author} {\bibfnamefont {David~R.}\ \bibnamefont {White}},
  \bibinfo {author} {\bibfnamefont {Simon}\ \bibnamefont {Burton}}, \ and\
  \bibinfo {author} {\bibfnamefont {Earl}\ \bibnamefont {Campbell}},\
  }\bibfield  {title} {\enquote {\bibinfo {title} {Decoding across the quantum
  low-density parity-check code landscape},}\ }\href {\doibase
  10.1103/PhysRevResearch.2.043423} {\bibfield  {journal} {\bibinfo  {journal}
  {Phys. Rev. Research}\ }\textbf {\bibinfo {volume} {2}},\ \bibinfo {pages}
  {043423} (\bibinfo {year} {2020})}\BibitemShut {NoStop}%
\bibitem [{\citenamefont {Grospellier}\ \emph {et~al.}(2021)\citenamefont
  {Grospellier}, \citenamefont {Grou{\`{e}}s}, \citenamefont {Krishna},\ and\
  \citenamefont {Leverrier}}]{Grospellier2021}%
  \BibitemOpen
  \bibfield  {author} {\bibinfo {author} {\bibfnamefont {Antoine}\ \bibnamefont
  {Grospellier}}, \bibinfo {author} {\bibfnamefont {Lucien}\ \bibnamefont
  {Grou{\`{e}}s}}, \bibinfo {author} {\bibfnamefont {Anirudh}\ \bibnamefont
  {Krishna}}, \ and\ \bibinfo {author} {\bibfnamefont {Anthony}\ \bibnamefont
  {Leverrier}},\ }\bibfield  {title} {\enquote {\bibinfo {title} {Combining
  hard and soft decoders for hypergraph product codes},}\ }\href {\doibase
  10.22331/q-2021-04-15-432} {\bibfield  {journal} {\bibinfo  {journal}
  {{Quantum}}\ }\textbf {\bibinfo {volume} {5}},\ \bibinfo {pages} {432}
  (\bibinfo {year} {2021})}\BibitemShut {NoStop}%
\bibitem [{\citenamefont {Hastings}\ \emph {et~al.}(2020)\citenamefont
  {Hastings}, \citenamefont {Haah},\ and\ \citenamefont
  {O'Donnell}}]{Hastings20}%
  \BibitemOpen
  \bibfield  {author} {\bibinfo {author} {\bibfnamefont {Matthew~B.}\
  \bibnamefont {Hastings}}, \bibinfo {author} {\bibfnamefont {Jeongwan}\
  \bibnamefont {Haah}}, \ and\ \bibinfo {author} {\bibfnamefont {Ryan}\
  \bibnamefont {O'Donnell}},\ }\href@noop {} {\enquote {\bibinfo {title} {Fiber
  bundle codes: Breaking the $n^{1/2} \operatorname{polylog}(n)$ barrier for
  quantum ldpc codes},}\ } (\bibinfo {year} {2020}),\ \Eprint
  {http://arxiv.org/abs/2009.03921} {arXiv:2009.03921 [quant-ph]} \BibitemShut
  {NoStop}%
\bibitem [{\citenamefont {Panteleev}\ and\ \citenamefont
  {Kalachev}(2022)}]{Panteleev20}%
  \BibitemOpen
  \bibfield  {author} {\bibinfo {author} {\bibfnamefont {Pavel}\ \bibnamefont
  {Panteleev}}\ and\ \bibinfo {author} {\bibfnamefont {Gleb}\ \bibnamefont
  {Kalachev}},\ }\bibfield  {title} {\enquote {\bibinfo {title} {Quantum ldpc
  codes with almost linear minimum distance},}\ }\href {\doibase
  10.1109/TIT.2021.3119384} {\bibfield  {journal} {\bibinfo  {journal} {IEEE
  Transactions on Information Theory}\ }\textbf {\bibinfo {volume} {68}},\
  \bibinfo {pages} {213--229} (\bibinfo {year} {2022})}\BibitemShut {NoStop}%
\bibitem [{\citenamefont {Breuckmann}\ and\ \citenamefont
  {Eberhardt}(2021{\natexlab{b}})}]{Breuckmann2021a}%
  \BibitemOpen
  \bibfield  {author} {\bibinfo {author} {\bibfnamefont {Nikolas~P.}\
  \bibnamefont {Breuckmann}}\ and\ \bibinfo {author} {\bibfnamefont {Jens~N.}\
  \bibnamefont {Eberhardt}},\ }\bibfield  {title} {\enquote {\bibinfo {title}
  {Balanced product quantum codes},}\ }\href {\doibase
  10.1109/TIT.2021.3097347} {\bibfield  {journal} {\bibinfo  {journal} {IEEE
  Transactions on Information Theory}\ }\textbf {\bibinfo {volume} {67}},\
  \bibinfo {pages} {6653--6674} (\bibinfo {year}
  {2021}{\natexlab{b}})}\BibitemShut {NoStop}%
\bibitem [{\citenamefont {Knill}(2003)}]{Knill2003}%
  \BibitemOpen
  \bibfield  {author} {\bibinfo {author} {\bibfnamefont {E.}~\bibnamefont
  {Knill}},\ }\href@noop {} {\enquote {\bibinfo {title} {Scalable quantum
  computation in the presence of large detected-error rates},}\ } (\bibinfo
  {year} {2003}),\ \Eprint {http://arxiv.org/abs/arXiv:quant-ph/0312190}
  {arXiv:quant-ph/0312190} \BibitemShut {NoStop}%
\bibitem [{\citenamefont {Zheng}\ \emph {et~al.}(2020)\citenamefont {Zheng},
  \citenamefont {Lai}, \citenamefont {Brun},\ and\ \citenamefont
  {Kwek}}]{Zheng_2020}%
  \BibitemOpen
  \bibfield  {author} {\bibinfo {author} {\bibfnamefont {Yi-Cong}\ \bibnamefont
  {Zheng}}, \bibinfo {author} {\bibfnamefont {Ching-Yi}\ \bibnamefont {Lai}},
  \bibinfo {author} {\bibfnamefont {Todd~A}\ \bibnamefont {Brun}}, \ and\
  \bibinfo {author} {\bibfnamefont {Leong-Chuan}\ \bibnamefont {Kwek}},\
  }\bibfield  {title} {\enquote {\bibinfo {title} {Constant depth
  fault-tolerant clifford circuits for multi-qubit large block codes},}\ }\href
  {\doibase 10.1088/2058-9565/aba34d} {\bibfield  {journal} {\bibinfo
  {journal} {Quantum Science and Technology}\ }\textbf {\bibinfo {volume}
  {5}},\ \bibinfo {pages} {045007} (\bibinfo {year} {2020})}\BibitemShut
  {NoStop}%
\bibitem [{\citenamefont {Horsman}\ \emph {et~al.}(2012)\citenamefont
  {Horsman}, \citenamefont {Fowler}, \citenamefont {Devitt},\ and\
  \citenamefont {Meter}}]{Horsman12}%
  \BibitemOpen
  \bibfield  {author} {\bibinfo {author} {\bibfnamefont {Clare}\ \bibnamefont
  {Horsman}}, \bibinfo {author} {\bibfnamefont {Austin~G}\ \bibnamefont
  {Fowler}}, \bibinfo {author} {\bibfnamefont {Simon}\ \bibnamefont {Devitt}},
  \ and\ \bibinfo {author} {\bibfnamefont {Rodney~Van}\ \bibnamefont {Meter}},\
  }\bibfield  {title} {\enquote {\bibinfo {title} {Surface code quantum
  computing by lattice surgery},}\ }\href {\doibase
  10.1088/1367-2630/14/12/123011} {\bibfield  {journal} {\bibinfo  {journal}
  {New Journal of Physics}\ }\textbf {\bibinfo {volume} {14}},\ \bibinfo
  {pages} {123011} (\bibinfo {year} {2012})}\BibitemShut {NoStop}%
\bibitem [{\citenamefont {Brown}\ \emph {et~al.}(2017)\citenamefont {Brown},
  \citenamefont {Laubscher}, \citenamefont {Kesselring},\ and\ \citenamefont
  {Wootton}}]{Brown17}%
  \BibitemOpen
  \bibfield  {author} {\bibinfo {author} {\bibfnamefont {Benjamin~J.}\
  \bibnamefont {Brown}}, \bibinfo {author} {\bibfnamefont {Katharina}\
  \bibnamefont {Laubscher}}, \bibinfo {author} {\bibfnamefont {Markus~S.}\
  \bibnamefont {Kesselring}}, \ and\ \bibinfo {author} {\bibfnamefont
  {James~R.}\ \bibnamefont {Wootton}},\ }\bibfield  {title} {\enquote {\bibinfo
  {title} {Poking holes and cutting corners to achieve clifford gates with the
  surface code},}\ }\href {\doibase 10.1103/PhysRevX.7.021029} {\bibfield
  {journal} {\bibinfo  {journal} {Phys. Rev. X}\ }\textbf {\bibinfo {volume}
  {7}},\ \bibinfo {pages} {021029} (\bibinfo {year} {2017})}\BibitemShut
  {NoStop}%
\bibitem [{\citenamefont {Vuillot}\ \emph {et~al.}(2019)\citenamefont
  {Vuillot}, \citenamefont {Lao}, \citenamefont {Criger}, \citenamefont
  {Almud{\'{e}}ver}, \citenamefont {Bertels},\ and\ \citenamefont
  {Terhal}}]{Vuillot19}%
  \BibitemOpen
  \bibfield  {author} {\bibinfo {author} {\bibfnamefont {Christophe}\
  \bibnamefont {Vuillot}}, \bibinfo {author} {\bibfnamefont {Lingling}\
  \bibnamefont {Lao}}, \bibinfo {author} {\bibfnamefont {Ben}\ \bibnamefont
  {Criger}}, \bibinfo {author} {\bibfnamefont {Carmen~Garc{\'{\i}}a}\
  \bibnamefont {Almud{\'{e}}ver}}, \bibinfo {author} {\bibfnamefont {Koen}\
  \bibnamefont {Bertels}}, \ and\ \bibinfo {author} {\bibfnamefont {Barbara~M}\
  \bibnamefont {Terhal}},\ }\bibfield  {title} {\enquote {\bibinfo {title}
  {Code deformation and lattice surgery are gauge fixing},}\ }\href {\doibase
  10.1088/1367-2630/ab0199} {\bibfield  {journal} {\bibinfo  {journal} {New
  Journal of Physics}\ }\textbf {\bibinfo {volume} {21}},\ \bibinfo {pages}
  {033028} (\bibinfo {year} {2019})}\BibitemShut {NoStop}%
\bibitem [{\citenamefont {Breuckmann}\ \emph {et~al.}(2017)\citenamefont
  {Breuckmann}, \citenamefont {Vuillot}, \citenamefont {Campbell},
  \citenamefont {Krishna},\ and\ \citenamefont {Terhal}}]{Breuckmann17}%
  \BibitemOpen
  \bibfield  {author} {\bibinfo {author} {\bibfnamefont {Nikolas~P}\
  \bibnamefont {Breuckmann}}, \bibinfo {author} {\bibfnamefont {Christophe}\
  \bibnamefont {Vuillot}}, \bibinfo {author} {\bibfnamefont {Earl}\
  \bibnamefont {Campbell}}, \bibinfo {author} {\bibfnamefont {Anirudh}\
  \bibnamefont {Krishna}}, \ and\ \bibinfo {author} {\bibfnamefont {Barbara~M}\
  \bibnamefont {Terhal}},\ }\bibfield  {title} {\enquote {\bibinfo {title}
  {Hyperbolic and semi-hyperbolic surface codes for quantum storage},}\ }\href
  {\doibase 10.1088/2058-9565/aa7d3b} {\bibfield  {journal} {\bibinfo
  {journal} {Quantum Science and Technology}\ }\textbf {\bibinfo {volume}
  {2}},\ \bibinfo {pages} {035007} (\bibinfo {year} {2017})}\BibitemShut
  {NoStop}%
\bibitem [{\citenamefont {Lavasani}\ and\ \citenamefont
  {Barkeshli}(2018)}]{Lavasani18}%
  \BibitemOpen
  \bibfield  {author} {\bibinfo {author} {\bibfnamefont {Ali}\ \bibnamefont
  {Lavasani}}\ and\ \bibinfo {author} {\bibfnamefont {Maissam}\ \bibnamefont
  {Barkeshli}},\ }\bibfield  {title} {\enquote {\bibinfo {title} {Low overhead
  clifford gates from joint measurements in surface, color, and hyperbolic
  codes},}\ }\href {\doibase 10.1103/PhysRevA.98.052319} {\bibfield  {journal}
  {\bibinfo  {journal} {Phys. Rev. A}\ }\textbf {\bibinfo {volume} {98}},\
  \bibinfo {pages} {052319} (\bibinfo {year} {2018})}\BibitemShut {NoStop}%
\bibitem [{\citenamefont {Krishna}\ and\ \citenamefont
  {Poulin}(2020)}]{Krishna20}%
  \BibitemOpen
  \bibfield  {author} {\bibinfo {author} {\bibfnamefont {Anirudh}\ \bibnamefont
  {Krishna}}\ and\ \bibinfo {author} {\bibfnamefont {David}\ \bibnamefont
  {Poulin}},\ }\bibfield  {title} {\enquote {\bibinfo {title} {Topological
  wormholes: Nonlocal defects on the toric code},}\ }\href {\doibase
  10.1103/PhysRevResearch.2.023116} {\bibfield  {journal} {\bibinfo  {journal}
  {Phys. Rev. Research}\ }\textbf {\bibinfo {volume} {2}},\ \bibinfo {pages}
  {023116} (\bibinfo {year} {2020})}\BibitemShut {NoStop}%
\bibitem [{\citenamefont {Krishna}\ and\ \citenamefont
  {Poulin}(2021)}]{Krishna21}%
  \BibitemOpen
  \bibfield  {author} {\bibinfo {author} {\bibfnamefont {Anirudh}\ \bibnamefont
  {Krishna}}\ and\ \bibinfo {author} {\bibfnamefont {David}\ \bibnamefont
  {Poulin}},\ }\bibfield  {title} {\enquote {\bibinfo {title} {Fault-tolerant
  gates on hypergraph product codes},}\ }\href {\doibase
  10.1103/PhysRevX.11.011023} {\bibfield  {journal} {\bibinfo  {journal} {Phys.
  Rev. X}\ }\textbf {\bibinfo {volume} {11}},\ \bibinfo {pages} {011023}
  (\bibinfo {year} {2021})}\BibitemShut {NoStop}%
\bibitem [{\citenamefont {Hastings}(2016)}]{Hastings16}%
  \BibitemOpen
  \bibfield  {author} {\bibinfo {author} {\bibfnamefont {M.~B.}\ \bibnamefont
  {Hastings}},\ }\href@noop {} {\enquote {\bibinfo {title} {Weight reduction
  for quantum codes},}\ } (\bibinfo {year} {2016}),\ \Eprint
  {http://arxiv.org/abs/1611.03790} {arXiv:1611.03790 [quant-ph]} \BibitemShut
  {NoStop}%
\bibitem [{\citenamefont {Hastings}(2021)}]{Hastings21}%
  \BibitemOpen
  \bibfield  {author} {\bibinfo {author} {\bibfnamefont {M.~B.}\ \bibnamefont
  {Hastings}},\ }\href@noop {} {\enquote {\bibinfo {title} {On quantum weight
  reduction},}\ } (\bibinfo {year} {2021}),\ \Eprint
  {http://arxiv.org/abs/2102.10030} {arXiv:2102.10030 [quant-ph]} \BibitemShut
  {NoStop}%
\bibitem [{\citenamefont {Bravyi}\ and\ \citenamefont
  {Kitaev}(2005)}]{Bravyi05}%
  \BibitemOpen
  \bibfield  {author} {\bibinfo {author} {\bibfnamefont {Sergey}\ \bibnamefont
  {Bravyi}}\ and\ \bibinfo {author} {\bibfnamefont {Alexei}\ \bibnamefont
  {Kitaev}},\ }\bibfield  {title} {\enquote {\bibinfo {title} {Universal
  quantum computation with ideal clifford gates and noisy ancillas},}\ }\href
  {\doibase 10.1103/PhysRevA.71.022316} {\bibfield  {journal} {\bibinfo
  {journal} {Phys. Rev. A}\ }\textbf {\bibinfo {volume} {71}},\ \bibinfo
  {pages} {022316} (\bibinfo {year} {2005})}\BibitemShut {NoStop}%
\bibitem [{\citenamefont {Babbush}\ \emph {et~al.}(2018)\citenamefont
  {Babbush}, \citenamefont {Gidney}, \citenamefont {Berry}, \citenamefont
  {Wiebe}, \citenamefont {McClean}, \citenamefont {Paler}, \citenamefont
  {Fowler},\ and\ \citenamefont {Neven}}]{Babbush2018}%
  \BibitemOpen
  \bibfield  {author} {\bibinfo {author} {\bibfnamefont {Ryan}\ \bibnamefont
  {Babbush}}, \bibinfo {author} {\bibfnamefont {Craig}\ \bibnamefont {Gidney}},
  \bibinfo {author} {\bibfnamefont {Dominic~W.}\ \bibnamefont {Berry}},
  \bibinfo {author} {\bibfnamefont {Nathan}\ \bibnamefont {Wiebe}}, \bibinfo
  {author} {\bibfnamefont {Jarrod}\ \bibnamefont {McClean}}, \bibinfo {author}
  {\bibfnamefont {Alexandru}\ \bibnamefont {Paler}}, \bibinfo {author}
  {\bibfnamefont {Austin}\ \bibnamefont {Fowler}}, \ and\ \bibinfo {author}
  {\bibfnamefont {Hartmut}\ \bibnamefont {Neven}},\ }\bibfield  {title}
  {\enquote {\bibinfo {title} {Encoding electronic spectra in quantum circuits
  with linear t complexity},}\ }\href {\doibase 10.1103/PhysRevX.8.041015}
  {\bibfield  {journal} {\bibinfo  {journal} {Phys. Rev. X}\ }\textbf {\bibinfo
  {volume} {8}},\ \bibinfo {pages} {041015} (\bibinfo {year}
  {2018})}\BibitemShut {NoStop}%
\bibitem [{\citenamefont {Gouzien}\ and\ \citenamefont
  {Sangouard}(2021)}]{Gouzien21}%
  \BibitemOpen
  \bibfield  {author} {\bibinfo {author} {\bibfnamefont {\'Elie}\ \bibnamefont
  {Gouzien}}\ and\ \bibinfo {author} {\bibfnamefont {Nicolas}\ \bibnamefont
  {Sangouard}},\ }\bibfield  {title} {\enquote {\bibinfo {title} {Factoring
  2048-bit rsa integers in 177 days with 13 436 qubits and a multimode
  memory},}\ }\href {\doibase 10.1103/PhysRevLett.127.140503} {\bibfield
  {journal} {\bibinfo  {journal} {Phys. Rev. Lett.}\ }\textbf {\bibinfo
  {volume} {127}},\ \bibinfo {pages} {140503} (\bibinfo {year}
  {2021})}\BibitemShut {NoStop}%
\bibitem [{\citenamefont {Ramette}\ \emph {et~al.}(2021)\citenamefont
  {Ramette}, \citenamefont {Sinclair}, \citenamefont {Vendeiro}, \citenamefont
  {Rudelis}, \citenamefont {Cetina},\ and\ \citenamefont
  {Vuleti{\'c}}}]{ramette2021}%
  \BibitemOpen
  \bibfield  {author} {\bibinfo {author} {\bibfnamefont {Joshua}\ \bibnamefont
  {Ramette}}, \bibinfo {author} {\bibfnamefont {Josiah}\ \bibnamefont
  {Sinclair}}, \bibinfo {author} {\bibfnamefont {Zachary}\ \bibnamefont
  {Vendeiro}}, \bibinfo {author} {\bibfnamefont {Alyssa}\ \bibnamefont
  {Rudelis}}, \bibinfo {author} {\bibfnamefont {Marko}\ \bibnamefont {Cetina}},
  \ and\ \bibinfo {author} {\bibfnamefont {Vladan}\ \bibnamefont
  {Vuleti{\'c}}},\ }\href@noop {} {\enquote {\bibinfo {title} {Any-to-any
  connected cavity-mediated architecture for quantum computing with trapped
  ions or rydberg arrays},}\ } (\bibinfo {year} {2021}),\ \Eprint
  {http://arxiv.org/abs/2109.11551} {arXiv:2109.11551 [quant-ph]} \BibitemShut
  {NoStop}%
\bibitem [{\citenamefont {Higgott}\ and\ \citenamefont
  {Breuckmann}(2021)}]{Higgott2021}%
  \BibitemOpen
  \bibfield  {author} {\bibinfo {author} {\bibfnamefont {Oscar}\ \bibnamefont
  {Higgott}}\ and\ \bibinfo {author} {\bibfnamefont {Nikolas~P.}\ \bibnamefont
  {Breuckmann}},\ }\bibfield  {title} {\enquote {\bibinfo {title} {Subsystem
  codes with high thresholds by gauge fixing and reduced qubit overhead},}\
  }\href {\doibase 10.1103/PhysRevX.11.031039} {\bibfield  {journal} {\bibinfo
  {journal} {Phys. Rev. X}\ }\textbf {\bibinfo {volume} {11}},\ \bibinfo
  {pages} {031039} (\bibinfo {year} {2021})}\BibitemShut {NoStop}%
\bibitem [{\citenamefont {Chamberland}\ and\ \citenamefont
  {Campbell}(2021)}]{chamberland2021universal}%
  \BibitemOpen
  \bibfield  {author} {\bibinfo {author} {\bibfnamefont {Christopher}\
  \bibnamefont {Chamberland}}\ and\ \bibinfo {author} {\bibfnamefont {Earl~T.}\
  \bibnamefont {Campbell}},\ }\href@noop {} {\enquote {\bibinfo {title}
  {Universal quantum computing with twist-free and temporally encoded lattice
  surgery},}\ } (\bibinfo {year} {2021}),\ \Eprint
  {http://arxiv.org/abs/2109.02746} {arXiv:2109.02746 [quant-ph]} \BibitemShut
  {NoStop}%
\bibitem [{\citenamefont {Wecker}\ \emph {et~al.}(2014)\citenamefont {Wecker},
  \citenamefont {Bauer}, \citenamefont {Clark}, \citenamefont {Hastings},\ and\
  \citenamefont {Troyer}}]{Wecker2014}%
  \BibitemOpen
  \bibfield  {author} {\bibinfo {author} {\bibfnamefont {Dave}\ \bibnamefont
  {Wecker}}, \bibinfo {author} {\bibfnamefont {Bela}\ \bibnamefont {Bauer}},
  \bibinfo {author} {\bibfnamefont {Bryan~K.}\ \bibnamefont {Clark}}, \bibinfo
  {author} {\bibfnamefont {Matthew~B.}\ \bibnamefont {Hastings}}, \ and\
  \bibinfo {author} {\bibfnamefont {Matthias}\ \bibnamefont {Troyer}},\
  }\bibfield  {title} {\enquote {\bibinfo {title} {Gate-count estimates for
  performing quantum chemistry on small quantum computers},}\ }\href {\doibase
  10.1103/PhysRevA.90.022305} {\bibfield  {journal} {\bibinfo  {journal} {Phys.
  Rev. A}\ }\textbf {\bibinfo {volume} {90}},\ \bibinfo {pages} {022305}
  (\bibinfo {year} {2014})}\BibitemShut {NoStop}%
\bibitem [{\citenamefont {Low}\ and\ \citenamefont {Chuang}(2019)}]{Low2019}%
  \BibitemOpen
  \bibfield  {author} {\bibinfo {author} {\bibfnamefont {Guang~Hao}\
  \bibnamefont {Low}}\ and\ \bibinfo {author} {\bibfnamefont {Isaac~L.}\
  \bibnamefont {Chuang}},\ }\bibfield  {title} {\enquote {\bibinfo {title}
  {Hamiltonian {S}imulation by {Q}ubitization},}\ }\href {\doibase
  10.22331/q-2019-07-12-163} {\bibfield  {journal} {\bibinfo  {journal}
  {{Quantum}}\ }\textbf {\bibinfo {volume} {3}},\ \bibinfo {pages} {163}
  (\bibinfo {year} {2019})}\BibitemShut {NoStop}%
\bibitem [{\citenamefont {Tillich}\ and\ \citenamefont
  {Z{\'e}mor}(2014)}]{Tillich14}%
  \BibitemOpen
  \bibfield  {author} {\bibinfo {author} {\bibfnamefont {Jean-Pierre}\
  \bibnamefont {Tillich}}\ and\ \bibinfo {author} {\bibfnamefont {Gilles}\
  \bibnamefont {Z{\'e}mor}},\ }\bibfield  {title} {\enquote {\bibinfo {title}
  {Quantum ldpc codes with positive rate and minimum distance proportional to
  the square root of the blocklength},}\ }\href {\doibase
  10.1109/TIT.2013.2292061} {\bibfield  {journal} {\bibinfo  {journal} {IEEE
  Transactions on Information Theory}\ }\textbf {\bibinfo {volume} {60}},\
  \bibinfo {pages} {1193--1202} (\bibinfo {year} {2014})}\BibitemShut {NoStop}%
\bibitem [{\citenamefont {Poulin}(2005)}]{Poulin2005}%
  \BibitemOpen
  \bibfield  {author} {\bibinfo {author} {\bibfnamefont {David}\ \bibnamefont
  {Poulin}},\ }\bibfield  {title} {\enquote {\bibinfo {title} {Stabilizer
  formalism for operator quantum error correction},}\ }\href {\doibase
  10.1103/PhysRevLett.95.230504} {\bibfield  {journal} {\bibinfo  {journal}
  {Phys. Rev. Lett.}\ }\textbf {\bibinfo {volume} {95}},\ \bibinfo {pages}
  {230504} (\bibinfo {year} {2005})}\BibitemShut {NoStop}%
\bibitem [{\citenamefont {Bravyi}\ and\ \citenamefont
  {Terhal}(2009)}]{Bravyi_2009}%
  \BibitemOpen
  \bibfield  {author} {\bibinfo {author} {\bibfnamefont {Sergey}\ \bibnamefont
  {Bravyi}}\ and\ \bibinfo {author} {\bibfnamefont {Barbara}\ \bibnamefont
  {Terhal}},\ }\bibfield  {title} {\enquote {\bibinfo {title} {A no-go theorem
  for a two-dimensional self-correcting quantum memory based on stabilizer
  codes},}\ }\href {\doibase 10.1088/1367-2630/11/4/043029} {\bibfield
  {journal} {\bibinfo  {journal} {New Journal of Physics}\ }\textbf {\bibinfo
  {volume} {11}},\ \bibinfo {pages} {043029} (\bibinfo {year}
  {2009})}\BibitemShut {NoStop}%
\bibitem [{\citenamefont {Bravyi}\ \emph {et~al.}(2010)\citenamefont {Bravyi},
  \citenamefont {Poulin},\ and\ \citenamefont {Terhal}}]{Bravyi2010}%
  \BibitemOpen
  \bibfield  {author} {\bibinfo {author} {\bibfnamefont {Sergey}\ \bibnamefont
  {Bravyi}}, \bibinfo {author} {\bibfnamefont {David}\ \bibnamefont {Poulin}},
  \ and\ \bibinfo {author} {\bibfnamefont {Barbara}\ \bibnamefont {Terhal}},\
  }\bibfield  {title} {\enquote {\bibinfo {title} {Tradeoffs for reliable
  quantum information storage in 2d systems},}\ }\href {\doibase
  10.1103/PhysRevLett.104.050503} {\bibfield  {journal} {\bibinfo  {journal}
  {Phys. Rev. Lett.}\ }\textbf {\bibinfo {volume} {104}},\ \bibinfo {pages}
  {050503} (\bibinfo {year} {2010})}\BibitemShut {NoStop}%
\bibitem [{\citenamefont {Leverrier}\ \emph {et~al.}(2015)\citenamefont
  {Leverrier}, \citenamefont {Tillich},\ and\ \citenamefont
  {Z{\'e}mor}}]{Leverrier15}%
  \BibitemOpen
  \bibfield  {author} {\bibinfo {author} {\bibfnamefont {Anthony}\ \bibnamefont
  {Leverrier}}, \bibinfo {author} {\bibfnamefont {Jean-Pierre}\ \bibnamefont
  {Tillich}}, \ and\ \bibinfo {author} {\bibfnamefont {Gilles}\ \bibnamefont
  {Z{\'e}mor}},\ }\bibfield  {title} {\enquote {\bibinfo {title} {Quantum
  expander codes},}\ }in\ \href {\doibase 10.1109/FOCS.2015.55} {\emph
  {\bibinfo {booktitle} {2015 IEEE 56th Annual Symposium on Foundations of
  Computer Science}}}\ (\bibinfo {year} {2015})\ pp.\ \bibinfo {pages}
  {810--824}\BibitemShut {NoStop}%
\bibitem [{\citenamefont {Hastings}(2014)}]{Hastings14}%
  \BibitemOpen
  \bibfield  {author} {\bibinfo {author} {\bibfnamefont {Matthew~B.}\
  \bibnamefont {Hastings}},\ }\bibfield  {title} {\enquote {\bibinfo {title}
  {Decoding in hyperbolic spaces: Quantum ldpc codes with linear rate and
  efficient error correction},}\ }\href@noop {} {\bibfield  {journal} {\bibinfo
   {journal} {Quantum Info. Comput.}\ }\textbf {\bibinfo {volume} {14}},\
  \bibinfo {pages} {1187--1202} (\bibinfo {year} {2014})}\BibitemShut {NoStop}%
\bibitem [{\citenamefont {Breuckmann}\ and\ \citenamefont
  {Terhal}(2016)}]{Breuckmann16}%
  \BibitemOpen
  \bibfield  {author} {\bibinfo {author} {\bibfnamefont {Nikolas~P.}\
  \bibnamefont {Breuckmann}}\ and\ \bibinfo {author} {\bibfnamefont
  {Barbara~M.}\ \bibnamefont {Terhal}},\ }\bibfield  {title} {\enquote
  {\bibinfo {title} {Constructions and noise threshold of hyperbolic surface
  codes},}\ }\href {\doibase 10.1109/TIT.2016.2555700} {\bibfield  {journal}
  {\bibinfo  {journal} {IEEE Transactions on Information Theory}\ }\textbf
  {\bibinfo {volume} {62}},\ \bibinfo {pages} {3731--3744} (\bibinfo {year}
  {2016})}\BibitemShut {NoStop}%
\bibitem [{\citenamefont {Breuckmann}\ and\ \citenamefont
  {Londe}(2022)}]{Breuckmann20}%
  \BibitemOpen
  \bibfield  {author} {\bibinfo {author} {\bibfnamefont {Nikolas~P.}\
  \bibnamefont {Breuckmann}}\ and\ \bibinfo {author} {\bibfnamefont {Vivien}\
  \bibnamefont {Londe}},\ }\bibfield  {title} {\enquote {\bibinfo {title}
  {Single-shot decoding of linear rate ldpc quantum codes with high
  performance},}\ }\href {\doibase 10.1109/TIT.2021.3122352} {\bibfield
  {journal} {\bibinfo  {journal} {IEEE Transactions on Information Theory}\
  }\textbf {\bibinfo {volume} {68}},\ \bibinfo {pages} {272--286} (\bibinfo
  {year} {2022})}\BibitemShut {NoStop}%
\bibitem [{\citenamefont {Baspin}\ and\ \citenamefont
  {Krishna}(2021)}]{Baspin21}%
  \BibitemOpen
  \bibfield  {author} {\bibinfo {author} {\bibfnamefont {Nou{\'e}dyn}\
  \bibnamefont {Baspin}}\ and\ \bibinfo {author} {\bibfnamefont {Anirudh}\
  \bibnamefont {Krishna}},\ }\href@noop {} {\enquote {\bibinfo {title}
  {Connectivity constrains quantum codes},}\ } (\bibinfo {year} {2021}),\
  \Eprint {http://arxiv.org/abs/2106.00765} {arXiv:2106.00765 [quant-ph]}
  \BibitemShut {NoStop}%
\bibitem [{\citenamefont {Bomb\'{\i}n}(2015)}]{Bombin2015}%
  \BibitemOpen
  \bibfield  {author} {\bibinfo {author} {\bibfnamefont {H\'ector}\
  \bibnamefont {Bomb\'{\i}n}},\ }\bibfield  {title} {\enquote {\bibinfo {title}
  {Single-shot fault-tolerant quantum error correction},}\ }\href {\doibase
  10.1103/PhysRevX.5.031043} {\bibfield  {journal} {\bibinfo  {journal} {Phys.
  Rev. X}\ }\textbf {\bibinfo {volume} {5}},\ \bibinfo {pages} {031043}
  (\bibinfo {year} {2015})}\BibitemShut {NoStop}%
\bibitem [{\citenamefont {Delfosse}\ and\ \citenamefont
  {Reichardt}(2020)}]{delfosse2020short}%
  \BibitemOpen
  \bibfield  {author} {\bibinfo {author} {\bibfnamefont {Nicolas}\ \bibnamefont
  {Delfosse}}\ and\ \bibinfo {author} {\bibfnamefont {Ben~W.}\ \bibnamefont
  {Reichardt}},\ }\href@noop {} {\enquote {\bibinfo {title} {Short shor-style
  syndrome sequences},}\ } (\bibinfo {year} {2020}),\ \Eprint
  {http://arxiv.org/abs/2008.05051} {arXiv:2008.05051 [quant-ph]} \BibitemShut
  {NoStop}%
\end{thebibliography}%

\end{document}